\newcommand{\CDL}{\mathsf{CDL}}
\newcommand{\PDL}{\mathsf{PDL}}
\newcommand{\GL}{\mathsf{GL}}
\newtheorem{theorem}{Theorem}
\newtheorem{proposition}{Proposition}
\newtheorem{lemma}{Lemma}
\newtheorem{corollary}{Corollary}
\newtheorem{example}{Example}
\begin{document}

\markboth{}{}
\title{Concurrent Dynamic Algebra}
\author{Hitoshi Furusawa\\Kagoshima University,\\ Japan \and  Georg Struth\\University of Sheffield,\\ United Kingdom}

\maketitle

\begin{abstract}
  We reconstruct Peleg's concurrent dynamic logic in the context of
  modal Kleene algebras. We explore the algebraic structure of its
  multirelational semantics and develop an abstract axiomatisation of
  concurrent dynamic algebras from that basis. In this axiomatisation,
  sequential composition is not associative. It interacts with
  concurrent composition through a weak distributivity law. The modal
  operators of concurrent dynamic algebra are obtained from abstract
  axioms for domain and antidomain operators; the Kleene star is
  modelled as a least fixpoint. Algebraic variants of Peleg's axioms
  are shown to be valid in these algebras and their soundness is
  proved relative to the multirelational model. Additional results
  include iteration principles for the Kleene star and a refutation of
  variants of Segerberg's axiom in the multirelational setting. The
  most important results have been verified formally with
  Isabelle/HOL.
\end{abstract}



\section{Introduction}\label{S:introduction}

Concurrent dynamic logic ($\CDL$) has been proposed almost three
decades ago by Peleg~\cite{Peleg87} as an extension of propositional
dynamic logic ($\PDL$)~\cite{HarelKozenTiuryn} to study concurrency in
``its purest form as the dual notion of nondeterminism''. In this
setting, a computational process is regarded as a tree with two dual
kinds of branchings. According to the first one, the process may
choose a transition along one of the possible branches. This is known
as angelic, internal or existential choice. According to the second
one, it progresses along all possible branches in parallel, which is
known as demonic, external or universal choice. This lends itself to a
number of interpretations.

One of them associates computations with games processes play against
a scheduler or environment as their opponent. A process wins if it can
successfully resolve all internal choices and respond to all external
choices enforced by the opponent. Another one considers machines which
accept inputs by nondeterministically choosing one transition along
exixtential branches and executing all transitions in parallel amongst
universal ones. In yet another one, universal choices correspond to
agents cooperating towards a collective goal while existential choices
are made in competition by individual agents. Finally, in
shared-variable concurrency, interferences caused by different threads
accessing a global variable are observed as nondeterministic
assignments by particular threads; hence as external choices imposed
by the other threads.

Historically, in fact, $\CDL$ has been influenced by work on
alternating state machines~\cite{ChandraKozenStockmeyer} and Parikh's
game logic ($\GL$)~\cite{Parikh83,Parikh85}, which is itself based on
$\PDL$. Other aspcets of concurrency such as communication or
synchronisation, which are at the heart of formalisms such as Petri
nets or process algebras, are ignored in its basic axiomatisation.

Standard $\PDL$ has a relational semantics. This captures the
input/output dependencies of sequential programs. Internal choice is
modelled as union, sequential composition as relational composition.
External choice, however, cannot be represented by this semantics. It
requires relating an individual input to a set of outputs, that is,
relations of type $A\times 2^B$ instead of $A\times B$. These are known
as \emph{multirelations}.

In multirelational semantics, external choice still corresponds to
union, but sequential composition must be redefined. According to
Parikh's definition, a pair $(a,A)$ is in the sequential composition
of multirelation $R$ with multirelation $S$ if $R$ relates element $a$
with an intermediate set $B$ and every element of $B$ is related to
the set $A$ by $S$. According to Peleg's more general definition, it
suffices that $S$ relates each element $b\in B$ with a set $C_b$ as
long as the union of all the sets $C_b$ yields the set $A$. In
addition, a notion of external choice or parallel composition can now
be defined. If a pair $(a,A)$ is in a multirelation $R$ and a pair
$(a,B)$ in a multirelation $S$, then the parallel composition of $R$
and $S$ contains the pair $(a,A\cup B)$. Starting from input $a$, the
multirelations $R$ and $S$ therefore produce the collective output
$A\cup B$ when executed in parallel. In contrast to Peleg, Parikh also
imposes additional conditions on multirelations. In particular, they
must be up-closed: $(a,A)\in R$ and $A\subseteq B$ imply $(a,B)\in R$.

In $\CDL$, modal box and diamond operators are associated with the
multirelational semantics as they are associated with a relational
semantics in $\PDL$. An expression $[\alpha]\varphi$ means that after
every terminating execution of program $\alpha$, property $\varphi$
holds, whereas $\langle\alpha\rangle\varphi$ means that there is a
terminating execution of $\alpha$ after which $\varphi$ holds. In
$\CDL$, as in $\PDL$, boxes and diamonds are related by De Morgan
duality: $[\alpha]\varphi$ holds if and only if
$\neg\langle\alpha\rangle\neg\varphi$ holds. The axioms of $\CDL$
describe how the programming constructs of external choice, sequential
and parallel composition, and (sequential) iteration interact with the
modalities. $\CDL$ can as well be seen as a generalisation of
dual-free $\GL$.

Wijesekera and Nerode~\cite{NerodeWijesekera90,WijesekeraNerode05} as
well as Goldblatt~\cite{Goldblatt92} have generalised $\CDL$ to
situations where boxes and diamonds are no longer dual. $\GL$ has been
applied widely in game and social choice theory. A bridge between the
two formalisms has recently been built by van Benthem et
al.~\cite{BenthemGL08} to model simultaneous games as they arise in
algorithmic game theory. Peleg has added notions of synchronisation
and communication to $\CDL$~\cite{Peleg87a}.  Parikh's semantics of
up-closed multirelations and its duality to monotone predicate
transformers has reappeared in Back and von Wright's refinement
calculus~\cite{Back98} and the approach to multirelational semantics
of Rewitzky and
coworkers~\cite{Rewitzky03,RewitzkyB06,MartinCR07}. Up-closed
multirelations have also been studied more abstractly as a variant of
Kleene algebra~\cite{FurusawaNT09,NishizawaTF09}. Finally, the
transitions in alternating automata can be represented as
multirelations.

This suggests that $\CDL$ and its variants are relevant to games and
concurrency; they provide insights in games for concurrency and for
concurrency in games. Despite this, beyond the up-closed case, the
algebra of multirelations, as a generalisation of Kleene
algebras~\cite{Kozen94} and Tarski's relation algebra
(cf.~\cite{Maddux}), has never been studied in detail and
\emph{concurrent dynamic algebras} as algebraic companions of $\CDL$
remain to be established. This is in contrast to $\PDL$ where the
corresponding dynamic algebras~\cite{Pratt80} and test
algebras~\cite{Nemeti81,TrnkovaR87,Pratt91} are well studied.

An algebraic reconstruction of $\CDL$ complements the logical one in
important ways. Algebras of multirelations yield abstract yet
fine-grained views on the structure of simultaneous games; they might
also serve as intermediate semantics for shared-variable concurrency,
where interferences have been resolved. The study of dynamic and test
algebras shows how modal algebras arise from Kleene and relation
algebras in particularly simple and direct ways, and powerful tools
from universal algebra and category theory are available for their
analysis. Reasoning with modal algebras is essentially first-order
equational and therefore highly suitable for mechanisation and
automation. In the context of $\CDL$ this would make the design of
tools for analysing games or concurrent programs particularly simple
and flexible.

Our main contribution is an axiomatisation of concurrent dynamic
algebras. It is obtained from axiomatisations of the algebra of
multirelations which generalise modal Kleene
algebras~\cite{DesharnaisMS06,DesharnaisStruth11,DesharnaisStruth08}. In
more detail, our main results are as follows.
\begin{itemize}
\item We investigate the basic algebraic properties of the
  multirelational semantics of $\CDL$. It turns out that those of
  sequential composition are rather weak---the operation is, for
  instance, non-associative---while concurrent composition and union
  form a commutative idempotent semiring. We also find a new
  interaction law between sequential and concurrent composition. In
  addition we investigate special properties of subidentities, which
  serve as propositions and tests in $\CDL$, and of multirelational
  domain and antidomain (domain complement) operations.
\item We axiomatise variants of semirings (called \emph{proto-dioids}
  and \emph{proto-trioids}) which capture the basic algebra of
  multirelations without and with concurrent composition. We expand
  these structures by axioms for domain and antidomain operations,
  explore the algebraic laws governing these operations and
  characterise the subalgebras of domain elements, which serve as
  state or proposition spaces in this setting. We also prove soundness
  with respect to the underlying multirelational model.
\item We define algebraic diamond and box operators from the domain
  and antidomain ones as abstract preimage operators and their De
  Morgan duals and show that algebraic counterparts of the axioms of
  star-free $\CDL$ can be derived in this setting. The diamond axioms
  of $\CDL$ are obtained over a state space which forms a distributive
  lattice; the additional box axioms are derivable over a boolean
  algebra.
\item We investigate the Kleene star (or reflexive transitive closure
  operation) in the multirelational model and turn the resulting laws
  into axioms of \emph{proto-Kleene algebras with domain} and
  \emph{antidomain} as well as \emph{proto-bi-Kleene algebras with
    domain} and \emph{antidomain}. The latter two allow us to derive
  the full set of $\CDL$ axioms; they are therefore informally called
  \emph{concurrent dynamic algebras}. Once more we prove soundness
  with respect to the underlying multirelational model.
\item Finally, we study notions of finite
iteration for the Kleene star in the multirelational setting and
refute the validity of a variant of Segerberg's axiom of $\PDL$.
\end{itemize}
The complete list of concurrent dynamic algebra axioms can be found in
Appendix 1.

Our analysis of the multirelational model and our axiomatisations are
minimalistic in the sense that we have tried to elaborate the most
general algebraic conditions for deriving the $\CDL$ axioms. Many
interesting properties of that model have therefore been ignored. Due
to the absence of associativity of sequential composition and of left
distributivity of sequential composition over union, many proofs seem
rather fragile and depend on stronger algebraic properties of special
elements. Sequential composition is, for instance, associative if one
of the particpating multirelations is a domain or antidomain
element. This requires a significant generalisation of previous
approaches to Kleene algebras with domain and
antidomain~\cite{DesharnaisStruth11,DesharnaisStruth08}.

Moreover, proofs about multirelations are rather tedious due to the
complexity of sequential composition---specifying the family of sets
$C_b$ requires second-order quantification. We have therefore
formalised and verified the most important proofs with the Isabelle
proof assistant~\cite{NipkowPW02} (see Appendix~3 for a list). Thus
our work is also an exercise in formalised mathematics. The complete
code can be found
online\footnote{\url{http://www.dcs.shef.ac.uk/~georg/isa/cda}}. We
also present all manual proofs in order to make this article
selfcontained; the less interesting ones have been delegated to
Appendix~2.


\section{Multirelations}\label{S:multirelations}

A \emph{multirelation} $R$ over a set $X$ is a subset of $X\times
2^X$.  Inputs $a\in X$ are related by $R$ to outputs $A\subseteq X$;
each single input $a$ may be related to many subsets of
$X$. The set of all multirelations over $X$ is denoted $M(X)$.

An intuitive interpretation is the accessibility or reachability in a
(directed) graph: $(a,A)$ means that the set $A$ of vertices is
reachable from vertex $a$ in the graph. $(a,\emptyset)$ means that no
set of vertices is reachable from $a$, which makes $a$ a terminal
node. This is different from $(a,A)$ not being an element of a
multirelation for all $A\subseteq X$.

By definition, $(a,A)$ and $(a,\emptyset)$ can be elements of the same
multirelation. This can be interpreted as a system, program or player
making an ``interal'', existential or angelic choice to access either
$A$ or $\emptyset$. The elements of $A$ can therefore be seen as
``external'', universal or demonic choices made by an environment,
scheduler or adversary player.

This ability to capture internal and external choices makes
multirelations relevant to games and game logics~\cite{Parikh85},
demonic/angelic semantics of programs~\cite{Back98,MartinCR07},
alternating automata and concurrency~\cite{Peleg87}. Different
applications, however, require different definitions of operations on
multirelations.  The one used in the concurrent setting by
Peleg~\cite{Peleg87} and Goldblatt~\cite{Goldblatt92} is the most
general one and we follow it in this article.

\begin{example}
  Let $X=\{a,b,c,d\}$. Then
  \begin{equation*}
    R=\{(a,\emptyset),(a,\{d\}),(b,\{a\}),(b,\{b\}),(b,\{a,b\})(c,\{a\}),(c,\{d\})\}
  \end{equation*}
  is a multirelation over $X$. Vertex $a$ can alternatively reach no
  vertex at all---the empty set---or the singleton set $\{d\}$. Vertex
  $b$ can either reach set $\{a\}$, set $\{b\}$ or their union
  $\{a,b\}$. Vertex $c$ can either reach set $\{a\}$ or set $\{d\}$,
  but not their union. Vertex $d$ cannot even reach the empty set; no
  execution from it is enabled. This is in contrast to the situation
  $(a,\emptyset)$, where execution is enabled from $a$, but no state
  can be reached.\qed
\end{example}

Peleg defines the following operations of sequential and concurrent
composition of multirelations. Let $R$ and $S$ be multirelations over
$X$. The \emph{sequential composition} of $R$ and $S$ is the
multirelation
\begin{equation*}
  R\cdot S = \{(a,A) \mid \exists B.\ (a,B)\in R \wedge \exists f.\ (\forall b\in B.\ (b,f(b))\in S) \wedge A=\bigcup f(B)\}.
\end{equation*}
The \emph{unit of sequential composition} is the multirelation
\begin{equation*}
  1_\sigma = \{(a,\{a\})\mid a\in X\}.
\end{equation*}
The \emph{parallel composition} of $R$ and $S$ is the multirelation
\begin{equation*}
  R\|S = \{(a,A\cup B) \mid (a,A)\in R\wedge (a,B)\in S\}.
\end{equation*}
The \emph{unit of parallel composition} is the multirelation
\begin{equation*}
  1_\pi= \{(a,\emptyset)\mid a\in X\}.
\end{equation*}
The \emph{universal multirelation} over $X$ is
 \begin{equation*}
 U =\{(a,A)\mid a \in X \wedge A \subseteq X\}. 
 \end{equation*}

 In the definition of sequential composition, $f(B) = \{f(b)\mid b \in
 B\}$ is the image of $B$ under $f$. The intended meaning of $(a,A)
 \in R\cdot S$ is as follows: the set $A$ is reachable from vertex $a$
 by $R\cdot S$ if some intermediate set $B$ is reachable from $a$ by
 $R$, and from each vertex $b\in B$ a set $A_b$ is reachable
 (represented by $f(b)$) such that $A = \bigcup_{b \in B}A_b=\bigcup
 f(B)$.  Thus, from each vertex $b\in B$, the locally reachable set
 $f(b)$ contributes to the global reachability of $A$. We write
 $G_f(b)=(b,f(b))$ for the graph of $f$ at point $b$, and
 $G_f(B)=\{G_f(b)\mid b\in B\}$ for the graph of $f$ on the set
 $B$. We can then write
\begin{equation*}
  (a,A) \in R \cdot S \Leftrightarrow \exists B.\ (a,B) \in R \wedge \exists f.\ G_f(B)\subseteq S \wedge A = \bigcup f(B).
\end{equation*}
This definition of sequential composition is subtly different to the
one used by Parikh~\cite{Parikh85} in game logics, which appears also
in papers on multirelational semantics and monotone predicate
transformers. In addition, Parikh considers up-closed
multirelations. This leads not only to much simpler proofs, but also
to structural differences. Peleg has argued that up-closure is not
desirable for concurrency since it makes all programs---even
tests---automatically nondeterministic.

The sequential identity $1_\sigma$ is defined similarly to the identity
relation or identity function. It is given by (the graph of) the
embedding $\lambda x.\{x\}$ into singleton sets.

In a parallel composition, $(a,A) \in R\|S$ if $A$ is reachable from
$a$ by $R$ or $S$ in collaboration, that is, each of $R$ and $S$ must
contribute a part of the reachability to $A$.

The parallel identity $1_\pi$ is the function $\lambda x.\
\emptyset$, which does not reach any set from any vertex. Two
interpretations of a pair $(a,\emptyset)$ suggest themselves: it might
be the case that nothing is reachable from $a$ due to an error or due
to nontermination.

\begin{example}
Consider the multirelations 
\begin{equation*}
  R=\{(a,\{b,c\})\},\qquad S=\{(b,\{b\})\}, \qquad T=\{(b,\{b\}),(c,\emptyset)\}.
\end{equation*}
Then $R\cdot S = \emptyset$ because $S$ cannot contribute from
$c$. Moreover, $R\cdot T= \{(a,\{b\})\}$. Finally, $T\cdot S = T$,
since, from $c$, the empty set is the only intermediate set which
satisfies the conditions for $S$ and $A$ above.\qed
\end{example}
\begin{example}
  Consider the multirelations 
  \begin{equation*}
    R=\{(a,\{a,b\})\},\qquad S=\{(a,\{b,c\}),(b,\{b\})\},\qquad T=\{(b,\emptyset)\}.
  \end{equation*}
Then $R||S = \{(a,\{a,b,c\})\}$ and $S||T = \{(b,\{b\})\}$.\qed
\end{example}


\section{Basic Laws for Multirelations}\label{S:mulproperties}
 

The definition of sequential composition is higher-order and some of
our proofs of our proofs use higher-order Skolemisation, which is an
instance of the Axiom of Choice:
\begin{equation*}
  (\forall a\in A.\exists b.\ P(a,b)) \Leftrightarrow (\exists f.\forall a\in A.\ P(a,f(a))).
\end{equation*}

First we derive some basic laws of sequential composition.
\begin{lemma}\label{P:seqlaws}
  Let $R$, $S$ and $T$ be multirelations.
  \begin{enumerate}
  \item $R\cdot 1_\sigma=R$ and $1_\sigma\cdot R = R$,
\item $\emptyset\cdot R= \emptyset$,
\item $(R \cdot S) \cdot T \subseteq R \cdot (S\cdot T)$,
  \item $(R\cup S)\cdot T = R\cdot T\cup S\cdot T$,
   \item $R\cdot S\cup R\cdot T \subseteq R\cdot (S\cup T)$.
\end{enumerate}
\end{lemma}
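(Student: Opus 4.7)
The plan is to verify each claim by direct unfolding of the definition of $\cdot$ and, where needed, by constructing witnesses for the second-order existentials using the Skolem/Choice principle quoted above. In every case the work reduces to exhibiting an intermediate set $B$ and a function $f$ fitting the definition of sequential composition.

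For (1), I would prove $R\cdot 1_\sigma=R$ by taking $(a,A)\in R$ and using $B:=A$ with $f:=\lambda b.\{b\}$, so that $(b,f(b))\in 1_\sigma$ and $\bigcup f(B)=A$; for the converse, $(a,B)\in 1_\sigma\cdot R$ implicitly forces $(a,B)\in 1_\sigma$ on the left, hence $B=\{a\}$, reducing the outer union to $f(a)$. The proof of $1_\sigma\cdot R=R$ is the mirror image. Parts (2), (4) and (5) are essentially bookkeeping: (2) is vacuous since the condition $(a,B)\in\emptyset$ is never satisfied; for (4) the clause $(a,B)\in R\cup S$ splits into the two cases $(a,B)\in R$ and $(a,B)\in S$, each preserving the same witness $f$; and for (5) a witness $(B,f)$ for $(a,A)\in R\cdot S$ immediately serves as a witness for $(a,A)\in R\cdot(S\cup T)$, since the graph $G_f(B)$ is contained in $S\subseteq S\cup T$.

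The real work is in (3). Given $(a,A)\in(R\cdot S)\cdot T$, the definition provides an intermediate $B$ with $(a,B)\in R\cdot S$ and a function $g$ such that $G_g(B)\subseteq T$ and $A=\bigcup g(B)$; unfolding $(a,B)\in R\cdot S$ further yields $C$ with $(a,C)\in R$ and $f$ such that $G_f(C)\subseteq S$ and $B=\bigcup f(C)$. The candidate witness for membership in $R\cdot(S\cdot T)$ is then $C$ together with the function $h:=\lambda c.\bigcup g(f(c))$. One checks that $(c,h(c))\in S\cdot T$ using the inner witness $(f(c),g\!\restriction\! f(c))$, and that $\bigcup h(C)=\bigcup_{c\in C}\bigcup_{b\in f(c)}g(b)=\bigcup_{b\in\bigcup f(C)}g(b)=\bigcup g(B)=A$, where the crucial step is the associativity/flattening of iterated unions.

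The main obstacle is precisely this construction of $h$ in (3). The subtlety is that the same element may arise in several $f(c)$, so any naive ``pointwise'' extraction of a single $T$-transition would lose information; defining $h$ as the union $\bigcup g(f(c))$ sidesteps this and exploits the fact that $T$-membership is required only at individual points $b\in f(c)$, not at $h(c)$ itself. This one-sided pattern also explains why (3) and (5) are only inclusions---in the reverse directions a single $f$ would have to be synthesised from a whole family of inner witnesses, which the definition of $\cdot$ does not in general permit---so no attempt at the opposite inclusions is warranted here.
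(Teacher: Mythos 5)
Your proposal is correct and follows essentially the same route as the paper: every part is proved by unfolding the definition of sequential composition, and for (3) your witness $h=\lambda c.\bigcup g(f(c))$ is exactly the function that the paper's Skolemised chain of implications constructs, with the same flattening of iterated unions $\bigcup_{c\in C}\bigcup_{b\in f(c)}g(b)=\bigcup_{b\in\bigcup f(C)}g(b)$ doing the real work. (The only blemish is a labelling slip in (1), where the sentence announcing the converse of $R\cdot 1_\sigma=R$ actually argues $1_\sigma\cdot R\subseteq R$; all four required inclusions are nonetheless covered by your arguments and their stated mirror images, which is more detail than the paper itself provides for that part.)
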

See Appendix~2 for proofs.  Property (1) confirms that $1_\sigma$ is
indeed an identity of sequential composition. Property (2) shows that
$\emptyset$ is a left annihilator. It is, however, not a right
annihilator by Lemma~\ref{P:counterexamples} below. Similarly, (3) is
a weak associativity law which, again by
Lemma~\ref{P:counterexamples}, cannot be strengthened to an
identity. In fact, (3) is not needed for the algebraic development in
this paper; it is listed for the sake of completeness. Property (5) is
a left subdistributivity law for sequential composition, which, again
by Lemma~\ref{P:counterexamples}, cannot be strengthened to an
identity. Left subdistributivity and right distributivity imply that
sequential composition is left and right isotone:
\begin{equation*}
  R\subseteq S\Rightarrow T\cdot R\subseteq T\cdot S,\qquad R\subseteq S\Rightarrow R\cdot T\subseteq S\cdot T.
\end{equation*}

Next we verify some basic laws of concurrent composition. These reveal
more pleasant algebraic structure.
\begin{lemma}\label{P:conclaws}
Let $R$, $S$ and $T$ be multirelations.
  \begin{enumerate}
  \item $(R\|S)\|T = R\|(S\|T)$,
  \item $R\|S = S\|R$,
  \item $R\|1_\pi= R$,
\item $R\|\emptyset = \emptyset$,
  \item $R\|(S\cup T) = R\|S \cup R\|T$.
  \end{enumerate}
\end{lemma}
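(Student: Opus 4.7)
The plan is to prove each of the five properties by a direct unfolding of the definition
\[
R\|S = \{(a,A\cup B) \mid (a,A)\in R \wedge (a,B)\in S\},
\]
and then transporting the corresponding property of set union (associativity, commutativity, identity, distributivity). Since concurrent composition is defined by a single existential clause over subsets of $X$, no Skolemisation or intermediate functions are required, unlike for sequential composition. Each argument is a ``follow the definition and apply the dual fact about $\cup$'' calculation.

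For (2), commutativity of $\|$ is immediate from commutativity of $\cup$: $(a,A\cup B) = (a,B\cup A)$ and the two conjuncts $(a,A)\in R \wedge (a,B)\in S$ may be swapped. For (1), I would show both inclusions by chasing witnesses: $(a,Z)\in (R\|S)\|T$ means $Z = Y\cup C$ with $(a,Y)\in R\|S$ and $(a,C)\in T$, so $Y=A\cup B$ with $(a,A)\in R$, $(a,B)\in S$; then $Z = A\cup(B\cup C)$ provides the witness on the right-hand side, and the reverse direction is symmetric. For (5), unfolding the right-hand side gives $(a,A\cup B)$ with $(a,A)\in R$ and $(a,B)\in S\cup T$; splitting on which of $S$, $T$ supplies $B$ directly yields a pair in $R\|S \cup R\|T$, and vice versa.

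For (4), no pair $(a,B)$ lies in $\emptyset$, so the defining conjunction is never satisfied and $R\|\emptyset = \emptyset$. For (3), the crucial observation is that $1_\pi$ contains $(a,\emptyset)$ for \emph{every} $a\in X$. Hence for any $(a,A)\in R$ we can pair it with $(a,\emptyset)\in 1_\pi$ to obtain $(a,A\cup\emptyset)=(a,A)\in R\|1_\pi$; conversely, every element of $R\|1_\pi$ has the form $(a,A\cup\emptyset)=(a,A)$ with $(a,A)\in R$.

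I do not expect a real obstacle: these are all one-line set-theoretic verifications. The only point demanding care is (3), where one must notice that the multirelation $1_\pi$ is not $\emptyset$ but the total function $\lambda x.\emptyset$; without pairs at every $a\in X$, it would fail to act as an identity. This contrasts with the sequential case in Lemma~\ref{P:seqlaws}, where $\emptyset$ is only a left annihilator because the right-hand multirelation might not be inhabited on an intermediate set.
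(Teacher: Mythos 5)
Your proposal is correct and follows essentially the same route as the paper: unfold the definition of $\|$ and transport the corresponding property of set union (the paper's Appendix~2 proof does exactly this, dismissing (3) and (4) as immediate from the definitions). Your extra remark on why $1_\pi$ must be total for (3) is a sound elaboration of what the paper leaves implicit.
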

See Appendix~2 for proofs. This show that multirelations under union
and parallel composition form a commutative dioid, as introduced in
Section~\ref{S:axioms}.  It follows that concurrent composition is
left and right isotone:
\begin{equation*}
  R\subseteq S \Rightarrow T\|R\subseteq T\|S,\qquad R\subseteq S \Rightarrow R\|T\subseteq S\|T.
\end{equation*}

Next we establish an important interaction law between sequential and
concurrent composition; a right subdistributivity law of sequential
over concurrent composition.
  \begin{lemma}\label{P:interaction}
    Let $R$, $S$ and $T$ be multirelations. Then
    \begin{equation*}
    (R\|S)\cdot T \subseteq (R\cdot T)\|(S\cdot T).    
    \end{equation*}
  \end{lemma}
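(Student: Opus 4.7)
The plan is to prove the inclusion pointwise by unfolding both definitions and producing the required witnesses from the left-hand side. Suppose $(a,A) \in (R\|S)\cdot T$. By the definition of sequential composition, there exist a set $B$ with $(a,B)\in R\|S$ and a function $f$ with $G_f(B)\subseteq T$ and $A = \bigcup f(B)$. By the definition of parallel composition applied to $(a,B)\in R\|S$, there are sets $B_1$ and $B_2$ with $(a,B_1)\in R$, $(a,B_2)\in S$, and $B = B_1\cup B_2$.

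The key idea is to use the same function $f$, restricted to $B_1$ and to $B_2$, as the two witnesses needed on the right-hand side. Define $A_1 = \bigcup f(B_1)$ and $A_2 = \bigcup f(B_2)$. Since $B_1 \subseteq B$ and $B_2 \subseteq B$, we have $G_f(B_1)\subseteq G_f(B)\subseteq T$ and $G_f(B_2)\subseteq G_f(B)\subseteq T$, so $(a,A_1)\in R\cdot T$ and $(a,A_2)\in S\cdot T$. Because the image operation distributes over unions of sets,
\begin{equation*}
  A_1 \cup A_2 = \bigcup f(B_1) \cup \bigcup f(B_2) = \bigcup f(B_1 \cup B_2) = \bigcup f(B) = A,
\end{equation*}
which gives $(a,A) = (a, A_1\cup A_2) \in (R\cdot T)\|(S\cdot T)$, as required.

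There is no real obstacle here: the proof is a routine Skolemisation argument and a one-line set-theoretic identity on images. The only thing worth flagging is why the reverse inclusion fails and hence why we obtain a genuine subdistributivity law rather than an equation. Membership of $(a,A)$ in $(R\cdot T)\|(S\cdot T)$ yields two independent witnessing functions $f_1$ on some $B_1$ and $f_2$ on some $B_2$ that need not agree on $B_1\cap B_2$, so they cannot in general be merged into a single $f$ on $B_1\cup B_2$ with $G_f(B_1\cup B_2)\subseteq T$. This asymmetry mirrors the absence of left distributivity of $\cdot$ over $\cup$ noted after Lemma~\ref{P:seqlaws}.
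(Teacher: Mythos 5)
Your proof is correct and follows essentially the same route as the paper's: both take the single Skolem function $f$ witnessing $(a,A)\in(R\|S)\cdot T$, restrict it to the two components $B_1$ and $B_2$ of the intermediate set, and use the identity $\bigcup f(B_1)\cup\bigcup f(B_2)=\bigcup f(B_1\cup B_2)$ to reassemble $A$. Your closing remark on why the converse fails also matches the paper's counterexample in Lemma~\ref{P:counterexamples}(4).
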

  See Appendix~2 for a proof. Once more, this general law is not
  needed for our algebraic development. We use a full right
  distributivity law that holds in particular cases.

  Finally, counterexamples show that the algebraic properties studied
  so far are sharp.


\begin{lemma}\label{P:counterexamples}
  There are multirelations $R$, $S$ and $T$ such that
\begin{enumerate}
\item $R\cdot \emptyset \neq \emptyset$,
\item $R \cdot (S\cdot T) \not\subseteq (R \cdot S) \cdot T$,
\item $R\cdot (S\cup T) \not\subseteq R\cdot S\cup R\cdot T$,
\item $(R\cdot T)\|(S\cdot T)\not\subseteq (R\|S)\cdot T$.
\end{enumerate}
\end{lemma}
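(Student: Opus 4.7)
The plan is to exhibit explicit finite multirelations over a small ground set $X$ for each of the four parts and then verify the failure of each containment by direct computation. The intellectual content is in designing the examples so the calculations remain small; once they are chosen, the verification is essentially mechanical unpacking of the definitions of $\cdot$, $\cup$, and $\|$.

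For (1), the plan is to observe from the definition that $(a,A)\in R\cdot\emptyset$ forces the intermediate set $B$ to be empty (since $G_f(B)\subseteq\emptyset$), whence $A=\bigcup f(B)=\emptyset$. Thus $R\cdot\emptyset\neq\emptyset$ as soon as $(a,\emptyset)\in R$ for some $a$; take $R=\{(a,\emptyset)\}$. For (3), the idea is to split the intermediate set of $R$ across $S$ and $T$: choose $R=\{(a,\{b_1,b_2\})\}$, $S=\{(b_1,\{c\})\}$, $T=\{(b_2,\{d\})\}$. Then $S$ has no entry at $b_2$ and $T$ none at $b_1$, so both $R\cdot S$ and $R\cdot T$ are empty, while $R\cdot(S\cup T)$ contains $(a,\{c,d\})$ witnessed by $f(b_1)=\{c\}$, $f(b_2)=\{d\}$.

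The cases (2) and (4) are the more delicate ones, and they share a common theme: in $(R\cdot S)\cdot T$ and in $(R\|S)\cdot T$ the intermediate set is obtained as a \emph{single} union, whereas in $R\cdot(S\cdot T)$ and in $(R\cdot T)\|(S\cdot T)$ each branch chooses an output of $T$ \emph{independently}. I would exploit this by arranging an element $b$ that is reached from two branches and has two distinct outputs under $T$. For (2) take $R=\{(a,\{b_1,b_2\})\}$, $S=\{(b_1,\{c\}),(b_2,\{c\})\}$, $T=\{(c,\{d_1\}),(c,\{d_2\})\}$: both $b_i$ collapse to the single intermediate $\{c\}$ in $R\cdot S$, so $(R\cdot S)\cdot T$ can only use one of $\{d_1\},\{d_2\}$; but in $R\cdot(S\cdot T)$ the two branches $b_1,b_2$ independently select distinct outputs, yielding $(a,\{d_1,d_2\})$. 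For (4) take $R=S=\{(a,\{b\})\}$ and $T=\{(b,\{c\}),(b,\{d\})\}$: then $R\|S=\{(a,\{b\})\}$ permits only one choice at $b$, so $(R\|S)\cdot T=\{(a,\{c\}),(a,\{d\})\}$, whereas $R\cdot T$ and $S\cdot T$ independently pick their own $T$-output, giving $(a,\{c,d\})\in (R\cdot T)\|(S\cdot T)$.

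The main obstacle is not proof difficulty but conceptual: one must recognise the precise structural asymmetry the example has to exhibit, namely that $\cdot$ both \emph{merges} intermediate elements coming from different branches (so a ``once-and-for-all'' downstream choice is forced) and requires a \emph{single} witnessing function on the merged set. Once this is internalised, the four concrete instances above can be verified by routine unfolding of the definitions, and in particular by checking in each case that the candidate witness $(a,A)$ on one side requires two incompatible values of a single function $f$ or $g$ on the other side.
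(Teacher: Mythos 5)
Your proposal is correct and takes essentially the same approach as the paper: exhibit explicit small multirelations and refute each containment by unfolding the definition of sequential composition, exploiting in (2) and (4) that a merged intermediate set forces a single downstream choice of the witnessing function while the other side allows independent choices per branch. Your concrete witnesses differ from the paper's (e.g.\ the paper refutes associativity via $(R\cdot R)\cdot S \subset R\cdot(R\cdot S)$ on a two-element carrier), but all four of your examples check out.
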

\begin{proof}
  \begin{enumerate}
\item Let $R=\{(a,\emptyset)\}$. Then
$    (a,A) \in R\cdot S \Leftrightarrow \exists f.\ G_f(B)\in S\wedge A = \bigcup f(\emptyset)
 \Leftrightarrow A = \emptyset$.
Hence, in this particular case, $R\cdot \emptyset = \{(a,\emptyset)\}\neq \emptyset$.
\item 
Let $R=\{(a,\{a,b\}),(a,\{a\}),(b,\{a\})\}$ and $S=\{(a,\{a\}),(a,\{b\})$. Then
\begin{align*}
  (R\cdot R)\cdot S &= \{(a,\{a\}),(a,\{b\}),(b,\{a\}),(b,\{b\})\}\\ 
&\subset \{(a,\{a,b\}),((a,\{a\}),(a,\{b\}),(b,\{a\}),(b,\{b\})\}\\
&= R\cdot (R\cdot S).
\end{align*}
\item Consider $ R=\{(a,\{a,b\})\}$, $S=\{(a,\{a\})\}$, and
  $T=\{(b,\{b\})\}$.  It follows that $S\cup T=
  \{(a,\{a\}),(b,\{b\})\}$ and $R\cdot (S\cup T)=R$, but $R\cdot S=
  R\cdot T=\emptyset$, whence $R\cdot S\cup R\cdot T=\emptyset$.
\item Let $R=\{(a,\{a\})\}$ and $S=\{(a,\{a\}),(a,\{b\})\}$. Then 
  \begin{equation*}
    (R\|R)\cdot T = T \subset \{(a,\{a\}),(a,\{b\}),(a,\{a,b\})\}= (R\cdot T)\|(R\cdot T).
  \end{equation*}
  \end{enumerate}
\end{proof}

The following Hasse diagrams are useful for visualising multirelations
and finding counterexamples. We depict the multirelations $R$ and $S$
from case (2) in the Hasse diagram of the carrier set in
Figure~\ref{Fig:one}.
\begin{figure}[h]
  \centering

\begin{equation*}
  \def\labelstyle{\normalsize}
\xymatrix@C=5pt{
&ab\ar@/_/[dl]\ar@(ul,ur)&\\
a\ar@/_/[ur]\ar@(dl,ul)&&b\ar[ll]\\
}
\qquad\qquad\qquad
\xymatrix{
&\\
a\ar[r]\ar@(dl,ul)&b
}
\end{equation*}
\caption{Diagrams for $R$ and $S$ in the proof of Lemma~\ref{P:counterexamples}(2)}
\label{Fig:one}
\end{figure}
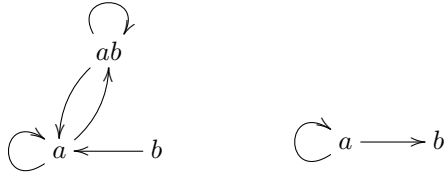
We write $ab$ as shorthand for the set $\{a,b\}$. The arrows $a\to a$,
$a\to ab$ and $b\to a$ correspond to the pairs in $R$. The ``virtual''
arrows $ab\to ab$ and $ab\to a$ have been added to indicate which
states are reachable from the set $ab$ by $R$. We have omitted the
empty set because it is not reachable. 

The resulting lifting of the multirelation of type $X\to 2^X$ to a
relation $2^X\times 2^X$ allows us to compute powers of $R$ and
products such as $R\cdot S$ by using relational composition, that is,
by chasing reachability arrows directly in the diagram. It is
reminiscent of Rabin and Scott's construction of deterministic finite
automata from nondeterministic ones. A systematic study of this
lifting will be the subject of another article.

Accordingly, we compute $R\cdot R$, $R\cdot S$, $(R\cdot R)\cdot S$
and $R\cdot (R\cdot S)$ as depicted in Figure~\ref{Fig:two}.
\begin{figure}[h]
  \centering
\begin{gather*}
  \def\labelstyle{\normalsize}
\xymatrix@C=5pt{
&ab\ar@/_/[dl]\ar@(ul,ur)&\\
a\ar@/_/[ur]\ar@(dl,ul)&&b\ar[ll]\ar[lu]
}
\qquad\qquad
\xymatrix{
&\\
a\ar@/_/[r]\ar@(dl,ul)&b\ar@/_/[l]\ar@(ur,dr)
}
\\
\\
\\
\xymatrix{
&\\
a\ar@/_/[r]\ar@(dl,ul)&b\ar@/_/[l]\ar@(ur,dr)
}
\qquad\qquad\qquad
\xymatrix@C=5pt{
&ab\ar@/_/[dl]\ar@(ul,ur)\ar[dr]&\\
a\ar@/_/[ur]\ar@(dl,ul)\ar@/_/[rr]&&b\ar@/_/[ll]\ar@(ur,dr)
}
\end{gather*}
  
  \caption{Diagrams for $R\cdot R$, $R\cdot S$, $(R\cdot R)\cdot S$ and $R\cdot (R\cdot S)$ with $R$ and $S$ from the proof of Lemma~\ref{P:counterexamples}(2)}
  \label{Fig:two}
\end{figure}
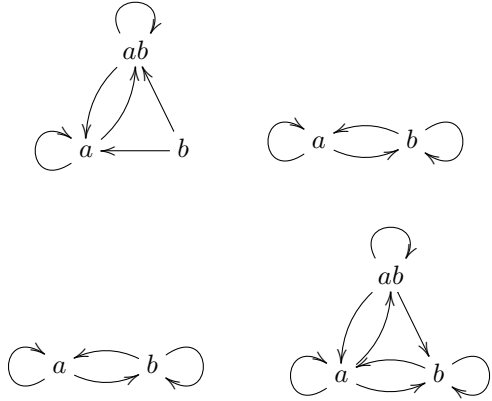

We even have a counterexample to $R\cdot (R\cdot R) \subseteq (R\cdot R)\cdot
R$. Consider the multirelation $R= \{(a,\{c\}),(b,\{a,c\}),(c,\{b\}),(c,\{c\})\}$. Then 
\begin{equation*}
  R\cdot R = \{(a,\{b\}),(a,\{c\}),(b,\{c\}),(b,\{b,c\}),(c,\{b\}),(c,\{c\}), (c,\{a,c\})\}.
\end{equation*}
Therefore
\begin{align*}
  R\cdot (R\cdot R) =& \{(a,\{b\}),(a,\{c\}),(a,\{a,c\}),\\
&\ \ (b,\{b\}),(b,\{c\}),(b,\{a,c\}),(b,\{b,c\}),(b,\{a,b,c\}),\\
&\ \ (c,\{b\}),(c,\{c\}),(c,\{a,c\}),(c,\{b,c\})\}\\
\supset& \{(a,\{b\}),(a,\{c\}),(a,\{a,c\}),\\
&\ \ (b,\{b\}),(b,\{c\}),(b,\{a,c\}),(b,\{a,b,c\}),\\
&\ \ (c,\{b\}),(c,\{c\}),(c,\{b,c\})\}\\
=& (R\cdot R)\cdot R.
\end{align*}
This hints at complications in the definition of finite iteration of
multirelations, which is considered in
Section~\ref{S:finiteiteration}.


\section{Stronger Laws for Sequential Subidentities}\label{S:subidlaws}


A multirelation $P$ is a \emph{(sequential) subidentity} if
$P\subseteq 1_\sigma$. As mentioned in Section~\ref{S:multirelations},
$1_\sigma=\lambda x.\{x\}$ embeds $X$ into $2^X$. Every sequential
subidentity is therefore a partial embedding. We usually write $P$ or
$Q$ for subidentities. We write $\iota=\lambda x.\{x\}$ for the
embedding of $X$ into $2^X$. One can see $G_\iota(a)$ also as a
lifting of a point $a\in X$ to a multirelational ``point'' $(a,\{a\})$
and $G_\iota(A)$ as a lifting of a set $A$ to a subidentity.

More generally, this yields an isomorphism between points and
multirelational points as well as sets and subidentities.

The next lemma shows that multiplying a multirelation with a
subidentity from the left or right amounts to an input or output
restriction. 
 \begin{lemma}\label{P:subidinout}
   Let $R$ be a multirelation and $P$ a subidentity.
   \begin{enumerate}
   \item $(a,A) \in R\cdot P \Leftrightarrow (a,A)\in R\wedge
     G_\iota(A)\subseteq P$,
   \item $(a,A) \in P\cdot R \Leftrightarrow G_\iota(a) \in P \wedge
     (a,A) \in R$.
   \end{enumerate}
 \end{lemma}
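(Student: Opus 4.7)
The plan is to unfold both sides of each equivalence directly from the definition of sequential composition, using the fact that the only pairs in $P$ look like $(x,\{x\})$ for $x$ in some subset of $X$. In both cases the subidentity constraint collapses one of the second-order quantifiers in the definition of $\cdot$, so the argument is essentially book-keeping once the correct witness is chosen.

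For part~(1), I would first assume $(a,A)\in R\cdot P$ and pick witnesses $B$ and $f$ with $(a,B)\in R$, $G_f(B)\subseteq P$ and $A=\bigcup f(B)$. Because $P\subseteq 1_\sigma$, each pair $(b,f(b))\in P$ forces $f(b)=\{b\}$, hence $A=\bigcup_{b\in B}\{b\}=B$, so $(a,A)\in R$ and moreover $G_\iota(A)=\{(b,\{b\})\mid b\in A\}\subseteq P$. For the converse, given $(a,A)\in R$ and $G_\iota(A)\subseteq P$, I would take $B:=A$ and $f:=\iota\!\restriction\! A$; then $G_f(B)=G_\iota(A)\subseteq P$ and $\bigcup f(B)=A$, which witnesses $(a,A)\in R\cdot P$.

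For part~(2), suppose $(a,A)\in P\cdot R$ with witnesses $B$ and $f$. Since $(a,B)\in P\subseteq 1_\sigma$, necessarily $B=\{a\}$, so in particular $G_\iota(a)=(a,\{a\})\in P$. Then the only relevant value of $f$ is $f(a)$, and $G_f(B)\subseteq R$ reduces to $(a,f(a))\in R$, while $A=\bigcup f(B)=f(a)$. Hence $(a,A)\in R$. Conversely, assuming $G_\iota(a)\in P$ and $(a,A)\in R$, I would take $B:=\{a\}$ and $f(a):=A$; this yields $(a,B)\in P$, $G_f(B)=\{(a,A)\}\subseteq R$, and $\bigcup f(B)=A$, so $(a,A)\in P\cdot R$.

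I do not expect any serious obstacle here: the only mild subtlety is to remember that in (1) the equation $A=\bigcup f(B)$ also \emph{determines} that $A=B$, which is what allows us to extract $(a,A)\in R$ (rather than just $(a,B)\in R$ for some unknown $B$). Apart from that, both directions are immediate once the witnesses are chosen as above, and the proof could equally well be discharged in Isabelle by unfolding definitions and using the characterisation of $1_\sigma$.
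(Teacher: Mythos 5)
Your proof is correct and follows essentially the same route as the paper's: both directions of each equivalence are established by unfolding the definition of sequential composition, using the subidentity constraint to force $f(b)=\{b\}$ (hence $A=B$) in part (1) and $B=\{a\}$ in part (2), with the witnesses $f=\iota$ and $f=\lambda x.\,A$ for the converse directions. The only difference is that you spell out the book-keeping slightly more explicitly than the paper does.
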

 See Appendix~2 for proofs. These properties help us to verify that
 subidentities satisfy equational associativity and interaction laws
 as well as a left distributivity law.
 \begin{lemma}\label{P:subidlaws}
   Let $R$, $S$ and $T$ be multirelations.
\begin{enumerate}
\item $(R\cdot S)\cdot T = R \cdot (S\cdot T)$ if $R$, $S$ or $T$ is a
  subidentity,
\item  $(R\|S)\cdot T = (R\cdot T)\|(S\cdot T)$ if $T$ is a subidentity,
\item $R \cdot (S\cup T) = R\cdot S \cup R\cdot T$ if $R$ is a
  subidentity.
 \end{enumerate}  
\end{lemma}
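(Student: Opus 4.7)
The plan is to exploit Lemma~\ref{P:subidinout}, which turns pre- and post-composition with a subidentity into a membership conjunction and thereby eliminates the existential quantifiers that cause the failure of associativity, left distributivity and right distributivity over $\|$ in general. For each clause I would reduce the statement to the corresponding characterisation, noting in (1) and (3) that one inclusion is already delivered by Lemma~\ref{P:seqlaws}(3) and (5), so only the reverse inclusion needs fresh work.

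For (1) I would split into the three subcases. If $T$ is a subidentity, then by Lemma~\ref{P:subidinout}(1) membership in $R\cdot(S\cdot T)$ means there exist $B$ and $f$ with $(a,B)\in R$, $A=\bigcup f(B)$ and $(b,f(b))\in S\cdot T$ for all $b\in B$; applying Lemma~\ref{P:subidinout}(1) once more to each $(b,f(b))$ yields $(b,f(b))\in S$ together with $G_\iota(f(b))\subseteq T$. The same $f$ then witnesses $(a,A)\in R\cdot S$, and collecting the pointwise conditions gives $G_\iota(A)=\bigcup_{b\in B}G_\iota(f(b))\subseteq T$, so $(a,A)\in (R\cdot S)\cdot T$. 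The case $R$ a subidentity is symmetric via Lemma~\ref{P:subidinout}(2), which pulls the condition $G_\iota(a)\in R$ out past the outer existential. The middle case $S=P$ is the delicate one: membership in $R\cdot(P\cdot T)$ gives $f$ with $G_f(B)\subseteq P\cdot T$; Lemma~\ref{P:subidinout}(2) then yields $G_\iota(b)\in P$ for every $b\in B$, which by Lemma~\ref{P:subidinout}(1) is precisely $(a,B)\in R\cdot P$, while the same $f$ still witnesses the outer composition with $T$.

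For (2), a pair $(a,A)\in (R\|S)\cdot T$ decomposes via Lemma~\ref{P:subidinout}(1) as $(a,A)\in R\|S$ together with $G_\iota(A)\subseteq T$; writing $A=A_1\cup A_2$ with $(a,A_1)\in R$ and $(a,A_2)\in S$, the inclusions $G_\iota(A_i)\subseteq G_\iota(A)\subseteq T$ upgrade these to $(a,A_i)\in R\cdot T$ and $(a,A_i)\in S\cdot T$, so $(a,A)\in (R\cdot T)\|(S\cdot T)$. Conversely, a witnessing decomposition on the right gives $A_1,A_2$ with $G_\iota(A_1),G_\iota(A_2)\subseteq T$, whence $G_\iota(A_1\cup A_2)\subseteq T$ recombines into the left-hand side. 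For (3), Lemma~\ref{P:subidinout}(2) collapses $R\cdot(S\cup T)$ to $\{(a,A)\mid G_\iota(a)\in R\wedge (a,A)\in S\cup T\}$, and a one-line case split on which of $S,T$ contains $(a,A)$ yields the reverse inclusion.

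The principal obstacle is really conceptual rather than computational: the non-associativity of sequential composition stems from the freedom to choose different image sets $f(b)$ for different $b\in B$, and the counterexamples of Lemma~\ref{P:counterexamples} show that this freedom is essential. Lemma~\ref{P:subidinout} removes that freedom whenever a subidentity sits at either end of a product, and the proofs above amount to verifying that, in each clause, the subidentity hypothesis is enough to force the image of $f$ to be aligned with the required subidentity constraint. I would expect the middle-position subcase of (1) to require the most careful bookkeeping, since it is the only one where Lemma~\ref{P:subidinout}(1) and (2) must be combined, and the surviving witness $f$ has to be traced through both rewritings.
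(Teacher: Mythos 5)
Your proposal is correct and follows essentially the same route as the paper: both reduce each clause to pointwise element-chasing via Lemma~\ref{P:subidinout}, splitting part (1) into the three positional subcases and combining the membership conditions $G_\iota(\cdot)\subseteq P$ with the witness $f$ exactly as in the paper's chains of equivalences. The only (harmless) difference is that you import one inclusion in (1) and (3) from Lemma~\ref{P:seqlaws}(3) and (5), whereas the paper establishes both directions simultaneously as biconditionals.
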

See Appendix~2 for proofs. Lemma~\ref{P:subidlaws} is essential for
deriving the axioms of concurrent dynamic algebra.

In addition, it is straightforward to verify that the sequential
subidentities form a boolean subalgebra of the algebra of
multirelations over $X$. The empty set is the least element of this
algebra and $1_\sigma$ its greatest element. Join is union and meet
coincides with sequential composition, which is equal to parallel
composition in this special case. The boolean complement of a
subidentity $\bigcup_{a\in A}\{G_\iota(a)\}$, for some set $A\subseteq
X$, is the subidentity $\bigcup_{b\in X - A}\{G_\iota(b)\}$.

Subidentities play an important role in providing the state spaces of
modal operators in concurrent dynamic algebras. In our axiomatisation,
however, they arise only indirectly through definitions of domain and
antidomain elements. In the concrete case of multirelations these are
described in the next section.


\section{Domain and Antidomain of  Multirelations}\label{S:multireldom}


This section presents the second important step towards concurrent
dynamic algebra within the multirelational model: the definitions of
domain and antidomain operations and the verification of some of their
basic properties. These are then abstracted into algebraic domain and
antidomain axioms, which, in turn, allow us to define the modal box
and diamond operations of concurrent dynamic algebra.

The \emph{domain} of a multirelation $R$ is the multirelation
 \begin{equation*}
   d(R)=\{G_\iota(a)\ |\ \exists A.\ (a,A)\in R\}.
 \end{equation*}
The \emph{antidomain} of a multirelation $R$ is the multirelation
\begin{equation*}
  a(R) = \{G_\iota(a)\}\ | \ \neg\exists A.\ (a,A)\in R\}.
\end{equation*}
Domain and antidomain elements are therefore boolean complements of
each other.

The next lemmas collect some of their basic properties which justify
the algebraic axioms in Section~\ref{S:axioms}.

 \begin{lemma}\label{P:domprops}
Let $R$ and $S$ be multirelations.
   \begin{enumerate}
   \item $d(R)\subseteq 1_\sigma$,
    \item $d(R)\cdot R= R$,
   \item $d(R\cup S)=d(R)\cup d(S)$,
   \item $d(\emptyset)=\emptyset$,
   \item $d(R\cdot S)=d(R\cdot d(S))$,
   \item $d(R\|S)=d(R)\cap d(S)$,
   \item $d(R)\|d(S) = d(R)\cdot d(S)$.
   \end{enumerate}
 \end{lemma}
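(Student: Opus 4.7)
The plan is to treat each of the seven parts in order, disposing of the routine clauses quickly and reserving care for the two that involve the second-order structure of sequential composition.

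Parts (1), (3), (4) follow directly by unfolding the definition of $d$. For (1), every element of $d(R)$ has the shape $G_\iota(a)=(a,\{a\})$, which lies in $1_\sigma$. For (3), $G_\iota(a)\in d(R\cup S)$ iff some $A$ satisfies $(a,A)\in R\cup S$, and the existential distributes over union, giving $G_\iota(a)\in d(R)\cup d(S)$. Part (4) is vacuous. Part (2) is almost as short once we cite Lemma~\ref{P:subidinout}(2): $(a,A)\in d(R)\cdot R$ iff $G_\iota(a)\in d(R)$ and $(a,A)\in R$, and the first conjunct is redundant given the second.

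For (6), $G_\iota(a)\in d(R\|S)$ iff there exist $B,C$ with $(a,B)\in R$, $(a,C)\in S$ and $A=B\cup C$; the existence of such $B$ and $C$ is exactly $G_\iota(a)\in d(R)\cap d(S)$, and the union $B\cup C$ always provides the required $A$. For (7), I would use that $d(R)$ and $d(S)$ are subidentities by (1): if $(a,B)\in d(R)$ then $B=\{a\}$, and similarly for $d(S)$, so by the definition of $\|$ and by Lemma~\ref{P:subidinout} the pair $(a,\{a\})$ lies in $d(R)\|d(S)$ iff it lies in both $d(R)$ and $d(S)$ iff it lies in $d(R)\cdot d(S)$; both sides collapse to the boolean intersection inside $1_\sigma$.

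The only clause with real content is (5), the locality law, and this is where I expect to spend all the effort. Unfolding, $G_\iota(a)\in d(R\cdot S)$ iff there exist $B$ and $f$ with $(a,B)\in R$ and $G_f(B)\subseteq S$; and $G_\iota(a)\in d(R\cdot d(S))$ iff there exist $B$ and $g$ with $(a,B)\in R$ and $G_g(B)\subseteq d(S)$. Since $d(S)$ is a subidentity, the second condition is equivalent, elementwise, to: every $b\in B$ lies in the projection of $S$, i.e.\ $\exists C.\ (b,C)\in S$. So the task reduces to showing that, given a fixed $B$, the statements $\exists f.\ \forall b\in B.\ (b,f(b))\in S$ and $\forall b\in B.\ \exists C.\ (b,C)\in S$ coincide. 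This is exactly the Skolemisation/Axiom of Choice equivalence flagged at the start of Section~\ref{S:mulproperties}, and invoking it yields (5). This is the step I expect to be the main obstacle, in the sense that it is the only one requiring more than immediate unfolding, but once the Skolemisation pattern is identified the proof is a single application of it in each direction.
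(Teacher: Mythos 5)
Your proposal is correct and follows essentially the same route as the paper: routine unfolding for (1)--(4), (6), (7) (with (2) via Lemma~\ref{P:subidinout}(2) exactly as in the paper), and for the locality law (5) the same reduction to the equivalence of $\exists f.\,\forall b\in B.\,(b,f(b))\in S$ with $\forall b\in B.\,\exists C.\,(b,C)\in S$, discharged by the Skolemisation/Axiom of Choice principle stated in Section~\ref{S:mulproperties}. No gaps.
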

 See Appendix~2 for proofs. Most of these laws are similar to those of
 relational domain, but properties (6) and (7) are particular to
 multirelations. Property (1) shows that domain elements are
 subidentities. According to (2), a multirelation is preserved by
 multiplying it from the left with its domain element. According to
 (3) and (4), domain is strict and additive: the domain of the union
 of two multirelations is the union of their domains and the domain of
 the empty set is the empty set. The locality property (5) states that
 it suffices to know the domain of the second multirelation when
 computing the domain of the sequential composition of two
 multirelations. By (6), the domain of a parallel composition of two
 multirelations is the intersection of their domains. Finally, by (7),
 the parallel composition of two domain elements equals their
 interesection. More generally, parallel composition of sequential
 subidentities is meet.

 An intuitive explanation of domain is that it yields the set of all
 states from which a multirelation is enabled. Accordingly, by (3),
 the union of two multirelations is enabled if one of them is enabled,
 whereas, by (6), their parallel composition is enabled if both are
 enabled. It follows immediately from the definition that
 $d(\{(a,\emptyset)\})=\{(a,\{a\})\}$. Hence the multirelation
 $\{(a,\emptyset)\}$ is enabled, but does not yield an output.

 The next lemma, proved in Appendix~2, links domain and
 antidomain. It shows, in particular, that domain and antidomain
 elements are complemented.

\begin{lemma}\label{P:antidomprops1}
Let $R$ be a multirelation.
  \begin{enumerate}
  \item $a(R)= 1_\sigma\cap -d(R)$,
  \item $d(R)=a(a(R))$,
  \item $d(a(R))=a(R)$.
  \end{enumerate}
\end{lemma}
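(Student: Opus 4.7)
The plan is to prove all three items by straightforward unfolding of the definitions, exploiting the common observation that $d(R)$, $a(R)$ and $1_\sigma$ contain only ``diagonal'' pairs of the form $(a,\{a\})$. So whenever one asks about the membership or non-emptiness of the set of second components of such a multirelation at a given input $a$, the question collapses to membership of the single pair $(a,\{a\})$.

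For (1), I would simply compute $1_\sigma \cap -d(R)$ directly. By definition this is $\{(a,\{a\})\mid a\in X\} \cap -d(R)$, which consists of all pairs $(a,\{a\})$ with $(a,\{a\})\notin d(R)$, and by the definition of $d$ this is exactly $\{(a,\{a\})\mid \neg\exists A.\ (a,A)\in R\} = a(R)$.

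For (3) and (2), the key observation to record once is: since every pair in $a(R)$ has the form $(a,\{a\})$, we have $\exists A.\ (a,A)\in a(R)$ iff $(a,\{a\})\in a(R)$ iff $\neg\exists B.\ (a,B)\in R$. Granted this equivalence, (3) follows immediately: $d(a(R))=\{(a,\{a\})\mid \exists A.\ (a,A)\in a(R)\}=\{(a,\{a\})\mid \neg\exists B.\ (a,B)\in R\}=a(R)$. And (2) follows by one further application of $a$: $a(a(R))=\{(a,\{a\})\mid \neg\exists A.\ (a,A)\in a(R)\}$, which by the same equivalence becomes $\{(a,\{a\})\mid \exists B.\ (a,B)\in R\}=d(R)$.

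There is no real obstacle here; each item is essentially a quantifier manipulation on the two defining formulas. The only point that deserves a comment is that the ``$\exists A$'' occurring inside the definitions of $d$ and $a$ is trivial on multirelations that are themselves subsets of $1_\sigma$, because in that case the only possible witness is $A=\{a\}$; this is what makes the double negation in (2) collapse cleanly and what makes $a$ behave like a domain on its own image in (3).
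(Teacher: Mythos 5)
Your proof is correct and follows essentially the same route as the paper: all three items are obtained by unfolding the definitions of $d$ and $a$ and manipulating the resulting quantifiers, using the fact that $a(R)$ consists only of diagonal pairs $G_\iota(a)=(a,\{a\})$. The only cosmetic difference is that the paper derives (3) by rewriting $d(a(R))$ as $a(a(a(R)))$ and collapsing the triple negation, whereas you apply the definition of $d$ to $a(R)$ directly; both are the same quantifier computation.
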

Many essential properties of antidomain can now be derived by De
Morgan duality.

\begin{lemma}\label{P:antidomprops2}
Let $R$ and $S$ be multirelations.
  \begin{enumerate}
  \item $a(R)\cdot R = \emptyset$,
  \item $a(R\cdot S) = a(R\cdot d(S))$,
  \item $a(R)\cup d(R) = 1_\sigma$,
  \item $a(R\cup S) = a(R)\cdot a(S)$,
  \item $a(R\|S) = a(R)\cup a(S)$,
  \item $a(R)\|a(S) = a(R)\cdot a(S)$.
  \end{enumerate}
\end{lemma}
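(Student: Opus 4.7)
The plan is to exploit the De Morgan duality provided by Lemma~\ref{P:antidomprops1}(1), namely $a(R)=1_\sigma\cap -d(R)$, together with the fact, established at the end of Section~\ref{S:subidlaws}, that on the boolean subalgebra of subidentities the operations of union, intersection, sequential composition, and parallel composition all coincide in the expected way (join is union; meet, sequential composition, and parallel composition all yield the meet of the boolean subalgebra). Each of items (2)--(6) will then be obtained by applying $1_\sigma\cap -(\cdot)$ to the corresponding clause of Lemma~\ref{P:domprops} and simplifying via boolean laws.

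More concretely, I would dispatch item (1) first, since it is not a direct dual of a clause in Lemma~\ref{P:domprops}. Because $a(R)$ is a subidentity, Lemma~\ref{P:subidinout}(2) gives that $(a,A)\in a(R)\cdot R$ iff $G_\iota(a)\in a(R)$ and $(a,A)\in R$; but the former condition states precisely that no $A$ satisfies $(a,A)\in R$, a contradiction, so the composite is empty. For item (3), I combine $d(R)\subseteq 1_\sigma$ from Lemma~\ref{P:domprops}(1) with the definition of $a(R)$:
\begin{equation*}
a(R)\cup d(R) = (1_\sigma\cap -d(R))\cup (1_\sigma\cap d(R)) = 1_\sigma\cap (-d(R)\cup d(R))=1_\sigma.
\end{equation*}

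For items (2), (4), and (5) the pattern is uniform: apply $1_\sigma\cap -(\cdot)$ to Lemma~\ref{P:domprops}(5), (3), and (6) respectively. For (2), this yields $a(R\cdot S)=1_\sigma\cap -d(R\cdot S)=1_\sigma\cap -d(R\cdot d(S))=a(R\cdot d(S))$. For (4), De Morgan and distributivity turn $-(d(R)\cup d(S))$ into $(-d(R))\cap (-d(S))$, whence $a(R\cup S) = a(R)\cap a(S)$, which equals $a(R)\cdot a(S)$ because $a(R)$ and $a(S)$ are subidentities. For (5), symmetrically, $-(d(R)\cap d(S)) = -d(R)\cup -d(S)$ yields $a(R\|S)=a(R)\cup a(S)$ after intersecting with $1_\sigma$. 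Finally, item (6) is immediate from the observation that parallel and sequential composition of subidentities both coincide with their meet, so $a(R)\|a(S)=a(R)\cdot a(S)$.

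The only genuinely non-routine step is (1), since it needs the explicit description of $a(R)$ together with Lemma~\ref{P:subidinout}(2) rather than the De Morgan transfer. Everything else is essentially boolean bookkeeping once one has Lemma~\ref{P:antidomprops1}(1) and the fact that subidentities form a boolean subalgebra in which meet agrees with both $\cdot$ and $\|$; the main care required is to keep the intersection with $1_\sigma$ visible at each step so that the complements are always taken within the boolean algebra of subidentities rather than within $M(X)$.
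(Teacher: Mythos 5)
Your proposal is correct and follows essentially the same route as the paper's proof: item (1) is handled directly via Lemma~\ref{P:subidinout}(2) and the definition of $a(R)$, item (3) by distributing $1_\sigma\cap{}$ over the complement pair, and items (2), (4)--(6) by applying $1_\sigma\cap -(\cdot)$ to the corresponding clauses of Lemma~\ref{P:domprops} and using that meet of subidentities coincides with both $\cdot$ and $\|$ (the paper phrases (6) via $a(R)=d(a(R))$ and Lemma~\ref{P:domprops}(7), but that is the same fact). No gaps.
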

See Appendix~\ref{A:proofs} for proofs. If $d(R)$ describes those
states from which mutirelation $R$ is enabled, then $a(R)$ models
those where $R$ is not enabled. Property (1) says that antidomain
elements are left annihilators: $R$ cannot be executed from states
where it is not enabled. Property (2) is a locality property similar
to that in Lemma~\ref{P:domprops}(5). Property (3) is a
complementation law between domain and antidomain elements. It implies
that antidomain elements are sequential subidentities. Properties (4)
to (6) are the obvious De Morgan duals of domain properties.

Finally, and crucially for our purposes, domain and antidomain
elements support stronger associativity and distributivity properties.

\begin{corollary}\label{P:domassocinter}
Let $R$, $S$ and $T$ be multirelations.
  \begin{enumerate}
  \item $(R\cdot S)\cdot T = R\cdot (S\cdot T)$ if $R$, $S$ or $T$ is
    a domain or antidomain element,
  \item $(R\|S)\cdot T = (R\cdot T)\|(S\cdot T)$, if $T$ is a
    domain or antidomain element,
  \item $R\cdot (S\cup T) = R\cdot S\cup R\cdot T$ if $R$ is a domain
    or antidomain element.
  \end{enumerate}
\end{corollary}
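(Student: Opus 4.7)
The plan is to recognise that this corollary is essentially a direct consequence of Lemma~\ref{P:subidlaws} combined with the observation that every domain or antidomain element is a sequential subidentity. Almost no new work is required; the corollary is a packaging result that specialises the subidentity laws to the algebraically important case of domain and antidomain.

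First I would establish the key observation that $d(R) \subseteq 1_\sigma$ and $a(R) \subseteq 1_\sigma$ for any multirelation $R$. The first inclusion is exactly Lemma~\ref{P:domprops}(1). For antidomain, Lemma~\ref{P:antidomprops1}(1) gives $a(R) = 1_\sigma \cap -d(R)$, so $a(R) \subseteq 1_\sigma$ is immediate; alternatively, Lemma~\ref{P:antidomprops2}(3) yields $a(R) \cup d(R) = 1_\sigma$, and together with Lemma~\ref{P:antidomprops1}(1) this again shows antidomain elements are subidentities.

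With this in hand, the three parts of the corollary follow by applying the three parts of Lemma~\ref{P:subidlaws} in order. For (1), if any of $R$, $S$, $T$ is a domain or antidomain element, it is a subidentity, so associativity $(R\cdot S)\cdot T = R\cdot (S\cdot T)$ follows from Lemma~\ref{P:subidlaws}(1). For (2), if $T$ is a domain or antidomain element, it is a subidentity, so the right distributivity of sequential over parallel composition follows from Lemma~\ref{P:subidlaws}(2). For (3), if $R$ is a domain or antidomain element, it is a subidentity, so left distributivity over union follows from Lemma~\ref{P:subidlaws}(3).

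There is no real obstacle here: the main technical content was already absorbed into Lemma~\ref{P:subidlaws}, and the work of connecting domain and antidomain to subidentities was done in Lemmas~\ref{P:domprops} and~\ref{P:antidomprops1}. The value of stating this as a corollary lies not in any new argument but in identifying precisely the algebraic profile that will be abstracted in the axiomatisation of concurrent dynamic algebra in the next section, where associativity, left distributivity, and the interaction law will be postulated to hold only when one participant is a domain or antidomain element.
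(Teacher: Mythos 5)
Your proof is correct and follows exactly the paper's own argument: the paper likewise observes that domain and antidomain elements are subidentities (citing Lemma~\ref{P:domprops}(1) for domain, and the complementation facts for antidomain) and then invokes the three parts of Lemma~\ref{P:subidlaws}. Your justification that antidomain elements are subidentities via Lemma~\ref{P:antidomprops1}(1) is if anything slightly more explicit than the paper's citation.
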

\begin{proof}
  By Lemma~\ref{P:domprops}(1) and~\ref{P:domprops}(3), domain and antidomain elements are subidentities. The
  results then follow by Lemma~\ref{P:subidlaws}.
\end{proof}

Domain and antidomain satisfy, of course, additional properties. We
have only presented those needed to justify the abstract domain and
antidomain axioms in the following section. Further ones can then be
derived by simple equational reasoning at the abstract level from
those axioms; a considerable simplification.


\section{Axioms for Multirelations with Domain and Antidomain}\label{S:axioms}

We have now collected sufficiently many facts about multirelations to
abstract the domain and antidomain laws from the previous section into
algebraic axioms. The approach is inspired by the axiomatisation of
domain semirings~\cite{DesharnaisStruth11} in the relational setting
and the weakening of these axioms to families of
near-semirings~\cite{DesharnaisStruth08}. In those approaches,
however, sequential composition is associative, which considerably
simplifies proofs and leads to simpler axiomatisations. Here we can
only assume associativity, interaction and left distributivity in the
presence of domain and antidomain elements, which holds in the
multirelational model according to Corollary~\ref{P:domassocinter} and
yields just the right assumptions for reconstructing concurrent
dynamic logic.

We keep the development modular so that it captures also
multirelational semirings and Kleene algebras without concurrent
composition. We expect that the axioms of Parikh's game logic can be
derived from that basis.

A \emph{proto-dioid} is a structure $(S,+,\cdot,0,1)$ such that
$(S,+,0)$ is a semilattice with least element $0$ and the following
additional axioms hold:
   \begin{gather*}
1\cdot x = x,\qquad x\cdot 1=x,\\
x\cdot y+x\cdot z \le x\cdot (y+z),\qquad (x+ y)\cdot z = x\cdot z+y\cdot z,\qquad 0\cdot x=0.
  \end{gather*}
  Here, $\le$ is the semilattice order defined, as usual, by $x\le
  y\Leftrightarrow x+y=y$.

  We do not include the weak associativity law $(x\cdot y)\cdot z \le
  x\cdot (y\cdot z)$, although it is present in multirelations
  (Lemma~\ref{P:seqlaws}(3)). It is independent from our axioms.

  A \emph{dioid} is a proto-dioid in which multiplication is
  associative for all elements and the left distributivity law $x\cdot
  (y+z) = x\cdot y+x\cdot z$ and the right annihilation law $x\cdot
  0=0$ hold. A dioid is \emph{commutative} if multiplication is commutative:
  $x\cdot y=y\cdot x$.

  A \emph{proto-trioid} is a structure
  $(S,+,\cdot,\|,0,1_\sigma,1_\pi)$ such that $(S,+,\cdot,0,1_\sigma)$
  is a proto-dioid and $(S,+,\|,0,1_\pi)$ is a commutative dioid.

In every proto-dioid, multiplication is left-isotone, $x\le
y\Rightarrow z\cdot x\le z\cdot y$.

A \emph{domain proto-dioid} (\emph{dp-dioid}) is a proto-dioid
expanded by a domain operation which satisfies the \emph{domain
  associativity} axiom
  \begin{equation*}
    x\cdot (y\cdot z) = (x\cdot y) \cdot z,
  \end{equation*}
if one of $x$, $y$ or $z$ is equal to $d(w)$ for some $w$, and the domain axioms
  \begin{gather*}
    x\le d(x)\cdot x,\qquad d(x\cdot y) = d(x\cdot d(y)), d(x+y) = d(x)+d(y),\\ d(x)\le 1_\sigma,\qquad
d(0)=0.
  \end{gather*}
  The first domain axiom is called \emph{left preservation} axiom, the
  second one \emph{locality} axiom, the third one \emph{additivity}
  axiom, the fourth one \emph{subidentity} axiom and the fifth one
  \emph{strictness} axiom.

  A \emph{domain proto-trioid} (\emph{dp-triod}) is a dp-dioid which
  is also a proto-trioid and satisfies the \emph{domain interaction}
  axiom and the \emph{domain concurrency} axioms
  \begin{gather*}
     (x\|y)\cdot d(z) = (x\cdot d(z))\|(y\cdot d(z)),\qquad
  d(x\|y) = d(x)\cdot d(y),\\
 d(x)\|d(y) = d(x)\cdot d(y).
\end{gather*}

In the presence of antidomain the axioms can be simplified further.
An \emph{antidomain proto-dioid} (\emph{ap-dioid}) is a proto-dioid
expanded by an antidomain operation which satisfies the
\emph{antidomain associativity} axiom
  \begin{equation*}
    x\cdot (y\cdot z) = (x\cdot y) \cdot z,
  \end{equation*}
  where $x$, $y$ or $z$ is equal to $a(w)$ for some $w$, and satisfies
  the antidomain axioms
  \begin{gather*}
    a(x)\cdot x= 0,\qquad a(x\cdot y)=a(x\cdot a(a(y))),\qquad a(x)+a(a(x))= 1_\sigma,\\ a(x)\cdot (y+z) =a(x)\cdot y+a(x)\cdot z.
  \end{gather*}
  The first antidomain axiom is called \emph{left annihilation} axiom,
  the second one \emph{locality} axiom, the third one
  \emph{complementation} axiom and the fourth one \emph{antidomain
    left distributivity} axiom.

  An \emph{antidomain proto-trioid} (\emph{ap-trioid}) is an ap-dioid
  which is also a proto-trioid and satisfies the \emph{antidomain
    interaction} and \emph{antidomain concurrency} axioms
  \begin{gather*}
    (x\|y)\cdot a(z) = (x\cdot a(z))\|(y\cdot a(z)),\qquad
a(x\|y)=a(x)+a(y),\\ a(x)\|a(y)=a(x)\cdot a(y).
\end{gather*}

We have verified irredundancy of all domain and antidomain axioms with
Isabelle. The full set of axioms of dp-trioids and ap-trioids (with
additional axioms for the Kleene star) is listed in Appendix~1.

We can now relate the multirelational model set up in
Sections~\ref{S:mulproperties}-\ref{S:multireldom} with the abstract
algebraic definitions. The theorem is stated only for the smallest
axiomatic class; it then holds automatically in all superclasses.

\begin{theorem}\label{P:dmramrmodel}
Let $X$ be a set.
 \begin{enumerate}
\item   The structure $(M(X),\cup,\cdot,\|,\emptyset,1_\sigma,1_\pi,d)$ forms a dp-trioid.
\item The structure $(M(X),\cup,\cdot,\|,\emptyset,1_\sigma,1_\pi,a)$
  forms an ap-trioid.
\end{enumerate} 
\end{theorem}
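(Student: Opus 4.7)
The plan is essentially bookkeeping: every axiom in Section~\ref{S:axioms} has already been verified concretely in Sections~\ref{S:mulproperties}--\ref{S:multireldom}, so the proof reduces to matching each axiom with the lemma that establishes it. I would carry this out in three stages. First, I would verify the common underlying proto-trioid structure; second, the dp-trioid expansion; third, the ap-trioid expansion.

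For the proto-trioid reduct $(M(X),\cup,\cdot,\|,\emptyset,1_\sigma,1_\pi)$, the semilattice $(M(X),\cup,\emptyset)$ is immediate from set theory, and its induced order coincides with $\subseteq$, so all algebraic inequalities below are to be read as set-inclusions. The proto-dioid axioms for $\cdot$ are then matched as follows: the two unit laws from Lemma~\ref{P:seqlaws}(1), right distributivity from Lemma~\ref{P:seqlaws}(4), left subdistributivity from Lemma~\ref{P:seqlaws}(5), and left annihilation from Lemma~\ref{P:seqlaws}(2). The commutative dioid axioms for $\|$ are covered by Lemma~\ref{P:conclaws}: associativity, commutativity, unit, annihilation, and left distributivity are items (1)--(5); right distributivity follows from commutativity.

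For part~(1), the dp-trioid, I would pair the domain axioms with Lemma~\ref{P:domprops}: left preservation $x\le d(x)\cdot x$ holds by the equality in~(2), locality is~(5), additivity is~(3), the subidentity axiom is~(1), and strictness is~(4). Domain associativity --- the restricted law $(x\cdot y)\cdot z=x\cdot(y\cdot z)$ when one factor is a domain element --- follows from Corollary~\ref{P:domassocinter}(1). The domain interaction axiom follows from Corollary~\ref{P:domassocinter}(2) instantiated at $T:=d(z)$, and the two domain concurrency axioms are Lemma~\ref{P:domprops}(6)--(7) (together with the observation in Section~\ref{S:subidlaws} that meet of subidentities coincides with sequential, and hence with parallel, composition). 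For part~(2), the ap-trioid, the antidomain axioms correspond to Lemma~\ref{P:antidomprops2}: left annihilation is~(1), locality is~(2), complementation is~(3), and the two antidomain concurrency axioms are~(5) and~(6). Antidomain left distributivity and antidomain associativity come from Corollary~\ref{P:domassocinter}(3) and~(1) respectively (antidomain elements are subidentities by Lemma~\ref{P:antidomprops2}(3)), and the antidomain interaction axiom is Corollary~\ref{P:domassocinter}(2) instantiated at $T:=a(z)$.

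I do not expect any genuine mathematical obstacle, since the earlier lemmas were chosen precisely to bankroll these axioms. The only care required concerns two formal mismatches between the abstract and concrete statements. First, the proto-dioid axiom $x\cdot y+x\cdot z\le x\cdot(y+z)$ is an inequality while Lemma~\ref{P:seqlaws}(5) states the same fact as $\subseteq$; one must confirm that the semilattice order on $M(X)$ is $\subseteq$, which is routine. Second, the antidomain locality axiom is phrased as $a(x\cdot y)=a(x\cdot a(a(y)))$ whereas Lemma~\ref{P:antidomprops2}(2) reads $a(R\cdot S)=a(R\cdot d(S))$; the two agree by Lemma~\ref{P:antidomprops1}(2), which identifies $d(R)$ with $a(a(R))$. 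Once these identifications are in place, the verification of all axioms is a direct citation.
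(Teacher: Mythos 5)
Your proposal is correct and follows exactly the same route as the paper's own proof, which simply cites Lemma~\ref{P:seqlaws}, Lemma~\ref{P:conclaws}, Lemma~\ref{P:domprops}, Lemma~\ref{P:antidomprops2} and Corollary~\ref{P:domassocinter}; your version merely spells out the axiom-by-lemma matching (including the two formal identifications $\subseteq\,=\,\le$ and $d=a\circ a$) that the paper leaves implicit.
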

\begin{proof}
  The union axioms follow from set theory.  The remaining proto-dioid
  axioms of sequential composition have been verified in
  Lemma~\ref{P:seqlaws}; the commutative dioid axioms of concurrent
  composition in Lemma~\ref{P:conclaws}; the domain and antidomain
  axioms in Lemma~\ref{P:domprops}, Lemma~\ref{P:antidomprops2} and
  Corollary~\ref{P:domassocinter}.
\end{proof}

We call the structure
$(M(X),\cup,\cdot,\|,\emptyset,1_\sigma,1_\pi,d)$ the \emph{full
  multirelational dp-trioid} and the structure
$(M(X),\cup,\cdot,\|,\emptyset,1_\sigma,1_\pi,a)$ the \emph{full
  multirelational ap-trioid} over $X$. Since dp-trioids and
ap-trioids are equational classes, they are closed under subalgebras,
products and homomorphic images. Hence in particular any subalgebra of
a full dp-trioid is a dp-trioid and any subalgebra of a full ap-trioid
is an ap-trioid.


\section{Modal Operators}\label{S:modalops}

Following Desharnais and Struth~\cite{DesharnaisStruth11}, we define
modal box and diamond operators from domain and antidomain. In every
dp-dioid we define
\begin{equation*}
\langle x\rangle y = d(x\cdot
    y).
\end{equation*}
This captures the intuition behind the Kripke-style semantics of modal
logics. As explained in Section~\ref{S:subidlaws}, sequential
multiplication of a multirelation by a sequential subidentity from the
left and right forms an input or output restriction of that
multirelation. Therefore, $d(x\cdot y)=d(x\cdot d(y))$ abstractly
represents a generalised multirelational preimage of the subidentity
$d(y)$ under the element $x$. In other words, $\langle x\rangle
y=\langle x\rangle d(y)$ yields the set of all elements from which,
with $x$, one may reach a set which is a subset of $d(y)$. This can be
checked readily in the multirelational model: if $P\subseteq X\times
2^X$ is a sequential subidentity and $R\subseteq X\times 2^X$ a
multirelation, then
\begin{equation*}
  \langle R\rangle P = \{G_\iota(a)\mid \exists B.\ (a,B)\in R\wedge G_\iota(B)\subseteq P\}.
\end{equation*}
This abstractly represents the set of all states $A\subseteq X$ from
which $R$ may reach the set $B$, which is a subset of the set
represented by the multirelation $P$. In particular, if all output
sets of the multirelation are singletons, the case of a relational
preimage is recovered. The definition of multirelational diamonds
thus generalises the relational Kripke semantics in a natural
way.

In ap-dioids the situation is similar.  Boxes can now be
defined by De Morgan duality as well. In accordance with the
multirelational model (Lemma~\ref{P:antidomprops1}(2)) we show in
Section~\ref{S:amrdioids} that $d=a\circ a$. Then
\begin{equation*}
  \langle x\rangle y =d(x\cdot y)=a(a(x\cdot y)),\qquad [x]y = a(x\cdot a(y)).
\end{equation*}
Intuitively, one might expect that $[x]y=[x]d(y)$ models the set of
all states from which, whith $x$, one must reach sets of elements
which are all in $d(y)$. An analysis in the multirelational model,
however, shows a subtly different behaviour:
\begin{align*}
  [R]P &= \{G_\iota(a)\mid \neg\exists B.\ (a,B)\in R\cdot a(P)\}\\
  &= \{G_\iota(a)\mid \neg\exists B.\ (a,B)\in R\wedge G_\iota(B)\subseteq a(P)\}\\
  &= \{G_\iota(a)\mid \neg\exists B.\ (a,B)\in R\wedge G_\iota(B)\cap P=\emptyset\}\\
  &= \{G_\iota(a)\mid \forall B.\ (a,B)\in R\Rightarrow G_\iota(B)\cap P\neq\emptyset\}.
\end{align*}
The condition $(a,B)\in R\Rightarrow G_\iota(B)\cap P\neq\emptyset$,
which is enforced by De Morgan duality, is weaker than what we
described above. At least the standard relational case is contained in
this definition. Goldblatt~\cite{Goldblatt92}, following Nerode and
Wijesekera~\cite{NerodeWijesekera90}, has therefore argued for
replacing this condition by the more intuitive condition $(a,B)\in
R\Rightarrow G_\iota(B)\subseteq P$, which breaks De Morgan
duality. Here we follow Peleg's De Morgan dual definition and leave
the algebraisation of its alternative for future work.


\section{The Structure of DP-Trioids}\label{S:dmrdioids}

This section presents the basic laws of dp-dioids and
dp-trioids. Section~\ref{S:diaaxioms} shows that algebraic variants of
the axioms of concurrent dynamic logic, except the star axiom, can be
derived in this setting. The star is then treated in
Section~\ref{S:star}.

We write $d(S)$ for the image of the carrier set $S$ under the domain
operation $d$ and call this set the set of all \emph{domain
  elements}. We often write $p,q,r,\dots$ for domain elements. 

The following identity is immediate from locality and properties of
$1_\sigma$.
\begin{lemma}\label{P:domretraction}
  In every dp-dioid, operation $d$ is a retraction: $d\circ d =d$.
\end{lemma}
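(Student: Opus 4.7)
The plan is to derive $d(d(x))=d(x)$ by a single application of the locality axiom, specialized so that the left factor becomes the sequential unit. Concretely, I would start from the locality axiom $d(x\cdot y)=d(x\cdot d(y))$ and substitute $1_\sigma$ for $x$, yielding
\begin{equation*}
d(1_\sigma\cdot y)=d(1_\sigma\cdot d(y)).
\end{equation*}
Both sides then collapse under the left unit law $1_\sigma\cdot z=z$ (one of the proto-dioid axioms), giving exactly $d(y)=d(d(y))$ for every $y$ in the carrier. Renaming $y$ to $x$ produces the claim $d\circ d=d$.

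There is essentially no obstacle here: the only ingredients are locality and the left-unit property of $1_\sigma$, which is why the statement is flagged as immediate. The one thing to check, given that associativity is restricted in a proto-dioid, is that no hidden associativity is needed in the manipulation; and indeed the argument only rewrites the expressions $1_\sigma\cdot y$ and $1_\sigma\cdot d(y)$ using the unit axiom, so the restricted associativity of sequential composition never enters. The dual statement $a\circ a\circ a=a$ would of course need more care, but for $d$ the retraction property drops out in one step.
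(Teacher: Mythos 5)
Your proof is correct and is exactly the argument the paper intends: it states that the lemma is ``immediate from locality and properties of $1_\sigma$'', which is precisely your instantiation of $d(x\cdot y)=d(x\cdot d(y))$ at $x=1_\sigma$ followed by the left unit law. Your remark that no restricted associativity is invoked is also accurate.
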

The next fact is a general property of retractions. Here it gives a
syntactic characterisation of domain elements as fixpoints of
$d$ (cf.~\cite{DesharnaisStruth11}).
\begin{proposition}\label{P:retractionlemma}
  If $S$ is a dp-dioid, then $x\in d(S) \Leftrightarrow d(x)=x$.
\end{proposition}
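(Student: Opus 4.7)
The plan is to prove this as a straightforward consequence of Lemma~\ref{P:domretraction}, which establishes that $d$ is a retraction, i.e.\ $d\circ d = d$. This is a general fact about retractions on any set, and the axioms of dp-dioids enter only through that lemma.

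For the forward direction, I would assume $x \in d(S)$, so that $x = d(y)$ for some $y \in S$. Applying $d$ to both sides and using $d\circ d = d$ gives $d(x) = d(d(y)) = d(y) = x$, as required.

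For the backward direction, if $d(x) = x$, then $x$ is in the range of $d$ by definition, witnessed by $x$ itself: $x = d(x) \in d(S)$.

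There is essentially no obstacle here; both directions are one-line computations once Lemma~\ref{P:domretraction} is in hand. The only thing worth being careful about is not to invoke anything beyond the retraction property, since the statement is framed as a purely set-theoretic characterisation of the image of $d$ as its fixpoint set.
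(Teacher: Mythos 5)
Your proof is correct and matches the paper's approach exactly: the paper presents this as a general property of retractions, following immediately from Lemma~\ref{P:domretraction} ($d\circ d=d$), which is precisely the two-line argument you give. Nothing further is needed.
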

This characterisation helps checking closure properties of domain
elements. We first prove some auxiliary properties
(cf.~Appendix~2).

\begin{lemma}\label{P:protodomprops}
In every dp-dioid,
  \begin{enumerate}
\item $x\le y\Rightarrow d(x)\le d(y)$,
  \item $d(x)\cdot x = x$,
\item $d(x\cdot y)\le d(x)$,
\item $x\le 1_\sigma \Rightarrow x \le d(x)$,
\item $d(d(x)\cdot y)=d(x)\cdot d(y)$.
\end{enumerate}
\end{lemma}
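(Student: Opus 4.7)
The plan is to derive each item by combining a small number of domain axioms with the one- and two-sided isotonicity of multiplication that follows from right distributivity and left subdistributivity in any proto-dioid. Throughout I would use the retraction identity $d\circ d = d$ (Lemma~\ref{P:domretraction}) freely, and recall the convention of the paper whereby right-isotonicity means monotonicity in the left argument and left-isotonicity means monotonicity in the right argument.

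For (1), assume $x\le y$, so that $x+y=y$; applying $d$ and the additivity axiom gives $d(y)=d(x)+d(y)$, whence $d(x)\le d(y)$. For (2), the left preservation axiom yields $x\le d(x)\cdot x$, while the subidentity axiom $d(x)\le 1_\sigma$ combined with right-isotonicity gives $d(x)\cdot x\le 1_\sigma\cdot x=x$. For (3), locality rewrites $d(x\cdot y)$ as $d(x\cdot d(y))$; using $d(y)\le 1_\sigma$, left-isotonicity gives $x\cdot d(y)\le x\cdot 1_\sigma = x$, so (1) yields $d(x\cdot y)\le d(x)$. For (4), if $x\le 1_\sigma$ then left preservation and left-isotonicity chain as $x\le d(x)\cdot x\le d(x)\cdot 1_\sigma = d(x)$.

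The main obstacle is (5), where the absence of full associativity and of full left distributivity rules out routine manipulation; the trick is to pivot through the idempotency of domain elements, combined with two-sided isotonicity. By locality, applied with $d(x)$ in place of $x$, one has $d(d(x)\cdot y)=d(d(x)\cdot d(y))$, so it suffices to prove $d(p\cdot q)=p\cdot q$ where $p=d(x)$ and $q=d(y)$. Set $r=d(p\cdot q)$. The upper bound $r\le p\cdot q$ I would obtain as follows: by (3), $r\le d(p)=p$; by right-isotonicity $p\cdot q\le 1_\sigma\cdot q=q$, so by (1) also $r\le d(q)=q$; next, $r$ lies in the image of $d$, so $d(r)=r$ and hence by (2) we have $r=d(r)\cdot r=r\cdot r$; finally two applications of isotonicity give $r=r\cdot r\le p\cdot r\le p\cdot q$. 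The lower bound $p\cdot q\le r$ is immediate from (4), since $p\cdot q\le 1_\sigma\cdot q=q\le 1_\sigma$. Combining the two inequalities yields (5).
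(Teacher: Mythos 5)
Your proof is correct and follows essentially the same route as the paper's: items (1)--(4) are the standard derivations the paper gives, and for (5) you use exactly the same ingredients --- locality to replace $y$ by $d(y)$, idempotency of the domain element $d(d(x)\cdot y)$ obtained from (2) and the retraction $d\circ d=d$, the upper bounds supplied by (1) and (3) together with the two isotonicity laws, and (4) for the reverse inequality. The only cosmetic difference is that you first isolate the claim $d(p\cdot q)=p\cdot q$ for domain elements $p,q$ before running the square-and-bound argument, whereas the paper carries the original $y$ through the computation.
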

The \emph{domain export} law (5) is instrumental in proving further
domain laws.

\begin{proposition}\label{P:domaindl}
  Let $S$ be a dp-dioid. Then $d(S)$ is a subalgebra of $S$
  which forms a bounded distributive lattice.
\end{proposition}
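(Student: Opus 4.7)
The plan is to verify four things: that $d(S)$ contains $0$ and $1_\sigma$ and is closed under $+$ and $\cdot$; that the restrictions of $+$ and $\cdot$ are join and meet of a lattice on $d(S)$; that $0$ and $1_\sigma$ are the bounds; and that the lattice is distributive. Throughout I would use the fixpoint characterisation of Proposition~\ref{P:retractionlemma}, so that membership in $d(S)$ is checked by $d(x)=x$.

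For closure, strictness gives $0=d(0)\in d(S)$. For $1_\sigma$, the subidentity axiom gives $d(1_\sigma)\le 1_\sigma$, while Lemma~\ref{P:protodomprops}(4) applied to $x=1_\sigma$ gives $1_\sigma\le d(1_\sigma)$, so $d(1_\sigma)=1_\sigma$. Closure under $+$ is immediate from additivity. For closure under $\cdot$, Lemma~\ref{P:protodomprops}(5) and the retraction $d\circ d=d$ give $d(d(p)\cdot d(q))=d(p)\cdot d(d(q))=d(p)\cdot d(q)$.

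Next I would show that $\cdot$ realises the meet on $d(S)$. Left and right isotonicity are available in any proto-dioid (as noted after Lemma~\ref{P:seqlaws}), so $d(p),d(q)\le 1_\sigma$ yields $d(p)\cdot d(q)\le d(p)$ and $d(p)\cdot d(q)\le d(q)$. For the greatest-lower-bound part, if $r\in d(S)$ satisfies $r\le d(p)$ and $r\le d(q)$, then $r=d(r)$ and Lemma~\ref{P:protodomprops}(2) gives $r=d(r)\cdot r=r\cdot r$; two applications of isotonicity produce $r\cdot r\le d(p)\cdot d(q)$. Every associativity used in this chain has a domain element as one of its factors, so the domain associativity axiom applies. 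Uniqueness of meet then gives commutativity, associativity, and idempotence of $\cdot$ on $d(S)$ automatically, turning $d(S)$ into a lattice. The bounds $0$ (from the underlying semilattice) and $1_\sigma$ (since $d(x)\le 1_\sigma$ for all $x$) are already in place. Distributivity follows by applying right distributivity $(x+y)\cdot z = x\cdot z+y\cdot z$ with $x,y,z\in d(S)$ and then reordering factors via the commutativity of meet; once one distributive law holds in a lattice, so does the other.

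The main obstacle will be the greatest-lower-bound step. Without full associativity of $\cdot$ and without left distributivity, the usual Desharnais--Struth argument for domain semirings does not transfer verbatim, and everything must be routed through isotonicity, with each associativity rewrite justified by the presence of a domain element. Once this is handled, the remaining items are straightforward bookkeeping from the dp-dioid axioms and Lemma~\ref{P:protodomprops}.
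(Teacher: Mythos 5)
Your proof is correct, and its middle section takes a genuinely different route from the paper's. The closure computations ($d(0)=0$, $d(1_\sigma)=1_\sigma$ via the subidentity axiom and Lemma~\ref{P:protodomprops}(4), additivity for $+$, domain export plus the retraction for $\cdot$) and the final distributivity step (right distributivity plus commutativity of meet, then lattice duality) coincide with the paper's. Where you diverge is in how the lattice structure itself is established: the paper proceeds purely equationally, proving commutativity, idempotence and associativity of $\cdot$ on $d(S)$ one by one (each step justified by domain export, Lemma~\ref{P:protodomprops}(2) and domain associativity) and then verifying both absorption laws to exhibit $(d(S),+,\cdot)$ as a lattice in the algebraic sense. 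You instead show order-theoretically that $d(p)\cdot d(q)$ is the greatest lower bound of $d(p)$ and $d(q)$ in $(d(S),\le)$ --- the lower-bound half from isotonicity and $d(x)\le 1_\sigma$, the glb half from $r=r\cdot r\le d(p)\cdot d(q)$ --- and then harvest commutativity, associativity, idempotence and absorption for free from uniqueness of meets. Your glb chain in fact needs no associativity rewrites at all (your cautionary remark about domain associativity is harmless but vacuous there), which makes your argument somewhat more robust than the paper's equational commutativity computation, whose regroupings have to be checked carefully against the restricted associativity axiom. Both routes rest on the same underlying lemmas, and your version of the $d(1_\sigma)=1_\sigma$ step cites the correct part of Lemma~\ref{P:protodomprops}.
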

\begin{proof}
First we check that $d(S)$ is closed under the operations, using the fixpoint property $d(x)=x$ from Proposition~\ref{P:retractionlemma}.
\begin{itemize}
\item $d(0)=0$ is an axiom.
\item $d(1_\sigma) = 1_\sigma$ follows from Lemma~\ref{P:protodomprops}(1).
\item $d(d(x)+d(y))=d(x)+d(y)$ follows from additivity and idempotency
  of domain.
\item $d(d(x)\cdot d(y))=d(x)\cdot d(y)$ follows from domain export
  (Lemma~\ref{P:protodomprops}(5)) and locality.
\end{itemize}
Next we verify that the subalgebra forms a distributive lattice with
least element $0$ and greatest element $1_\sigma$.
\begin{itemize}
\item It is obvious that $1_\sigma$ is the greatest and $0$ the least
  element of $d(S)$.
\item Associativity of domain elements follows from the dp-dioid
  axioms.
\item $d(x)\cdot d(y)=d(y)\cdot d(x)$. We show that $d(x)\cdot d(y)\le
  d(y)\cdot d(x)$; the converse direction being symmetric.
  \begin{equation*}
    d(x)\cdot d(y) = d(d(x)\cdot d(y))\cdot d(x)\cdot d(y)=d(x)\cdot d(y)\cdot d(y)\cdot d(x) \le d(y)\cdot d(x),
  \end{equation*}
  using Lemma~\ref{P:protodomprops}(2), domain export, associativity
  of domain elements and the fact that domain elements are
  subidentities.
\item $d(x)\cdot d(x)=d(x)$ holds since 
\begin{equation*}
d(x)=d(d(x)\cdot x)=d(x)\cdot
  d(x)
\end{equation*}
 by Lemma~\ref{P:protodomprops}(2) and domain export.
\end{itemize}
It follows that $(d(S),\cdot 0,1)$ is a bounded meet semilattice with
meet operation $\cdot$. It is also clear that $(d(S),+,0,1)$ is a
bounded join semilattice. Hence it remains to verify the absorption
and distributivity laws.
\begin{itemize}
  \item For $d(x)\cdot(d(x)+d(y))= d(x)$, we calculate
    \begin{align*}
      d(x)\cdot (d(y)+d(z))&= (d(y)+d(x))\cdot d(x)\\
& = d(y)\cdot d(x)+ d(x)\cdot d(x)\\
& = d(y)\cdot d(x) + d(x)\\
&=d(x)
    \end{align*}
    by commutativity and idempotence of meet as well as
    distributivity.
  \item $d(x)+d(x)\cdot d(y)= d(x)$. This is the last step of the
    previous proof.

  \item $d(x)\cdot (d(y)+d(z))=d(x)\cdot d(y)+d(x)\cdot d(z)$ is
    obvious from commutativity of meet and right distributivity.

  \item The distributivity law $d(x)+d(y)\cdot d(z)=(d(x)+d(y))\cdot
    (d(x)+d(z))$ holds by lattice duality.
\end{itemize}
\end{proof}
The next lemma presents additional domain laws; it is proved in
Appendix~2.
\begin{lemma}\label{P:domprops2}
In every dp-dioid,
\begin{enumerate}
 \item $x\le d(y)\cdot x \Leftrightarrow d(x)\le d(y)$,
\item $d(x)\cdot 0 = 0$,
\item $d(x)=0\Leftrightarrow x=0$,
\item $d(x)\le d(x+y)$.
\end{enumerate}
\end{lemma}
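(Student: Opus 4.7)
The plan is to dispatch the four items in order, each reducing to a short calculation that uses the domain-element machinery already set up in Lemma~\ref{P:protodomprops} and Proposition~\ref{P:domaindl}. Throughout, I would be careful not to invoke associativity or left distributivity for arbitrary elements, since these are restricted in a proto-dioid and hold only when one of the factors is a domain element.

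For (1), the forward direction: apply $d$ to $x \le d(y)\cdot x$ and use isotonicity of $d$ (Lemma~\ref{P:protodomprops}(1)) to obtain $d(x) \le d(d(y)\cdot x)$. Then domain export (Lemma~\ref{P:protodomprops}(5)) gives $d(d(y)\cdot x) = d(y)\cdot d(x)$, and since $d(y)\cdot d(x)$ coincides with the lattice meet of $d(y)$ and $d(x)$ in $d(S)$ by Proposition~\ref{P:domaindl}, it is bounded above by $d(y)$, yielding $d(x) \le d(y)$. Conversely, assume $d(x) \le d(y)$ and multiply by $x$ on the right using isotonicity of $\cdot$ in its left argument (which follows from right distributivity in a proto-dioid) to obtain $d(x)\cdot x \le d(y)\cdot x$; the left-hand side equals $x$ by Lemma~\ref{P:protodomprops}(2).

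For (2), note that $d(x) \le 1_\sigma$ by the subidentity axiom, so isotonicity in the left argument yields $d(x)\cdot 0 \le 1_\sigma\cdot 0 = 0$ via the left unit law, whence equality. For (3), the strictness axiom $d(0) = 0$ handles the direction $x = 0 \Rightarrow d(x) = 0$; conversely, $d(x) = 0$ together with $x = d(x)\cdot x$ (Lemma~\ref{P:protodomprops}(2)) and the left annihilation $0\cdot x = 0$ force $x = 0$. Finally, (4) is immediate from the additivity axiom $d(x+y) = d(x) + d(y)$, which directly gives $d(x) \le d(x) + d(y) = d(x+y)$.

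None of the four arguments is genuinely difficult; the only step that requires some care is the forward direction of (1), where a naive approach would be tempted to associate $d(y)\cdot d(x)\cdot x$ freely. Associativity is in fact legitimate there because $d(y)$ is a domain element, but the cleaner route outlined above sidesteps associativity entirely and relies only on isotonicity of $d$, domain export, and the distributive-lattice structure on $d(S)$.
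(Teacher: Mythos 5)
Your proof is correct and, for part (1) --- the only part the paper proves explicitly --- it follows essentially the same route: apply $d$ to the hypothesis, use isotonicity of $d$ and domain export to get $d(x)\le d(y)\cdot d(x)\le d(y)$, and obtain the converse by multiplying $d(x)\le d(y)$ on the right by $x$ and invoking $d(x)\cdot x = x$. Your arguments for (2)--(4), which the paper omits, are also correct and rely only on axioms and facts (right distributivity, $1_\sigma\cdot 0=0$, strictness, $0\cdot x=0$, additivity) that are available in every dp-dioid.
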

The \emph{least left preservation} law (1) is a characteristic
property of domain operations. It states that $d(x)$ is the least
domain element that satisfies the inequality $x\le p\cdot x$. Law (2)
shows that $0$ is a right annihilator in the subalgebra of domain
elements.

Next we consider the interaction between domain and the parallel operations.

\begin{lemma}\label{P:dmrdomprops}
In every dp-trioid,
\begin{enumerate}
  \item $d(1_\pi)=1_\sigma$,
\item $d(x\|y)=d(x)\|d(y)$,
\item $d(d(x)\|d(y)) = d(x)\|d(y)$,
\item $d(x)\|d(x)=d(x)$.
  \end{enumerate}
\end{lemma}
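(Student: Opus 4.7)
The plan is to show all four identities by straightforward equational reasoning from the dp-trioid axioms, particularly the two domain concurrency axioms $d(x\|y) = d(x)\cdot d(y)$ and $d(x)\|d(y) = d(x)\cdot d(y)$, combined with basic facts already established: $d(1_\sigma) = 1_\sigma$ and idempotence of $d$ (Lemma~\ref{P:domretraction}, Proposition~\ref{P:domaindl}), and the fact that $1_\pi$ is the unit of $\|$ coming from the proto-trioid structure.

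For (1), the key is that $1_\pi$ is the unit of $\|$, so $1_\sigma\|1_\pi = 1_\sigma$. Applying $d$ and using domain concurrency gives $d(1_\sigma) = d(1_\sigma)\cdot d(1_\pi)$. Since $d(1_\sigma) = 1_\sigma$ (by Proposition~\ref{P:domaindl}), this reduces to $1_\sigma = 1_\sigma\cdot d(1_\pi) = d(1_\pi)$.

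For (2), I would simply chain the two domain concurrency axioms: $d(x\|y) = d(x)\cdot d(y) = d(x)\|d(y)$. Item (3) then follows from (2) combined with idempotence of $d$: $d(d(x)\|d(y)) = d(d(x))\|d(d(y)) = d(x)\|d(y)$. For (4), I would specialize $d(x)\|d(y) = d(x)\cdot d(y)$ at $y = x$ to obtain $d(x)\|d(x) = d(x)\cdot d(x)$, and then use idempotence of meet on the subalgebra of domain elements (Proposition~\ref{P:domaindl}) to conclude $d(x)\cdot d(x) = d(x)$.

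There is essentially no obstacle here: once the domain concurrency axioms are in place, every part is a two- or three-step rewrite. The only minor subtlety is remembering to invoke the right preparatory facts ($1_\pi$ as $\|$-unit for (1); $d\circ d = d$ for (3); idempotence of $\cdot$ on $d(S)$ for (4)), but these are all already in hand from the preceding section.
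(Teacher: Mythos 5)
Your proof is correct and follows essentially the same route as the paper's: part (1) via $1_\sigma = d(1_\sigma) = d(1_\sigma\|1_\pi) = d(1_\sigma)\cdot d(1_\pi)$, part (2) by chaining the two domain concurrency axioms, and part (4) by specialising $d(x)\|d(y)=d(x)\cdot d(y)$ at $y=x$ and using idempotence of meet on $d(S)$. The only cosmetic difference is in (3), where the paper rewrites $d(d(x)\|d(y))$ to $d(d(x)\cdot d(y))$ and invokes meet closure, while you instantiate (2) at domain elements and use $d\circ d=d$; both are immediate from the same established facts.
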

See Appendix~2 for proofs. By (3), the subalgebra of domain elements
is also closed with respect to parallel products, which are mapped to
meets. Property (4) follows from the fact that parallel products of
domain elements, hence of subidentities, are meets.

At the end of this section we characterise domain elements in terms of
a weak notion of complementation,
following~\cite{DesharnaisStruth11}. This further describes the
structure of domain elements within the subalgebra of subidentities.

\begin{proposition}\label{P:compprop}
  Let $S$ be a dp-dioid. Then $x\in d(S)$ if $x+y=1_\sigma$ and
  $y\cdot x=0$ hold for some $y\in S$.
\end{proposition}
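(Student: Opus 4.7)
The plan is to apply Proposition~\ref{P:retractionlemma} and show that the hypotheses force $d(x)=x$, which I will split into the two inequalities $x\le d(x)$ and $d(x)\le x$.

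For the first inequality, I note that $x+y=1_\sigma$ immediately gives $x\le 1_\sigma$ from the semilattice order ($x\le x+y$), so Lemma~\ref{P:protodomprops}(4) directly yields $x\le d(x)$.

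The interesting direction is $d(x)\le x$. The key preparatory step is to upgrade the hypothesis $y\cdot x=0$ to $y\cdot d(x)=0$: applying $d$ and using strictness gives $d(y\cdot x)=d(0)=0$, locality rewrites this as $d(y\cdot d(x))=0$, and Lemma~\ref{P:domprops2}(3) removes the outer $d$. With this in hand, I would compute
\begin{equation*}
d(x)=1_\sigma\cdot d(x)=(x+y)\cdot d(x)=x\cdot d(x)+y\cdot d(x)=x\cdot d(x),
\end{equation*}
and conclude $d(x)=x\cdot d(x)\le x\cdot 1_\sigma=x$ using the subidentity axiom $d(x)\le 1_\sigma$ together with isotonicity of multiplication in its right argument (which follows from left subdistributivity).

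The main subtlety is that proto-dioids only have \emph{right} distributivity; one cannot expand $d(x)\cdot(x+y)$ as $d(x)\cdot x+d(x)\cdot y$, so the most obvious attempt is blocked. The trick is to multiply by $1_\sigma=x+y$ on the left of $d(x)$ rather than on the right, so that only right distributivity is needed in the crucial step. Everything else is a routine application of the dp-dioid axioms together with locality and strictness of $d$.
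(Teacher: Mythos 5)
Your proof is correct and follows essentially the same route as the paper's: both hinge on expanding $d(x)=(x+y)\cdot d(x)$ by right distributivity and killing the $y\cdot d(x)$ summand via locality and strictness. You merely package the auxiliary steps differently, obtaining $x\le d(x)$ from Lemma~\ref{P:protodomprops}(4) and $y\cdot d(x)=0$ via Lemma~\ref{P:domprops2}(3), where the paper instead proves $x\cdot d(x)=x$ and $y\cdot d(x)=0$ directly from left preservation.
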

\begin{proof}
  Fix $x$ and let $p$ be an element which satisfies $x+p=1_\sigma$ and
  $p\cdot x=0$. We must show that $d(x)=x$.
  \begin{itemize}
  \item  $x\cdot d(x)\le x$ since $d(x)\le 1_\sigma$ and $x = (x+p)\cdot x
  = x\cdot x = x\cdot d(x)\cdot x \le x\cdot d(x)$, whence $x\cdot
  d(x)=x$.

  \item $p\cdot d(x) = d(p\cdot d(x))\cdot p \cdot d(x) = d(p\cdot
  x)\cdot p\cdot d(x)= d(0)\cdot x \cdot d(x)= 0$.
  \end{itemize}
Therefore $d(x)=(x+p)\cdot d(x) = x\cdot d(x)+p\cdot d(x) = x$.
\end{proof}

We call an element $y$ of a dp-dioid a \emph{complement} of an element
$x$ whenever $x+y=1_\sigma$, $y\cdot x=0$ and $x\cdot y=0$ hold. Thus,
if $y$ is a complement of $x$, then $x$ is a completment of $y$. We
call an element \emph{complemented} if it has a complement. The set of
all complemented elements of a dp-dioid $S$ is denoted
$B_S$.

\begin{corollary}\label{P:comppropcor}
Let $S$ be a dp-dioid. Then $B_S\subseteq d(S)$.
\end{corollary}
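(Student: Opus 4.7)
The plan is to observe that this corollary is essentially an immediate consequence of Proposition~\ref{P:compprop}, so the proof amounts to unpacking the definition of $B_S$ and applying that proposition.

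First I would take an arbitrary $x\in B_S$. By definition of a complemented element, there exists $y\in S$ such that the three equations $x+y=1_\sigma$, $y\cdot x=0$, and $x\cdot y=0$ all hold. The hypotheses of Proposition~\ref{P:compprop} require only the first two of these, namely $x+y=1_\sigma$ and $y\cdot x=0$, so they are satisfied a fortiori. Applying that proposition with this choice of $y$ yields $x\in d(S)$, and since $x$ was arbitrary this gives $B_S\subseteq d(S)$.

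There is essentially no obstacle here: the third defining condition $x\cdot y=0$ of complementation is simply unused, reflecting the fact that the notion of complement in $B_S$ is symmetric in $x$ and $y$ whereas Proposition~\ref{P:compprop} already extracts membership in $d(S)$ from the weaker one-sided hypotheses. The only thing worth flagging is this asymmetry, which explains why $B_S\subseteq d(S)$ is established via a strictly stronger assumption than Proposition~\ref{P:compprop} itself needs; whether the reverse inclusion $d(S)\subseteq B_S$ also holds is a separate question not addressed by this corollary.
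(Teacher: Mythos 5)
Your proof is correct and matches the paper's intended argument: the corollary is stated without proof precisely because it follows immediately from Proposition~\ref{P:compprop}, since any complement $y$ of $x$ satisfies in particular $x+y=1_\sigma$ and $y\cdot x=0$. Your observation that the third condition $x\cdot y=0$ is unused is accurate and consistent with the paper's remark (via the later example) that the inclusion $B_S\subseteq d(S)$ can be strict.
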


\begin{lemma}\label{P:balemma}
  Let $S$ be a dp-dioid. Then $B_S$ is a boolean algebra.
\end{lemma}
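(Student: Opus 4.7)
The plan is to exploit Corollary~\ref{P:comppropcor} together with Proposition~\ref{P:domaindl}: since $B_S\subseteq d(S)$ and $(d(S),+,\cdot,0,1_\sigma)$ is a bounded distributive lattice in which $\cdot$ plays the role of meet, $B_S$ sits inside a lattice where all the usual boolean-algebra calculations become available. The task therefore reduces to showing closure of $B_S$ under $+$, $\cdot$ and complementation; distributivity and boundedness then come for free.

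First, I would observe that $0$ and $1_\sigma$ are mutual complements, so they both lie in $B_S$. Next, the defining clauses $x+y=1_\sigma$, $xy=yx=0$ of ``$y$ is a complement of $x$'' are symmetric in $x$ and $y$; hence if $x\in B_S$ then its complement $y$ also lies in $B_S$, and so in $d(S)$ by Corollary~\ref{P:comppropcor}. The complement is moreover unique, because in any distributive lattice any two complements $y,z$ of $x$ satisfy $y=y(x+z)=yx+yz=yz$, and symmetrically $z=yz$; this furnishes a well-defined involutive complementation operation on $B_S$.

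For closure under the binary operations, suppose $x_1,x_2\in B_S$ with complements $y_1,y_2$ (all four in $d(S)$). I would verify, working entirely inside $d(S)$, that $y_1\cdot y_2$ is a complement of $x_1+x_2$ and that $y_1+y_2$ is a complement of $x_1\cdot x_2$. The join condition $(x_1+x_2)+y_1y_2=1_\sigma$ follows from the distributive-lattice identity $a+bc=(a+b)(a+c)$ applied with $a=x_1+x_2$, together with $x_i+y_i=1_\sigma$; the meet condition $(x_1+x_2)(y_1y_2)=0$ reduces via right distributivity and commutativity of $\cdot$ on $d(S)$ to $x_iy_i=0$, and commutativity of $\cdot$ on $d(S)$ then yields the reversed product. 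The second claim follows by lattice duality.

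The conceptual content is slight; the main obstacle is bookkeeping. In a generic dp-dioid neither associativity nor two-sided distributivity is available, so each manipulation must be justified by first migrating the relevant elements into $d(S)$, where Proposition~\ref{P:domaindl} guarantees a well-behaved distributive lattice and Corollary~\ref{P:domassocinter} supplies the needed associativity and left distributivity. Once everything lives safely inside $d(S)$, the remaining steps collapse to textbook boolean-algebra arithmetic.
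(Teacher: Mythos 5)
Your proposal is correct and follows essentially the same route as the paper: show that if $y_i$ complements $x_i$ then $y_1\cdot y_2$ complements $x_1+x_2$ and $y_1+y_2$ complements $x_1\cdot x_2$ (using that complemented elements lie in the distributive lattice $d(S)$, so commutativity, idempotence and both distributivity laws are available), and then conclude that $B_S$ is a complemented bounded distributive sublattice, hence a boolean algebra by uniqueness of complements. The only cosmetic difference is that you invoke the dual distributivity identity $a+bc=(a+b)(a+c)$ where the paper expands the sum directly to $(x_1+y_1)\cdot(x_2+y_2)$, which is the same computation.
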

\begin{proof}
  Since complemented elements are domain elements, they are idempotent
  and commutative. We use these properties to show that sums and
  products of complemented elements are complemented. More precisely,
  if $y_1$ is a complement of $x_1$ and $y_2$ a complement of
  $x_2$, then $y_1\cdot y_2$ is a complement of $x_1+x_2$ and
  $y_1+y_2$ a complement of $x_1\cdot x_2$. First,
\begin{align*}
  x_1+x_2+y_1\cdot y_2 &= x_1\cdot (x_2+y_2) + x_2\cdot (x_1+y_1) + y_1\cdot y_2\\
&= x_1\cdot x_2+x_1\cdot y_2+x_2\cdot x_1+x_2\cdot y_1+ y_1\cdot y_2\\
&= x_1\cdot x_2+x_1\cdot y_2+x_2\cdot y_1+ y_1\cdot y_2\\
&=(x_1+y_1)\cdot (x_2+y_2)\\
&=1_\sigma.
\end{align*}
Second, $(x_1+x_2)\cdot y_1\cdot y_2= x_1\cdot y_1\cdot y_2+x_2\cdot
y_1\cdot y_2= 0$. This proves complementation of sums. The proof of
complementation of products is dual, starting from $y_1\cdot y_2$.

These two facts show that $B_S$ is a subalgebra of $d(S)$. It is
therefore a bounded distributive sublattice and a boolean algebra,
since all elements are complemented and complements in distributive
lattices are unique.
\end{proof}

The following theorem summarises this investigation of the structure
of $d(S)$.

\begin{theorem}\label{P:greatestba}
  Let $S$ be a dp-dioid. Then $d(S)$ contains the greatest
  boolean subalgebra of $S$ bounded by $0$ and $1_\sigma$.
\end{theorem}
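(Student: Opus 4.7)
The plan is to show that the set $B_S$ of complemented elements is itself the greatest boolean subalgebra of $S$ bounded by $0$ and $1_\sigma$, and then invoke Corollary~\ref{P:comppropcor} to land it inside $d(S)$.

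First I would record that $B_S$ is a candidate. By Lemma~\ref{P:balemma} the set $B_S$ is a boolean algebra, and the proof of that lemma already verifies closure under $+$ and $\cdot$, so $B_S$ is a subalgebra of $S$. It is bounded by $0$ (trivially complemented by $1_\sigma$) and $1_\sigma$ (trivially complemented by $0$), which sit at the bottom and top because every complemented element is a domain element, hence a subidentity dominating $0$ and dominated by $1_\sigma$.

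Next I would argue maximality: any boolean subalgebra $B$ of $S$ bounded by $0$ and $1_\sigma$ satisfies $B\subseteq B_S$. Indeed, for $x\in B$ let $y\in B$ be its boolean complement, so that $x+y=1_\sigma$ and $x\cdot y=0$; since meet in a boolean algebra is commutative, $y\cdot x=x\cdot y=0$ as well. Thus $y$ is a complement of $x$ in the sense defined just before Corollary~\ref{P:comppropcor}, whence $x\in B_S$. Combining $B\subseteq B_S$ with $B_S\subseteq d(S)$ (Corollary~\ref{P:comppropcor}) yields $B\subseteq d(S)$, and taking $B=B_S$ shows that this maximal subalgebra is itself inside $d(S)$.

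The only subtle point is the maximality step, and there the key input from the dp-dioid structure is already packaged in Proposition~\ref{P:compprop} (and, ultimately, in the domain export law of Lemma~\ref{P:protodomprops}(5)): it is that proposition which guarantees that a mere one-sided complementation pair $(x,y)$ with $x+y=1_\sigma$ and $y\cdot x=0$ forces $x$ to be a domain element. Once that is in hand, no further equational work is required beyond noting that commutativity of meet in the ambient boolean algebra $B$ upgrades its complementation data to complementation in the sense of $S$.
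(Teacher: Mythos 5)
Your proof is correct and follows exactly the route the paper intends: the theorem is stated there as a summary of Proposition~\ref{P:compprop}, Corollary~\ref{P:comppropcor} and Lemma~\ref{P:balemma}, and your write-up is precisely the assembly of those three facts, with the maximality step (any bounded boolean subalgebra lands in $B_S$ because boolean complementation in $B$ gives $x+y=1_\sigma$ and $x\cdot y=y\cdot x=0$) correctly identified as the only point needing an argument. Nothing is missing.
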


It is immediately clear that this theorem holds in dp-trioids as well.
In the abstract setting, it need not be the case that $d(S)$ contains
any Boolean algebra apart from $\{0,1_\sigma\}$. In fact, the
sequential subidentities may form a distributive lattice which is not
a boolean algebra, for instance a chain.

\begin{example}
  Consider the structure with addition defined by $0<a<1_\pi<1_\sigma$
  and the other operations defined by the following tables.
  \begin{equation*}
    \begin{array}{c|cccc}
      \cdot & 0 & a & 1_\pi & 1_\sigma\\
      \hline
0 & 0 & 0 & 0 &0\\
0 & a & a & a & a\\
1_\pi & 0 & a & 1_\pi & 1_\pi\\
1_\sigma & 0 & a & 1_\pi & 1_\sigma
    \end{array}
\qquad\qquad
\begin{array}{c|cccc}
  \| & 0 & a & 1_\pi & 1_\sigma\\
\hline
0 & 0 & 0 & 0 & 0\\
a & 0 & a & a & a\\
1_\pi & 0 & a & 1_\pi & 1_\sigma\\
1_\sigma & 0 & a & 1_\sigma & 1_\sigma
\end{array}
\qquad\qquad
    \begin{array}{c|c}
      &d\\
      \hline
      0&0\\
      a&a\\
      1_\pi&1_\sigma\\
      1_\pi&1_\pi
    \end{array}
  \end{equation*}
  It can be checked that this structure forms a dp-trioid (in fact
  this counterexample was found by Isabelle), but the elements $a$ and
  $1_\pi$ are not complemented. For instance, the only element $y$
  which satisfies $a+y=1_\sigma$ is $y=1_\sigma$, but $1_\sigma\cdot
  a= a\neq 0$.\qed
\end{example}
Thus $B_s$ need not be equal to $d(S)$, which justifies
Corollary~\ref{P:comppropcor}.  In the multirelational model, however,
the set of all sequential subidentities forms a boolean algebra, as
mentioned in Section~\ref{S:subidlaws}. In a multirelational dp-trioid
$S$, therefore, $d(S)=\{P \mid P\subseteq 1_\sigma\}$.


\section{The Diamond Axioms of Star-Free CDL}\label{S:diaaxioms}

We are now equipped for deriving algebraic variants of  the diamond
axioms of concurrent dynamic logic except the star axioms in
dp-trioids. First, note that $\langle x\rangle p = \langle x\rangle
d(p)$.

\begin{lemma}\label{P:cdlaxioms1}~
\begin{enumerate}
\item In every dp-dioid, the following $\CDL$-axioms are derivable.
  \begin{enumerate}
\item $\langle x+y\rangle p = \langle x\rangle p+\langle y\rangle p$.
\item $\langle x\cdot y\rangle p = \langle x\rangle\langle y\rangle p$.
\item $\langle d(p)\rangle q = d(p)\cdot d(q)$.
\end{enumerate}
\item In every dp-trioid, the following $\CDL$-axiom is derivable as well.
  \begin{enumerate}
  \item[(d)] $\langle x\|y\rangle p = \langle x\rangle p\cdot \langle y\rangle p$.
  \end{enumerate}
\end{enumerate}
\end{lemma}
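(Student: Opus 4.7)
The plan is to unfold each diamond to its definition $\langle x\rangle p = d(x\cdot p)$ and then push the computation through using additivity, locality, domain associativity, and the dp-trioid interaction laws. Parts (a) and (c) are essentially one-liners; the technical obstacle is part (b), where ordinary associativity of $\cdot$ is not available and I will have to route around it via locality and the domain associativity axiom.

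For (a) I would just compute
\begin{equation*}
\langle x+y\rangle p = d((x+y)\cdot p) = d(x\cdot p + y\cdot p) = d(x\cdot p) + d(y\cdot p) = \langle x\rangle p + \langle y\rangle p,
\end{equation*}
using right distributivity of a proto-dioid and the additivity axiom of $d$. For (c), apply the definition and then Lemma~\ref{P:protodomprops}(5) (domain export): $\langle d(p)\rangle q = d(d(p)\cdot q) = d(p)\cdot d(q)$.

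The interesting case is (b), where I cannot write $(x\cdot y)\cdot p = x\cdot (y\cdot p)$ directly. My route is: first use locality to bring a domain element into the product position so that the domain associativity axiom applies. Concretely,
\begin{equation*}
\langle x\cdot y\rangle p = d((x\cdot y)\cdot p) = d((x\cdot y)\cdot d(p)),
\end{equation*}
by locality; then domain associativity (with the domain element $d(p)$ on the right) gives $(x\cdot y)\cdot d(p) = x\cdot (y\cdot d(p))$; finally locality applied twice yields
\begin{equation*}
d(x\cdot (y\cdot d(p))) = d(x\cdot d(y\cdot d(p))) = d(x\cdot d(y\cdot p)) = \langle x\rangle\langle y\rangle p.
\end{equation*}

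For (d), which lives in a dp-trioid, I would again use locality to replace $p$ by $d(p)$, then invoke the domain interaction axiom $(x\|y)\cdot d(p) = (x\cdot d(p))\|(y\cdot d(p))$, then the domain concurrency axiom $d(u\|v) = d(u)\cdot d(v)$, and finally peel off the $d(p)$ with locality:
\begin{equation*}
\langle x\|y\rangle p = d((x\|y)\cdot d(p)) = d(x\cdot d(p))\cdot d(y\cdot d(p)) = d(x\cdot p)\cdot d(y\cdot p) = \langle x\rangle p\cdot \langle y\rangle p.
\end{equation*}
The chief subtlety throughout is simply to remember that associativity, interaction, and left distributivity are available only when a domain element appears in the right spot, and that locality is the universal tool for manufacturing such a domain element whenever one is needed.
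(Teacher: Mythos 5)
Your proposal is correct and follows essentially the same route as the paper: unfold the diamonds, then apply right distributivity with domain additivity for (a), locality plus the domain associativity axiom for (b), domain export for (c), and the domain interaction and concurrency axioms for (d). The only difference is that you spell out the intermediate locality steps (replacing $p$ by $d(p)$ and back) more explicitly than the paper does, which is harmless and arguably clearer.
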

\begin{proof}
  \begin{enumerate}
  \item[(a)] Using right distributivity and additivity of domain, we
    calculate
    \begin{align*}
      \langle x+y\rangle p  &= d((x+y)\cdot p)\\
& = d(x\cdot p+ y\cdot p)\\
&=d(x\cdot p)+ d(y\cdot p)\\
&=\langle x\rangle p + \langle y\rangle p.
    \end{align*}
  \item[(b)] Using domain associativity and locality, we calculate
    \begin{align*}
      \langle x\cdot y\rangle p& = d((x\cdot y) \cdot p)\\
&= d(x\cdot (y \cdot d(p))\\
&=  d(x\cdot d(y \cdot d(p)))\\
&=  d(x\cdot \langle y \rangle p)\\
&=\langle x\rangle \langle y\rangle p.
    \end{align*}
  \item[(c)] By domain export,
$\langle d(p)\rangle q = d(d(p)\cdot q) = d(p)\cdot d(q)$.

  \item[(d)] Using domain interaction and the first domain concurrency
    axiom, we calculate
  \begin{align*}
    \langle x\| y\rangle p&
    = d((x\| y)\cdot d(p))\\
&= d((x\cdot d(p))\| (y\cdot d(p))\\
&= d(x\cdot d(p))\cdot d(y\cdot d(p))\\
&=\langle x\rangle p \cdot\langle y\rangle p.
  \end{align*}
  \end{enumerate}
\end{proof}
We can derive additional diamond laws from the domain laws such as
$\langle 0\rangle p = 0$ or $\langle 1_\sigma\rangle p = d(p)$. However, we
have a counterexample to $\langle 1_P\rangle p = 1_\sigma$, which holds in
the multirelational model.

\begin{example}
  Consider the structure with addition defined by $0< 1_\sigma< 1_p$,
  concurrent composition defined by meet, and the remaining operations
  by the conditions $1_\pi\cdot 0=0$, $1_\pi\cdot 1_\pi = 1_\pi$,
  $d(0)=0$ and $d(1_\sigma)=d(1_\pi)=1_\sigma$. It can be checked that
  this defines a dp-trioid, but $\langle 1_\pi\rangle 0 = d(1_\pi\cdot
  0)=d(0)=0< 1_\sigma$.\qed
\end{example}

The following \emph{demodalisation law} is proved in Appendix~2. It is
instrumental for deriving the star axioms of $\CDL$.
\begin{lemma}\label{P:demodalisation}
  In every dp-dioid, 
  \begin{equation*}
    \langle x\rangle p\le
  d(q)\Leftrightarrow x\cdot d(p)\le d(q)\cdot x.
  \end{equation*}
\end{lemma}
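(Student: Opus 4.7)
My plan is to establish the equivalence by proving the two implications separately, exploiting the fact that $\langle x\rangle p=d(x\cdot p)=d(x\cdot d(p))$ by the definition of the diamond together with the locality axiom.

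For the forward direction ($\Rightarrow$), I would start from the left preservation axiom applied to the element $x\cdot d(p)$, which gives $x\cdot d(p)\le d(x\cdot d(p))\cdot (x\cdot d(p))$. The assumption $\langle x\rangle p\le d(q)$ translates, via locality, to $d(x\cdot d(p))\le d(q)$, so left isotonicity yields $d(x\cdot d(p))\cdot(x\cdot d(p))\le d(q)\cdot(x\cdot d(p))$. Since $d(q)$ is a domain element, the domain associativity axiom lets me rebracket this as $(d(q)\cdot x)\cdot d(p)$. Finally, $d(p)\le 1_\sigma$ together with right isotonicity collapses the trailing factor: $(d(q)\cdot x)\cdot d(p)\le(d(q)\cdot x)\cdot 1_\sigma=d(q)\cdot x$. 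Chaining the inequalities gives $x\cdot d(p)\le d(q)\cdot x$, as required.

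For the backward direction ($\Leftarrow$), I would apply the isotonicity of $d$ from Lemma~\ref{P:protodomprops}(1) to the hypothesis $x\cdot d(p)\le d(q)\cdot x$, obtaining $d(x\cdot d(p))\le d(d(q)\cdot x)$. The right-hand side simplifies by domain export (Lemma~\ref{P:protodomprops}(5)) to $d(q)\cdot d(x)$, and since $d(x)\le 1_\sigma$ we get $d(q)\cdot d(x)\le d(q)\cdot 1_\sigma=d(q)$. Rewriting the left-hand side as $\langle x\rangle p$ using locality finishes the implication.

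The main delicacy here is that sequential composition is only weakly behaved, so every associativity rearrangement must be justified by the domain associativity axiom, which requires one of the factors to be a domain element. Both directions are set up precisely so that the pivotal reassociation involves $d(q)$ (or the $d(x\cdot d(p))$ on the left), keeping us inside the scope of that axiom. The other small subtlety is that subidentities do not automatically act as upper bounds on the right in a proto-dioid, but here we only need $1_\sigma$ as a right unit and the right isotonicity that follows from right distributivity, so this is unproblematic.
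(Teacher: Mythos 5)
Your proof is correct and follows essentially the same route as the paper: the forward direction is the paper's argument verbatim (left preservation, locality, right isotonicity, domain associativity via $d(q)$, then $d(p)\le 1_\sigma$). Your converse is in fact slightly more direct than the paper's---you apply domain isotonicity straight to the hypothesis and finish with export, whereas the paper first multiplies both sides by $d(p)$ and rebrackets---but it rests on the same lemmas and is valid.
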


Finally we present two important counterexamples.
\begin{lemma}\label{P:diacounter}
  There are multirelations $R$, $P$ and $Q$ such that the following holds.
  \begin{enumerate}
  \item $\langle R\rangle (P \cup Q) \neq \langle R\rangle P\cup \langle R\rangle Q$,
  \item $\langle R\rangle \emptyset \neq \emptyset$. 
  \end{enumerate}
\end{lemma}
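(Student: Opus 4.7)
The plan is to construct explicit multirelational witnesses for each clause, using the concrete diamond characterisation
\begin{equation*}
  \langle R\rangle P = \{G_\iota(a)\mid \exists B.\ (a,B)\in R\wedge G_\iota(B)\subseteq P\}
\end{equation*}
from Section~\ref{S:modalops} (applicable because in every case $P$ will be a subidentity, so $d(P)=P$). This reduces each stated inequality to an elementary membership check and avoids unwinding $d(R\cdot P)$ from scratch.

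For part (2), I would simply recycle the witness $R=\{(a,\emptyset)\}$ already used in Lemma~\ref{P:counterexamples}(1). The pair $(a,\emptyset)$ trivially satisfies $G_\iota(\emptyset)=\emptyset\subseteq\emptyset$, so $G_\iota(a)\in\langle R\rangle\emptyset$. The effect is identical to the one driving $R\cdot\emptyset\neq\emptyset$: an enabled but output-free transition is detected by the diamond.

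For part (1), the root obstruction is that $G_\iota(B)\subseteq P\cup Q$ is strictly weaker than $G_\iota(B)\subseteq P$ or $G_\iota(B)\subseteq Q$ once the elements of $B$ are distributed across the two summands. I would accordingly choose $X=\{a,b,c\}$, $R=\{(a,\{b,c\})\}$, together with the disjoint singleton subidentities $P=\{(b,\{b\})\}$ and $Q=\{(c,\{c\})\}$. With $B=\{b,c\}$ the membership test succeeds against $P\cup Q$, placing $(a,\{a\})$ into $\langle R\rangle(P\cup Q)$; but no choice function can land $(c,f(c))$ in $P$, nor $(b,f(b))$ in $Q$, so $R\cdot P=R\cdot Q=\emptyset$ and the right-hand union collapses to $\emptyset$.

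The only mild design issue is ensuring that $R$ forces a multi-element intermediate set $B$ whose $G_\iota$-image straddles $P$ and $Q$; once this is arranged, both inequalities follow by direct unfolding of the diamond definition. There is no real obstacle, and the two witnesses above also illustrate in miniature the anomalies that break the De Morgan symmetry between $\langle R\rangle$ and $[R]$ in the multirelational semantics.
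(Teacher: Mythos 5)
Your proposal is correct and follows essentially the same route as the paper: explicit small witnesses, with part (2) using exactly the paper's multirelation $R=\{(a,\emptyset)\}$, and part (1) resting on the same mechanism of a single pair $(a,B)$ whose intermediate set $B$ straddles the two disjoint subidentities so that $G_\iota(B)\subseteq P\cup Q$ holds while neither $G_\iota(B)\subseteq P$ nor $G_\iota(B)\subseteq Q$ does. The paper's witness $R=\{(a,\{a,b\})\}$, $P=\{(a,\{a\})\}$, $Q=\{(b,\{b\})\}$ differs from yours only by a renaming of points; both computations check out.
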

\begin{proof}
  \begin{enumerate}
  \item Let $R=\{(a,\{a,b\})\}$, $P=\{(a,\{a\})\}$ and $P=\{(b,\{b\})\}$. Then
    \begin{equation*}
      \langle R\rangle (P\cup Q) 
      = \{(a,\{a,b\})\}
      \supset \emptyset
=\langle R\rangle P\cup \langle R\rangle Q.
    \end{equation*}
  \item For $R=\{(a,\emptyset)\}$ we have $\langle R\rangle \emptyset =
    \{(a,\{a\})\}\neq \emptyset$.
  \end{enumerate}
\end{proof}
The additivity and strictness laws just refuted are defining
properties of modal algebras in the sense of J\'onsson and Tarski
(cf.~\cite{BlackburnRV01}). Our concurrent dynamic algebra axioms are
therefore nonstandard. This situation is analogous to the difference
between strict and multiplicative predicate transformers which arise
from relational semantics and their isotone counterparts which arise
from up-closed multirelations. Predicate transformers are usually
obtained from boxes instead of diamonds; the failure of
multiplicativity is related to that of additivity by duality.

In the concurrent setting, the above multirelation $R$ models an
external choice between $a$ and $b$ from input $a$. Reflecting this,
it is not sufficient that one can observe either one of $a$ and $b$,
but not both after executing $R$. In contrast to this, $\langle
S\rangle(P\cup Q)=\langle S\rangle P\cup\langle S\rangle Q$, for
$S=\{(a,\{a\}),(a,\{b\})\}$, models an internal choice.


\section{The Star Axioms of CDL}\label{S:star}

This section derives the star axioms of CDL in expansions of dp-dioids
to variants of Kleene algebras. This is not entirely straightforward
due to the lack of associativity and left distributivity laws.  As
before we start at the level of multirelations to derive the
appropriate star axioms. We then lift the investigation to the
algebraic level.

Let $R$ and $S$ be multirelations.  Consider
the functions
\begin{equation*}
  F_{RS} = \lambda X.\ S\cup R\cdot X,\qquad F_{R} = \lambda X.\ 1_\sigma\cup R\cdot X,
\end{equation*}
which generate variants of the Kleene star as their least
fixpoints. Existence of these fixpoints is guaranteed by basic
fixpoint theory. The universal multirelation $U$ has been introduced
in Section~\ref{S:multirelations}.
\begin{lemma}\label{P:fixpointprops}~
  \begin{enumerate}
  \item The functions $F_{RS}$ and $F_{R}$ are isotone.
  \item $(M(X),\cup,\cap,\emptyset,U)$ forms a complete lattice.
  \item $F_{RS}$ and $F_{R}$ have least pre-fixpoints and greatest
    post-fixpoints which are also least and greatest fixpoints.
  \end{enumerate}
\end{lemma}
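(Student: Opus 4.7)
The plan is to verify the three parts essentially by invoking standard order-theoretic machinery, once the requisite monotonicity and completeness facts are in place. None of the three parts should be hard, since the multirelational sequential composition is left-isotone and $M(X)$ is literally a powerset.

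First I would dispatch part (1). For $F_{RS}$, if $X\subseteq Y$, then by left isotonicity of sequential composition (recorded as a consequence of Lemma~\ref{P:seqlaws}(5) and the paragraph after it), $R\cdot X\subseteq R\cdot Y$, and union is obviously isotone, so $S\cup R\cdot X\subseteq S\cup R\cdot Y$. The argument for $F_R$ is the special case $S=1_\sigma$.

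Next, for part (2), I would simply observe that $M(X)=\mathcal{P}(X\times 2^X)$, i.e.\ $M(X)$ is a powerset, and every powerset is a complete lattice under union and intersection, with least element $\emptyset$ and greatest element the full set $X\times 2^X$, which is precisely the universal multirelation $U$ defined in Section~\ref{S:multirelations}. So $(M(X),\cup,\cap,\emptyset,U)$ is a complete lattice.

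Finally, for part (3), I would invoke the Knaster--Tarski fixpoint theorem: any isotone map on a complete lattice has a least pre-fixpoint and a greatest post-fixpoint, and these are in fact its least and greatest fixpoints respectively. Combining parts (1) and (2) delivers this immediately for $F_{RS}$ and $F_R$. The main (minor) obstacle is simply bookkeeping: noting that left isotonicity of $\cdot$ is what makes $F_{RS}$ and $F_R$ isotone despite the absence of associativity or left distributivity of sequential composition, since these are not needed for the fixpoint argument.
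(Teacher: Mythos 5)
Your proof is correct and follows essentially the same route as the paper: isotonicity of $F_{RS}$ and $F_R$ from left isotonicity of sequential composition and of union, completeness of $M(X)$ as a powerset (the paper phrases this as ``every ring of sets forms a complete lattice''), and the Knaster--Tarski theorem for part (3). No gaps.
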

See Appendix~2 for proofs.

We write $(R^\ast S)$ or $\mu F_{RS}$ for the least fixpoint of
$F_{RS}$ and $R^\ast$ or $\mu F_R$ for the least fixpoint of $F_R$. We
immediately obtain the fixpoint unfold and induction laws
  \begin{equation*}
    S\cup R\cdot (R^\ast S) \subseteq (R^\ast S),\qquad S\cup R\cdot T\subseteq T \Rightarrow (R^\ast S) \subseteq T
  \end{equation*}
  for $F_{RS}$ and the corresponding laws
  \begin{equation*}
    1_\sigma\cup R\cdot R^\ast \subseteq R^\ast S,\qquad 1_\sigma\cup R\cdot T\subseteq T \Rightarrow R^\ast \subseteq T
  \end{equation*}
  for $F_{R}$. The binary fixpoint $(R^\ast S)$ is not necessarily
  equal to $R^\ast\cdot S$. At least, by definition, $R^\ast = R^\ast
  \cdot 1_\sigma = (R^\ast 1_\sigma)$. The fixpoints $\mu F_R$ and
  $\mu F_{RS}$ can be related by the following well known \emph{fixpoint
    fusion} law.
\begin{theorem}\label{P:fixpointfusion}~
  \begin{enumerate}
  \item Let $f$ and $g$ be isotone functions and $h$ a continuous
    function over a complete lattice. If 
    $h\circ g\le f\circ h$, then $h(\mu g)\le \mu f$.
  \item Let $f$, $g$ and $h$ be isotone functions over a complete
    lattice. If $f\circ h \le h \circ g$, then $\mu f\le h(\mu g)$.
  \end{enumerate}
\end{theorem}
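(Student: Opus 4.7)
I would handle the two parts in the order (2), (1), since (2) is a short Knaster--Tarski argument and (1) is where continuity is genuinely used. For part (2), instantiate the hypothesis $f \circ h \le h \circ g$ at the fixpoint $\mu g$ and use $g(\mu g)=\mu g$ to obtain
\begin{equation*}
  f(h(\mu g)) \le h(g(\mu g)) = h(\mu g),
\end{equation*}
so $h(\mu g)$ is a pre-fixpoint of the isotone map $f$. Knaster--Tarski then yields $\mu f \le h(\mu g)$. Only isotonicity of $f$ is actually needed here; isotonicity of $g$ and continuity never enter.

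For part (1), the plan is to transport the Kleene approximation chain of $\mu g$ through $h$ and bound it termwise by $\mu f$. I would introduce the set
\begin{equation*}
  P = \{\, x : h(x) \le \mu f \,\}
\end{equation*}
and verify three closure properties: (i) $\bot \in P$, which follows because continuity of $h$ forces $h(\bot) = \bot$ (the empty supremum is preserved); (ii) $P$ is stable under $g$, since whenever $h(x) \le \mu f$ the hypothesis and isotonicity of $f$ give $h(g(x)) \le f(h(x)) \le f(\mu f) = \mu f$; (iii) $P$ is closed under directed suprema, because continuity of $h$ yields $h(\sup_\alpha x_\alpha) = \sup_\alpha h(x_\alpha)$, and each $h(x_\alpha)$ already lies below $\mu f$. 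The (possibly transfinite) Kleene iterates $g^\alpha(\bot)$ then never leave $P$, so neither does their stationary value $\mu g$, giving $h(\mu g) \le \mu f$.

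The nontrivial step, and the main obstacle, is (iii). Without continuity of $h$, the set $P$ need not contain the limits of its own approximating chains, and the transfinite induction collapses at limit ordinals; mere isotonicity of $h$ would only yield $h(\sup_\alpha x_\alpha) \ge \sup_\alpha h(x_\alpha)$, which is the wrong direction. This explains the asymmetry in the hypotheses of part (1): the suprema that must be preserved are those along the iteration of $g$ from $\bot$, and $h$ is the map tasked with commuting with them, whereas the successor step (ii) needs only isotonicity of $f$ and $g$.
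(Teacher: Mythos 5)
Your proof is correct. Note first that the paper itself offers no proof of this theorem: it is invoked as a ``well known fixpoint fusion law'', so there is nothing to match your argument against line by line. Part (2) is the standard Knaster--Tarski computation and is fine as you give it (though you still need isotonicity of $g$ for $\mu g$ to exist as a fixpoint in the first place; only the pre-fixpoint step is independent of it). For part (1), your transfinite-iteration argument through the set $P=\{x : h(x)\le \mu f\}$ is sound, and you correctly identify the limit-ordinal step as the place where continuity is indispensable. The one point on which the whole statement hinges is your step (i): if ``continuous'' is read as preserving only \emph{nonempty} directed suprema, the theorem is actually false --- take $f=g=\mathrm{id}$ on $\{0,1\}$ and $h$ the constant function $1$, which satisfies $h\circ g\le f\circ h$ but $h(\mu g)=1\not\le 0=\mu f$. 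You preempt this by reading continuity as including the empty supremum, i.e.\ $h(\bot)=\bot$, and this matches the paper's usage: in Corollary~\ref{P:fixpointfusioncor} continuity of $H=\lambda X.\,X\cdot S$ is verified as $(\bigcup_{i\in I}R_i)\cdot S=\bigcup_{i\in I}(R_i\cdot S)$ for arbitrary $I$, which for $I=\emptyset$ is exactly strictness (Lemma~\ref{P:seqlaws}(2)). For comparison, the more common textbook proof under that reading of continuity (complete additivity) observes that $h$ then has an upper adjoint $h^\flat$ and shows $h^\flat(\mu f)$ is a pre-fixpoint of $g$, avoiding transfinite induction altogether; your route is marginally more general in that it only needs strictness plus chain-continuity rather than a full Galois connection.
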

It follows from (1) and (2) that, if $f$ and $g$ are isotone, $h$ is
continuous and $h\circ g = f \circ h$, then $\mu f = h(\mu
g)$. Applying fixpoint fusion to $F_{RS}$ and $F_R$ yields the
following fact.
\begin{corollary}\label{P:fixpointfusioncor}
  Let $R$, $S$ and $T$ be multirelations. Then 
\begin{equation*}
R^\ast \cdot
  S\subseteq (R^\ast S),\qquad (R^\ast S) \cdot T \subseteq (R^\ast
  (S\cdot T)).
\end{equation*}
\end{corollary}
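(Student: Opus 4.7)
The plan is to apply Theorem~\ref{P:fixpointfusion}(1) twice, in each case choosing for $h$ a function built from right multiplication. For the first inclusion, I would take $f = F_{RS}$, $g = F_R$, and $h = \lambda X.\ X\cdot S$. For the second, I would take $f = F_{R(S\cdot T)}$, $g = F_{RS}$, and $h = \lambda X.\ X\cdot T$. In both cases $f$ and $g$ are isotone by Lemma~\ref{P:fixpointprops}(1), and the ambient complete lattice is supplied by Lemma~\ref{P:fixpointprops}(2).

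The continuity of $h$ follows from the fact that right multiplication distributes over arbitrary unions of multirelations. This is not immediate from the finite right-distributivity law of Lemma~\ref{P:seqlaws}(4), but it can be read off directly from the definition of sequential composition, since the existential witnesses on the left-hand argument range independently of the index of the union. So $(\bigcup_i X_i)\cdot S = \bigcup_i (X_i\cdot S)$, which is exactly the continuity condition needed.

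For the first inclusion, the required hypothesis $h\circ g \le f\circ h$ unfolds to
\begin{equation*}
(1_\sigma\cup R\cdot X)\cdot S \subseteq S\cup R\cdot (X\cdot S).
\end{equation*}
Using right distributivity and $1_\sigma\cdot S = S$, this reduces to $(R\cdot X)\cdot S \subseteq R\cdot (X\cdot S)$, which is precisely the weak associativity law of Lemma~\ref{P:seqlaws}(3). Fusion then yields $h(\mu F_R)\subseteq \mu F_{RS}$, that is, $R^\ast\cdot S\subseteq (R^\ast S)$. For the second inclusion, the analogous computation gives
\begin{equation*}
(S\cup R\cdot X)\cdot T \subseteq S\cdot T\cup R\cdot (X\cdot T),
\end{equation*}
which after right distributivity reduces again to $(R\cdot X)\cdot T \subseteq R\cdot (X\cdot T)$, i.e.\ Lemma~\ref{P:seqlaws}(3). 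Fusion then delivers $(R^\ast S)\cdot T \subseteq (R^\ast(S\cdot T))$.

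The only real obstacle is the continuity step for $h$, since the non-associative, only sub-distributive behaviour of sequential composition noted throughout Section~\ref{S:mulproperties} might at first appear to block the application of fusion. However, it is the right argument of $\cdot$ that causes the pathology; the left argument behaves well with respect to arbitrary unions, which is exactly what is needed here. Once this is in hand, the entire corollary is just two invocations of weak associativity wrapped in fixpoint fusion, and the fact that only inclusions (not equalities) are obtained traces back directly to the inclusion in Lemma~\ref{P:seqlaws}(3).
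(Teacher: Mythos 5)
Your proposal is correct and follows essentially the same route as the paper: fixpoint fusion with $f=F_{RS}$, $g=F_R$, $h=\lambda X.\ X\cdot S$, continuity of $h$ argued as in Lemma~\ref{P:seqlaws}(4), and the fusion hypothesis reduced via right distributivity to the weak associativity law of Lemma~\ref{P:seqlaws}(3). Your explicit instantiation for the second inclusion ($f=F_{R(S\cdot T)}$, $g=F_{RS}$, $h=\lambda X.\ X\cdot T$) correctly fills in what the paper leaves as ``the same pattern.''
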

\begin{proof}
  Let $f=F_{RS}$, $g=F_{R}$ and $h=H =\lambda X. X\cdot S$. 

  It is easy to show that $H$ is continuous, that is, $(\bigcup_{i\in
    I} R_i)\cdot S = \bigcup_{i\in I} (R_i\cdot S)$. The proof is
  similar to that of Lemma~\ref{P:seqlaws}(4). Moreover
\begin{align*}
  (H\circ F_R)(x)  
&= (1_\sigma\cup R\cdot x)\cdot S\\
&= S\cup (R\cdot x)\cdot S
 \subseteq S\cup R\cdot (x\cdot S)\\
&= (F_{RS}\circ H)(x)
\end{align*}
by weak associativity (Lemma~\ref{P:seqlaws}(3)), so $R^\ast \cdot S
\subseteq (R^\ast S)$ by fixpoint fusion.

The proof of $(R^\ast S)\cdot T\subseteq (R^\ast (S\cdot T))$ follows
the same pattern.
\end{proof}
Proving the converse direction, $(R^\ast S) \subseteq R^\ast\cdot S$,
by fixpoint fusion requires associativity in the other direction,
which does not hold in our setting (Lemma~\ref{P:counterexamples}(2),
where the counterexample was given for $R\cdot (R\cdot S)\subseteq
(R\cdot R)\cdot S$ and extends to the case above). The following
counterexample rules out any other proof of this inclusion.

\begin{lemma}\label{P:fixcounter}
  There are multirelations $R$ and $S$ such that $R^\ast S\neq
  R^\ast\cdot S$.
\end{lemma}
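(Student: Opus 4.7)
The plan is to reuse the counterexample from Lemma~\ref{P:counterexamples}(2), namely $R=\{(a,\{a,b\}),(a,\{a\}),(b,\{a\})\}$ and $S=\{(a,\{a\}),(a,\{b\})\}$, and to separate the two expressions at the very pair $(a,\{a,b\})$ that already witnessed the failure of associativity there. Since Corollary~\ref{P:fixpointfusioncor} has already established $R^\ast\cdot S\subseteq R^\ast S$, it suffices to show that this inclusion is strict, so the target is to place $(a,\{a,b\})$ inside $R^\ast S$ and outside $R^\ast\cdot S$.

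For membership in $R^\ast S$, I would simply unfold the fixpoint. Iterating $F_{RS}$ three times from $\emptyset$ gives the approximant $S\cup R\cdot(S\cup R\cdot S)$, which by right-isotonicity of sequential composition contains $R\cdot(R\cdot S)$. The proof of Lemma~\ref{P:counterexamples}(2) already computed $(a,\{a,b\})\in R\cdot(R\cdot S)$, so this pair lies in the third iterate of $F_{RS}$ and hence in $\mu F_{RS}=R^\ast S$.

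The main step---and the only one with any real content---is showing $(a,\{a,b\})\notin R^\ast\cdot S$, and the pleasant surprise is that this works without having to describe $R^\ast$ at all. Suppose for contradiction that $(a,\{a,b\})\in R^\ast\cdot S$; unfolding the definition of sequential composition, there must exist a set $B$ and a function $f$ such that $(a,B)\in R^\ast$, $G_f(B)\subseteq S$, and $\{a,b\}=\bigcup f(B)$. Every pair in $S$ has first component $a$, so $G_f(B)\subseteq S$ forces $B\subseteq\{a\}$, regardless of what $R^\ast$ happens to contain. But then $\bigcup f(B)$ can only be $\emptyset$, $\{a\}$ or $\{b\}$, never $\{a,b\}$---contradiction. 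I do not anticipate an additional obstacle: the argument turns entirely on the input-restricted shape of $S$, which sidesteps any explicit description of $R^\ast$, while the only ingredient in the first step is the right-isotonicity of sequential composition recorded after Lemma~\ref{P:seqlaws}.
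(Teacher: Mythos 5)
Your proof is correct, and it takes a genuinely different route from the paper's. The paper uses the same witnesses $R$ and $S$ but proceeds by explicitly computing all of $R^\ast$, $R^\ast\cdot S$ and $R^\ast S$ via the Hasse-diagram lifting and then comparing the results; in particular it asserts $R^\ast = 1_\sigma\cup R\cdot(1_\sigma\cup R)$, which tacitly requires knowing that the fixpoint iteration stabilises after finitely many steps. You instead isolate the single pair $(a,\{a,b\})$ and split the argument asymmetrically: membership in $R^\ast S$ follows from a finite approximant $F_{RS}^3(\emptyset)\subseteq\mu F_{RS}$ (which needs only isotonicity of $F_{RS}$ and the already-computed fact $(a,\{a,b\})\in R\cdot(R\cdot S)$ from Lemma~\ref{P:counterexamples}(2)), while non-membership in $R^\ast\cdot S$ is established for \emph{any} left factor, since $G_f(B)\subseteq S$ forces $B\subseteq\{a\}$ and hence $\bigcup f(B)\in\{\emptyset,\{a\},\{b\}\}$. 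This buys you two things the paper's proof lacks: you never have to describe or justify the value of $R^\ast$, and the non-membership half is robust against any error in computing the star. What you lose is the explicit picture of all three multirelations, which the paper's diagrammatic computation provides and which is pedagogically useful for seeing \emph{where} the fusion $R^\ast S = R^\ast\cdot S$ breaks. Both proofs are sound; yours is the tighter argument.
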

\begin{proof}
  Consider $R$ and $S$ from Lemma~\ref{P:counterexamples}(2) and their diagrams in Figure~\ref{Fig:three}.
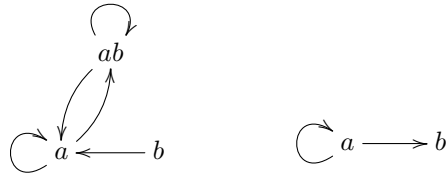
\begin{figure}[h]
  \centering

\begin{equation*}
  \def\labelstyle{\normalsize}
\xymatrix@C=5pt{
&ab\ar@/_/[dl]\ar@(ul,ur)&\\
a\ar@/_/[ur]\ar@(dl,ul)&&b\ar[ll]\\
}
\qquad\qquad\qquad
\xymatrix{
&\\
a\ar[r]\ar@(dl,ul)&b
}
\end{equation*}
\caption{Diagrams for $R$ and $S$ in the proof of Lemma~\ref{P:counterexamples}(2) (same as Fig.~\ref{Fig:one})}
\label{Fig:three}
\end{figure}
The multirelations $R^\ast = 1_\sigma\cup R\cdot (1_\sigma\cup R)$, $R^\ast\cdot
S$ and $R^\ast S= S\cup R\cdot (S\cup R\cdot (S\cup R)) = R\cdot
R\cdot (R\cup S)$ are computed from these diagrams as shown in Figure~\ref{Fig:four}.
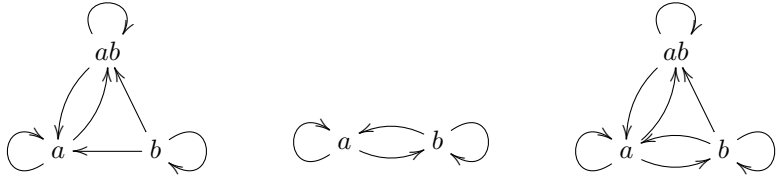
\begin{figure}[h]
  \centering
\begin{equation*}
   \def\labelstyle{\normalsize}
\xymatrix@C=5pt{
&ab\ar@/_/[dl]\ar@(ul,ur)&\\
a\ar@/_/[ur]\ar@(dl,ul)&&b\ar[ll]\ar[ul]\ar@(ur,dr)
}
\qquad\qquad\qquad
\xymatrix{
&\\
a\ar@/_/[r]\ar@(dl,ul)&b\ar@(ur,dr)\ar@/_/[l]
}
\qquad\qquad\qquad
\xymatrix@C=5pt{
&ab\ar@/_/[dl]\ar@(ul,ur)&\\
a\ar@/_/[ur]\ar@(dl,ul)\ar@/_/[rr]&&b\ar@/_/[ll]\ar[ul]\ar@(ur,dr)
}
\end{equation*}
 
  \caption{Diagrams for $R^\ast$, $R^\ast\cdot S$ and $R^\ast S$ for $R$ and $S$ in the proof of Lemma~\ref{P:counterexamples}(2)}
  \label{Fig:four}
\end{figure}
Clearly, $R^\ast S\not\subseteq R^\ast\cdot S$.
\end{proof}
At first sight, Lemma~\ref{P:fixcounter} seems to invalidate the
star-axiom of $\CDL$. However, the identity $R^\ast S = R^\ast\cdot S$
is only needed in the modal setting, where $S$ is a subidentity. In
this case, as we have seen, stronger algebraic properties for
sequential composition are present. We now investigate this
restriction.

First, we show that the unfold law for $R^\ast S$ can be strengthened
to an identity.
\begin{corollary}\label{P:mrunfold}
  Let $R$ and $S$ be multirelations. Then $S\cup R\cdot (R^\ast S) =
  (R^\ast S)$.
\end{corollary}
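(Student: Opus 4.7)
The plan is to observe that this corollary is essentially a direct consequence of Lemma~\ref{P:fixpointprops}(3), which asserts that the least pre-fixpoint of $F_{RS}$ is in fact a fixpoint. I would make this step explicit rather than relying on a black-box invocation.

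First, I would recall the unfold inequality $S \cup R \cdot (R^\ast S) \subseteq (R^\ast S)$, which was stated immediately after the definition of $(R^\ast S) = \mu F_{RS}$ as the least pre-fixpoint of $F_{RS} = \lambda X.\, S \cup R \cdot X$. This gives one direction of the desired equality for free.

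For the reverse inclusion $(R^\ast S) \subseteq S \cup R \cdot (R^\ast S)$, the strategy is to apply $F_{RS}$ to the unfold inequality and use isotonicity (Lemma~\ref{P:fixpointprops}(1)): from $F_{RS}((R^\ast S)) \subseteq (R^\ast S)$ we obtain $F_{RS}(F_{RS}((R^\ast S))) \subseteq F_{RS}((R^\ast S))$, which shows that $F_{RS}((R^\ast S)) = S \cup R \cdot (R^\ast S)$ is itself a pre-fixpoint of $F_{RS}$. By the least pre-fixpoint property of $(R^\ast S)$, this yields $(R^\ast S) \subseteq S \cup R \cdot (R^\ast S)$, completing the proof.

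The main potential obstacle, had there been one, would be left distributivity or associativity being invoked when applying $F_{RS}$ — but here $F_{RS}$ only involves left-multiplication by the fixed multirelation $R$ and union, operations for which isotonicity (Lemma~\ref{P:fixpointprops}(1)) is already established without any need for left distributivity or associativity of sequential composition. So the argument is clean and purely order-theoretic, invoking only general facts about least fixpoints of isotone maps on complete lattices.
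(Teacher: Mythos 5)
Your proof is correct and takes essentially the same route as the paper, which simply remarks that the least pre-fixpoint of $F_{RS}$ is also a fixpoint (Lemma~\ref{P:fixpointprops}(3)). You merely unfold that standard order-theoretic argument—applying $F_{RS}$ to the unfold inequality and using isotonicity plus leastness—which is exactly what the paper's citation implicitly relies on.
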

  This holds since every pre-fixpoint of $F_{RS}$ is also a fixpoint.

  We now prove the desired fusion of $\mu F_{RS}$ with $\mu F_R$ when
  $S$ is a subidentity.
\begin{proposition}\label{P:subidfusion}
  Let $R$ be a multirelation and $P$ a subidentity. Then
  \begin{equation*}
R^\ast P = R^\ast\cdot P. 
\end{equation*}
\end{proposition}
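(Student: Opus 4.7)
The inclusion $R^\ast\cdot P \subseteq R^\ast P$ is already in hand from Corollary~\ref{P:fixpointfusioncor} applied with $S = P$, so only the reverse inclusion $R^\ast P \subseteq R^\ast\cdot P$ requires work. The natural plan is to obtain it by fixpoint fusion again, but in the opposite direction, via Theorem~\ref{P:fixpointfusion}(2).

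Concretely, I would set $f = F_{RP} = \lambda X.\, P \cup R\cdot X$, $g = F_R = \lambda X.\, 1_\sigma \cup R\cdot X$, and $h = \lambda X.\, X\cdot P$. All three maps are isotone (isotonicity of $h$ follows from right isotonicity of sequential composition, and isotonicity of $f,g$ is Lemma~\ref{P:fixpointprops}(1)). The hypothesis of Theorem~\ref{P:fixpointfusion}(2) then asks for
\begin{equation*}
  (f\circ h)(X) = P \cup R\cdot(X\cdot P) \subseteq (1_\sigma \cup R\cdot X)\cdot P = (h\circ g)(X).
\end{equation*}
Expanding the right-hand side with right distributivity (a proto-dioid axiom) gives $P \cup (R\cdot X)\cdot P$, so the required inclusion reduces to $R\cdot(X\cdot P) \subseteq (R\cdot X)\cdot P$. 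This is precisely the associativity direction that fails in general (cf.\ Lemma~\ref{P:counterexamples}(2)), but it holds \emph{as an equality} here because $P$ is a subidentity, by Lemma~\ref{P:subidlaws}(1).

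Combining: Theorem~\ref{P:fixpointfusion}(2) yields $\mu f \subseteq h(\mu g)$, i.e.\ $R^\ast P \subseteq R^\ast \cdot P$, which together with the other direction gives the equality.

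The only potential subtlety I anticipate is the associativity invocation: one must be careful that Lemma~\ref{P:subidlaws}(1) applies with $P$ in the rightmost position when the multirelation $X\cdot P$ (itself no longer a subidentity) appears as the middle factor. Since the lemma only demands that \emph{one} of the three factors be a subidentity, and the rightmost one is $P$, the invocation is legitimate. All other steps are routine applications of right distributivity and the general fixpoint fusion principle.
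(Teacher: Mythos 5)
Your proposal is correct and follows essentially the same route as the paper: the paper also applies the fusion theorem with $H=\lambda X.\,X\cdot P$ and observes that full associativity for the subidentity $P$ (Lemma~\ref{P:subidlaws}(1)) upgrades the inclusion $H\circ F_R \subseteq F_{RP}\circ H$ to an equality, from which both directions of the fusion follow. Your splitting of the argument into the two inclusions, reusing Corollary~\ref{P:fixpointfusioncor} for one of them, is just a slightly more explicit presentation of the same proof.
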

\begin{proof}
  Applying fixpoint fusion as in Corollary~\ref{P:fixpointfusioncor},
  but with $H=\lambda X. X\cdot P$, now establishes $H\circ F_R=
  F_{RS}\circ H$, since we have full associativity for subidentities
  by Lemma~\ref{P:subidlaws}(1). This suffices to verify the claim.
\end{proof}
We can therefore replace $R^\ast P$ by $R^\ast \cdot P$ in the
induction law for $F_{RS}$.
\begin{lemma}\label{P:mrinduction}
Let $R$ and $S$ be multirelations and $P$ be a subidentity. Then
\begin{equation*}
P\cup R\cdot S \subseteq S\Rightarrow R^\ast \cdot P\subseteq S.
\end{equation*}
\end{lemma}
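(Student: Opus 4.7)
The plan is to reduce this induction law to the induction law for the binary fixpoint $(R^\ast S)$ that was recorded just after Lemma~\ref{P:fixpointprops}, using Proposition~\ref{P:subidfusion} to bridge between $R^\ast \cdot P$ and $(R^\ast P)$. Concretely, assume the hypothesis $P \cup R \cdot S \subseteq S$.

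First I would invoke the induction law for $F_{RP}$, i.e.\ the implication
\begin{equation*}
P \cup R \cdot T \subseteq T \Rightarrow (R^\ast P) \subseteq T,
\end{equation*}
which was stated immediately after Lemma~\ref{P:fixpointprops} (with $P$ in place of $S$). Instantiating $T$ with the given $S$, the hypothesis of the current lemma is exactly the premise of this implication, so we conclude $(R^\ast P) \subseteq S$.

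Next I would apply Proposition~\ref{P:subidfusion}, which is the crucial ingredient: since $P$ is a subidentity, $R^\ast \cdot P = (R^\ast P)$. Combining with the previous display gives $R^\ast \cdot P \subseteq S$, as required.

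There is essentially no obstacle here; both ingredients have already been established. The only thing worth flagging is \emph{why} we need $P$ to be a subidentity in the first place: without Proposition~\ref{P:subidfusion}, one would only obtain the weaker conclusion $(R^\ast P) \subseteq S$, which in general is strictly smaller than $R^\ast \cdot P$ by Lemma~\ref{P:fixcounter}. Subidentity is precisely what restores the full associativity needed to identify the two fixpoints (Lemma~\ref{P:subidlaws}(1)), and through that, promotes the raw fixpoint induction principle into the Kleene-style induction law stated in the lemma.
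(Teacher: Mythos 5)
Your proof is correct and is exactly the paper's argument: the paper derives this lemma by remarking that Proposition~\ref{P:subidfusion} lets one replace $(R^\ast P)$ by $R^\ast\cdot P$ in the induction law for $F_{RP}$, which is precisely your two-step chain. One small slip in your closing remark: by Corollary~\ref{P:fixpointfusioncor} and Lemma~\ref{P:fixcounter} the inclusion goes the other way, $R^\ast\cdot P\subseteq (R^\ast P)$ with possible strictness, so $(R^\ast P)\subseteq S$ is the \emph{stronger} conclusion and already yields $R^\ast\cdot P\subseteq S$ via that corollary alone --- the full equality of Proposition~\ref{P:subidfusion} (and hence the subidentity hypothesis) is not actually forced by this direction of the argument.
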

Corollary~\ref{P:mrunfold} and Lemma~\ref{P:mrinduction} motivate the
following algebraic definition. As before we use domain elements instead of sequential subidentities.

A \emph{proto-Kleene algebra with domain} (\emph{dp-Kleene algebra})
is a dp-dioid expanded by a star operation which satisfies the
\emph{(left) star unfold} and \emph{(left) star induction} axioms
\begin{equation*}
  1_\sigma+x\cdot x^\ast \le x^\ast,\qquad d(z)+x\cdot y\le y\Rightarrow x^\ast\cdot d(z)\le y.
\end{equation*}
A \emph{proto-bi-Kleene algebra with domain} (\emph{dp-bi-Kleene
  algebra}) is a dp-Kleene algebra which is also a
dp-trioid\footnote{In this article we ignore the star of concurrent
  composition, which should normally be part of the definition of a
  bi-Kleene algebra. The reason is that it is not considered in
  $\CDL$.}. In both cases, the unfold law can be strengthened to the
identity $ 1_\sigma+x\cdot x^\ast = x^\ast$. The full list of
dp-bi-Kleene algebra axioms can be found in Appendix~1.

The development so far is summarised in the following soundness
result, which links the multirelational layer with the abstract
algebraic one.

\begin{theorem}\label{P:dmrkamodel}
  $(M(X),\cup,\cdot, \|, \emptyset,1_\sigma,1_\pi,d,^\ast)$ is a
  dp-bi-Kleene algebra.
\end{theorem}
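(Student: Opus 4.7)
The plan is to observe that the dp-trioid part of the statement is already established by Theorem~\ref{P:dmramrmodel}, so the only work that remains is to verify the two axioms of a dp-Kleene algebra for the star operation $R^\ast = \mu F_R$ introduced before Lemma~\ref{P:fixpointprops}. The existence of this least fixpoint is guaranteed by Lemma~\ref{P:fixpointprops}(3), so $R^\ast$ is a well-defined element of $M(X)$ for every multirelation $R$.

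For the left star unfold axiom $1_\sigma \cup R\cdot R^\ast \subseteq R^\ast$, I would simply invoke the fixpoint equation for $F_R$: since $R^\ast$ is a fixpoint of $F_R$, we have $R^\ast = 1_\sigma \cup R\cdot R^\ast$, which a fortiori gives the required inclusion (in fact an equality, as noted in the paper).

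For the left star induction axiom $d(Z) \cup R\cdot S \subseteq S \Rightarrow R^\ast \cdot d(Z) \subseteq S$, the key observation is that $d(Z)$ is a subidentity by Lemma~\ref{P:domprops}(1). Hence Lemma~\ref{P:mrinduction} applies verbatim, yielding $R^\ast \cdot d(Z) \subseteq S$ from the hypothesis. Internally, Lemma~\ref{P:mrinduction} relies on Proposition~\ref{P:subidfusion}, which uses the associativity-for-subidentities clause of Lemma~\ref{P:subidlaws}(1) to fuse $\mu F_{RS}$ with $\mu F_R$ when $S$ is a subidentity, so it is precisely this restriction of the induction axiom to domain arguments that makes the derivation go through.

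No step in this plan poses a genuine obstacle, because the substantive content has been packaged into the earlier lemmas. The only conceptual point worth flagging is that the restriction to $d(z)$ in the induction axiom is forced rather than cosmetic: by Lemma~\ref{P:fixcounter} we do \emph{not} have $R^\ast \cdot S = (R^\ast S)$ for arbitrary $S$, so the full unrestricted induction rule would fail in $M(X)$. This motivates the restricted shape of the star axioms of dp-Kleene algebras and explains why the soundness proof closes cleanly at this level of generality.
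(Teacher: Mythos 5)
Your proposal is correct and follows essentially the same route as the paper: the dp-trioid part is delegated to Theorem~\ref{P:dmramrmodel}, the unfold axiom comes from the fixpoint equation (Corollary~\ref{P:mrunfold}), and the induction axiom comes from Lemma~\ref{P:mrinduction} applied to the subidentity $d(Z)$. Your additional remark that Lemma~\ref{P:fixcounter} forces the restriction of the induction axiom to domain elements is a correct and apt observation, consistent with the paper's own discussion.
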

\begin{proof}
  The structure is a dp-trioid as a consequence of
  Theorem~\ref{P:dmramrmodel}. The star axioms hold by
  Corollary~\ref{P:mrunfold} and Lemma~\ref{P:mrinduction}.
\end{proof}

Due to this result we can now continue at the algebraic level. First
we derive the modal star unfold axiom of $\CDL$.
\begin{lemma}\label{P:cdaunfold}
  Let $K$ be a dp-Kleene algebra, $x\in K$ and $p\in
  d(K)$. Then
\begin{equation*}
  p+\langle x\rangle\langle x^\ast\rangle p = \langle x^\ast\rangle p.
\end{equation*}
\end{lemma}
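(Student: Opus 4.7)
The approach is purely equational: unfold $x^\ast$, push the multiplication with $p$ inside the sum, apply additivity of $d$, and then recognise the second summand as a composed diamond. The only ingredients needed are the strengthened star unfold identity, right distributivity (which holds in every proto-dioid), additivity of $d$, the fact that $p=d(p)$ since $p\in d(K)$, and the already-established diamond composition law $\langle x\cdot y\rangle p = \langle x\rangle\langle y\rangle p$ from Lemma~\ref{P:cdlaxioms1}(1)(b).

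First I would note, as the paper states immediately after defining dp-Kleene algebras, that the unfold inequality can be strengthened to the identity $x^\ast = 1_\sigma+x\cdot x^\ast$; this is obtained by applying the induction axiom to $y=1_\sigma+x\cdot x^\ast$ with $d(z)=1_\sigma$, using isotony of $\cdot$ together with the unfold inequality itself. Granting this, I would compute
\begin{align*}
  \langle x^\ast\rangle p
  &= d(x^\ast\cdot p)\\
  &= d((1_\sigma+x\cdot x^\ast)\cdot p)\\
  &= d(1_\sigma\cdot p+(x\cdot x^\ast)\cdot p)\\
  &= d(p)+d((x\cdot x^\ast)\cdot p)\\
  &= p+\langle x\cdot x^\ast\rangle p\\
  &= p+\langle x\rangle\langle x^\ast\rangle p,
\end{align*}
where the steps use the definition of the diamond, the unfold identity, right distributivity together with $1_\sigma\cdot p=p$, additivity and retraction of $d$ (so $d(p)=p$), and finally Lemma~\ref{P:cdlaxioms1}(1)(b).

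The delicate point is that we are working in a proto-dioid, so neither full associativity nor left distributivity of $\cdot$ over $+$ is available. I avoid both: right distributivity is precisely what is needed to push $p$ across the sum, and the non-associative product $(x\cdot x^\ast)\cdot p$ is never reassociated by hand — the diamond composition law of Lemma~\ref{P:cdlaxioms1}(1)(b) already hides any use of domain associativity and locality in its own proof. Consequently the main potential obstacle, the lack of associativity, is neutralised by routing the final rewrite through the previously established lemma rather than trying to massage the bracketing directly.
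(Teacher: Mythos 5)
Your proof is correct and follows essentially the same route as the paper: the paper's one-line argument computes $p+\langle x\rangle\langle x^\ast\rangle p = \langle 1_\sigma+x\cdot x^\ast\rangle p = \langle x^\ast\rangle p$ using the strengthened unfold identity together with Lemma~\ref{P:cdlaxioms1}(a) and (b), and your calculation is the same chain read in reverse, with the additivity step of Lemma~\ref{P:cdlaxioms1}(a) merely unfolded into right distributivity plus additivity of $d$. Your justification of the strengthened unfold identity via the induction axiom with $d(z)=1_\sigma$ is also sound.
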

\begin{proof}
  $p+\langle x\rangle\langle x^\ast\rangle p = \langle 1_\sigma+x\cdot
  x^\ast\rangle p =\langle x^\ast\rangle p$ by the star unfold axiom
  and the $\CDL$ axioms which have been verified in
  Lemma~\ref{P:cdlaxioms1}.
\end{proof}
It remains to verify the star induction axiom of $\CDL$. First we show
a simulation law.
\begin{lemma}\label{P:simulation}
  Let $K$ be a dp-Kleene algebra,  $x\in K$ and
  $p\in d(K)$. Then
\begin{equation*}
x\cdot p \le p\cdot y\Rightarrow x^\ast \cdot
  p\le p\cdot y^\ast.
\end{equation*}
\end{lemma}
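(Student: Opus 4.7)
The plan is to instantiate the star induction axiom with $w = p\cdot y^\ast$ and the domain element $p$ itself (noting $p = d(p)$), and then verify the hypothesis $p + x\cdot (p\cdot y^\ast) \le p\cdot y^\ast$. This is the standard Kleene-algebraic route, and the only subtle point is keeping track of where associativity is being used, since in a dp-Kleene algebra we have only domain associativity rather than full associativity.

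First, I would establish the easy summand $p \le p\cdot y^\ast$. From the star unfold axiom applied to $y$ we have $1_\sigma \le y^\ast$, and then right isotonicity (available in every proto-dioid) together with $p\cdot 1_\sigma = p$ gives $p \le p\cdot y^\ast$. Next, for the harder summand I would chain the following inequalities, at each step pointing out that the associativity rearrangement is licensed because $p$ is a domain element:
\begin{equation*}
x\cdot (p\cdot y^\ast) = (x\cdot p)\cdot y^\ast \le (p\cdot y)\cdot y^\ast = p\cdot (y\cdot y^\ast) \le p\cdot y^\ast.
\end{equation*}
The first and third equalities use domain associativity; the first inequality uses the hypothesis $x\cdot p \le p\cdot y$ together with right isotonicity of sequential composition; the final inequality uses $y\cdot y^\ast \le y^\ast$ (from star unfold) together with left isotonicity, which is available in any proto-dioid via left subdistributivity.

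Combining the two summands yields $p + x\cdot (p\cdot y^\ast) \le p\cdot y^\ast$. Since $p = d(p)$, the star induction axiom in the form $d(z) + x\cdot w \le w \Rightarrow x^\ast\cdot d(z) \le w$, applied with $z = p$ and $w = p\cdot y^\ast$, then delivers $x^\ast\cdot p \le p\cdot y^\ast$, as required.

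The main obstacle is really bookkeeping rather than conceptual: every time we pull a factor across a product we must check that at least one of the three factors involved is a domain element, so that the domain associativity axiom applies. This is why the classical proof goes through essentially verbatim — the element $p$ that we are simulating past is by hypothesis a domain element, and it is precisely that element which sits in the middle of each rearranged triple product.
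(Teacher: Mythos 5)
Your proof is correct and follows essentially the same route as the paper's: star induction applied to $p + x\cdot(p\cdot y^\ast) \le p\cdot y^\ast$, with the identical chain $x\cdot (p\cdot y^\ast) = (x\cdot p)\cdot y^\ast \le (p\cdot y)\cdot y^\ast = p\cdot (y\cdot y^\ast) \le p\cdot y^\ast$ justified by domain associativity at exactly the same two places. The only discrepancy is terminological (your ``left''/``right'' isotonicity labels are swapped relative to the paper's usage), but both forms of isotonicity are available in a proto-dioid, so nothing is affected.
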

See Appendix~2 for a proof.  The derivation of an
algebraic variant of the star unfold axiom of $\CDL$ is then trivial.
\begin{proposition}\label{P:cdainduct}
  Let $K$ be a dp-Kleene algebra, $x\in K$ and $p\in d(K)$. Then
\begin{equation*}
\langle x\rangle p \le p \Rightarrow \langle x^\ast\rangle p \le p.
\end{equation*}
\end{proposition}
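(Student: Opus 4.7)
The plan is to use demodalisation (Lemma~\ref{P:demodalisation}) to translate the modal inequality into a commutation-style inequality on the base level, then invoke the simulation law (Lemma~\ref{P:simulation}) to lift commutation along the star, and finally use demodalisation once more to translate back into modal form.

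More concretely, first I would observe that since $p\in d(K)$ we have $d(p)=p$ by Proposition~\ref{P:retractionlemma}, so demodalisation instantiates to the equivalence
\begin{equation*}
\langle x\rangle p\le p \Leftrightarrow x\cdot p\le p\cdot x,
\end{equation*}
and analogously with $x$ replaced by $x^\ast$ to
\begin{equation*}
\langle x^\ast\rangle p\le p \Leftrightarrow x^\ast\cdot p\le p\cdot x^\ast.
\end{equation*}
So the hypothesis $\langle x\rangle p\le p$ gives $x\cdot p\le p\cdot x$, and the goal $\langle x^\ast\rangle p\le p$ is equivalent to $x^\ast\cdot p\le p\cdot x^\ast$.

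The remaining step is to bridge these two by specialising the simulation law with $y=x$: from $x\cdot p\le p\cdot x$ one concludes $x^\ast\cdot p\le p\cdot x^\ast$. Assembling the three steps yields the result.

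There is no real obstacle here, since every ingredient (demodalisation, simulation, and the fixpoint property $d(p)=p$ for domain elements) has already been established. The only point that requires a small amount of care is that demodalisation is stated in the form $\langle x\rangle p\le d(q)\Leftrightarrow x\cdot d(p)\le d(q)\cdot x$, so one must take $q=p$ and use $d(p)=p$ to obtain the clean commutation inequality on each side. After that, the proof is a two-line chain.
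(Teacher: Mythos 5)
Your proof is correct and is essentially identical to the paper's: both use demodalisation (Lemma~\ref{P:demodalisation}) to convert the hypothesis into $x\cdot p\le p\cdot x$, apply the simulation law (Lemma~\ref{P:simulation}) with $y=x$ to obtain $x^\ast\cdot p\le p\cdot x^\ast$, and demodalise back. Your remark about instantiating $q=p$ and using $d(p)=p$ is a fair point of care that the paper leaves implicit.
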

\begin{proof}
  \begin{equation*}
    \langle x\rangle p \le q \Leftrightarrow x\cdot p \le p\cdot x \Rightarrow x^\ast \cdot p \le p \cdot x^\ast \Leftrightarrow \langle x^\ast\rangle p \le p.
  \end{equation*}
  The first and last step use demodalisation
  (Lemma~\ref{P:demodalisation}), the second step uses
  Lemma~\ref{P:simulation}.
\end{proof}

The first main theorem of this article combines these results.

\begin{theorem}\label{P:cda}
  The $\CDL$ axioms are derivable in dp-bi-Kleene algebras.
\end{theorem}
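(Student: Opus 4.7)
The plan is to assemble the theorem directly from the constituent results already established in the preceding sections; no new algebraic work should be needed. The theorem asserts that the complete set of $\CDL$ axioms (in their algebraic, diamond-based formulation) is derivable from the axioms of a dp-bi-Kleene algebra. Since a dp-bi-Kleene algebra is, by definition, a dp-trioid equipped with a star satisfying the left unfold and left induction axioms, every lemma that was proved in a dp-dioid, a dp-trioid, or a dp-Kleene algebra is available here.

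First, I would enumerate the $\CDL$ axioms to be verified. The star-free diamond axioms are the four identities (a)--(d) in Lemma~\ref{P:cdlaxioms1}: additivity over choice $\langle x+y\rangle p = \langle x\rangle p + \langle y\rangle p$, composition $\langle x\cdot y\rangle p = \langle x\rangle\langle y\rangle p$, the test axiom $\langle d(p)\rangle q = d(p)\cdot d(q)$, and the concurrency axiom $\langle x\|y\rangle p = \langle x\rangle p \cdot \langle y\rangle p$. Axioms (a)--(c) were derived in dp-dioids and axiom (d) in dp-trioids, so all four are immediately available in a dp-bi-Kleene algebra.

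Next I would dispatch the two star axioms. The modal unfold law $p + \langle x\rangle\langle x^\ast\rangle p = \langle x^\ast\rangle p$ is Lemma~\ref{P:cdaunfold}, which was proved in an arbitrary dp-Kleene algebra using only the star unfold axiom together with the previously verified diamond laws (a) and (b). The induction law $\langle x\rangle p \le p \Rightarrow \langle x^\ast\rangle p \le p$ is Proposition~\ref{P:cdainduct}, again proved at the level of a dp-Kleene algebra by combining the demodalisation lemma (Lemma~\ref{P:demodalisation}) with the simulation law (Lemma~\ref{P:simulation}). Both results are therefore inherited unchanged by any dp-bi-Kleene algebra.

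The proof thus reduces to a one-line citation of Lemma~\ref{P:cdlaxioms1}, Lemma~\ref{P:cdaunfold}, and Proposition~\ref{P:cdainduct}. There is no genuine obstacle: the technical work was front-loaded into those earlier results, in particular into the non-trivial adjustments needed to cope with the failure of full associativity and left distributivity, and the careful treatment of the star via fixpoint fusion restricted to subidentities. The only subtlety worth flagging explicitly in the write-up is that axiom (d) requires the trioid structure, and hence the appeal to a bi-Kleene algebra rather than a mere dp-Kleene algebra; everything else already lives at the dp-Kleene level.
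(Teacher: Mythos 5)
Your proposal is correct and follows exactly the route the paper intends: Theorem~\ref{P:cda} is stated as a summary that "combines these results", namely the star-free diamond axioms from Lemma~\ref{P:cdlaxioms1}, the modal star unfold from Lemma~\ref{P:cdaunfold}, and the modal star induction from Proposition~\ref{P:cdainduct}. Your remark that only axiom (d) genuinely needs the trioid (hence bi-Kleene) structure is an accurate and worthwhile clarification.
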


We therefore call dp-bi-Kleene algebras informally \emph{concurrent
  dynamic algebras}.

Finally, in Appendix~2, we prove a right star unfold law and derive a
variant of modal star induction in analogy to the induction axiom of
pd-Kleene algebra.
\begin{lemma}\label{P:starvar}
  Let $K$ be a pd-Kleene algebra, $x\in K$ and $p,q\in d(K)$. Then
\begin{enumerate}
\item $p+\langle x^\ast\rangle\langle x\rangle p \le \langle
  x^\ast\rangle p$,
\item $p+ \langle x\rangle q \le q \Rightarrow \langle x^\ast
  \rangle p \le q$.
\end{enumerate}
\end{lemma}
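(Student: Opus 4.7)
The plan is to reduce both parts to Proposition~\ref{P:cdainduct} and Lemma~\ref{P:cdaunfold}, plus basic isotony facts, and to stay entirely inside the subalgebra $d(K)$ of domain elements. This avoids any attempt to prove a pointwise right star unfold $x^\ast\cdot x\le x^\ast$, which we cannot obtain in the absence of full associativity and left distributivity.

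For part (1), I would first dispose of the summand $p$. The left star unfold axiom gives $1_\sigma\le x^\ast$, so isotony of $\langle-\rangle p$ in its first argument (a direct consequence of the additivity law $\langle x+y\rangle p=\langle x\rangle p+\langle y\rangle p$ from Lemma~\ref{P:cdlaxioms1}) yields $p=\langle 1_\sigma\rangle p\le\langle x^\ast\rangle p$. For the other summand, the key trick is to set $q=\langle x^\ast\rangle p\in d(K)$ and use Lemma~\ref{P:cdaunfold} to obtain $q=p+\langle x\rangle q$, from which $\langle x\rangle q\le q$. Proposition~\ref{P:cdainduct} then supplies $\langle x^\ast\rangle q\le q$, that is, $\langle x^\ast\rangle\langle x^\ast\rangle p\le\langle x^\ast\rangle p$. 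It remains to observe that $x\le x^\ast$ (since $x=x\cdot 1_\sigma\le x\cdot x^\ast\le x^\ast$ by left isotony and the left unfold), so that first-argument isotony gives $\langle x\rangle p\le\langle x^\ast\rangle p$, and then second-argument isotony of $\langle x^\ast\rangle(-)$ (from left isotony of $\cdot$ combined with isotony of $d$) delivers $\langle x^\ast\rangle\langle x\rangle p\le\langle x^\ast\rangle\langle x^\ast\rangle p\le\langle x^\ast\rangle p$.

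For part (2), the hypothesis $p+\langle x\rangle q\le q$ splits into $p\le q$ and $\langle x\rangle q\le q$. Proposition~\ref{P:cdainduct} applied to the second inequality immediately gives $\langle x^\ast\rangle q\le q$, and second-argument isotony of $\langle x^\ast\rangle(-)$ applied to $p\le q$ yields $\langle x^\ast\rangle p\le\langle x^\ast\rangle q\le q$.

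The main obstacle is conceptual rather than technical: one is tempted to mimic the standard Kleene-algebraic proofs of right unfold and right induction at the level of carrier elements, but the missing associativity and left distributivity make this impossible. The workaround is to notice that in the modal setting everything of interest already lives in $d(K)$, where Lemma~\ref{P:cdaunfold} packages the required fixpoint property and Proposition~\ref{P:cdainduct} provides induction; the desired ``right'' modal unfold and induction then follow by short isotony arguments without any detour through weak associativity.
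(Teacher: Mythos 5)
Your proof is correct and follows essentially the same route as the paper's: both parts rest on Lemma~\ref{P:cdaunfold}, Proposition~\ref{P:cdainduct} and isotony of the diamond in each argument, and your proof of (2) is literally the paper's. The only (immaterial) difference is in (1), where the paper applies the induction principle directly with base $\langle x\rangle p$ and target $\langle x^\ast\rangle p$, while you first derive $\langle x^\ast\rangle\langle x^\ast\rangle p\le\langle x^\ast\rangle p$ and then finish by monotonicity.
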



\section{The Structure of AP-Trioids}\label{S:amrdioids}

Section~\ref{S:dmrdioids} shows that the domain elements of a dp-dioid
or dp-trioid form a distributive lattice. We now revisit this
development for antidomain, where the resulting domain algebras are
boolean algebras. We start with a number of auxiliary lemmas. These
are needed because the minimality of the axiom set makes it difficult
to derive the desirable properties directly.

In the following lemma we abbreviate $d=a\circ a$. This is justified
in Proposition~\ref{P:antidomaindomain}, which formally verifies that
$a(a(x))$ models the domain of element $x$.

\begin{lemma}\label{P:antidomprops}
In every ap-dioid,
  \begin{enumerate}
  \item $a(x)\le 1_\sigma$,
\item $a(x)\cdot a(x)=a(x)$,
\item $a(x)=1_\sigma \Leftrightarrow x =0$,
\item $a(x)\cdot y = 0 \Leftrightarrow  a(x)\le a(y)$,
\item $x \le y \Rightarrow a(y)\le a(x)$,
\item $a(x)\cdot a(y)\cdot d(x+y)=0$,
\item $a(x+y) = a(x)\cdot a(y)$,
\item $a(a(x)\cdot y) = d(x)+a(y)$.
\end{enumerate}
\end{lemma}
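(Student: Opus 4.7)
The plan is to prove the eight parts in the order given, since later parts depend on earlier ones. Items (1)--(5) are comparatively routine. Part (1) is immediate from complementation, which gives $a(x)+a(a(x))=1_\sigma$ and hence $a(x)\le 1_\sigma$. For (2), multiplying $a(x)+a(a(x))=1_\sigma$ on the right by $a(x)$ and applying right distributivity and left annihilation yields $a(x)\cdot a(x)+0=a(x)$. For (3), the forward direction is $x=1_\sigma\cdot x=a(x)\cdot x=0$; the backward direction first derives $a(1_\sigma)=0$ from $a(1_\sigma)\cdot 1_\sigma=0=a(1_\sigma)$, and then applies complementation at $x=1_\sigma$ to obtain $a(0)=1_\sigma$. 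In (4), the $\Leftarrow$ direction follows by expanding $(a(x)+a(y))\cdot y=a(y)\cdot y=0$ via right distributivity, while the $\Rightarrow$ direction applies $a$ to $a(x)\cdot y=0$, invokes locality to rewrite this as $a(a(x)\cdot a(a(y)))=a(0)=1_\sigma$, uses (3) to conclude $a(x)\cdot a(a(y))=0$, and then $a(x)=a(x)\cdot(a(y)+a(a(y)))=a(x)\cdot a(y)\le a(y)$. Antitonicity (5) then follows from (4) and left-isotonicity of $\cdot$.

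The content thickens at (6). The key intermediate fact is $a(x)\cdot a(y)\cdot(x+y)=0$, which I would establish by regrouping with antidomain associativity, collapsing $a(y)\cdot(x+y)$ to $a(y)\cdot x$ via antidomain left distributivity and $a(y)\cdot y=0$, and bounding $a(x)\cdot(a(y)\cdot x)\le a(x)\cdot x=0$ using $a(y)\cdot x\le x$ and isotonicity. Applying $a$ to both sides and using locality to convert $x+y$ to $d(x+y)=a(a(x+y))$ gives $a(a(x)\cdot a(y)\cdot d(x+y))=a(0)=1_\sigma$, and (3) finishes (6). For (7), the inequality $a(x+y)\le a(x)\cdot a(y)$ is obtained by combining idempotence (2) with antitonicity (5) and isotonicity; the reverse splits $a(x)\cdot a(y)=a(x)\cdot a(y)\cdot(a(x+y)+d(x+y))$ and uses (6) to annihilate the second summand, leaving $a(x)\cdot a(y)\cdot a(x+y)\le a(x+y)$.

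Part (8) is the main obstacle. Locality followed by (7) rewrites $a(a(x)\cdot y)=a(a(x)\cdot a(a(y)))=a(a(x+a(y)))=d(x+a(y))$, so what is left is to verify $d(x+a(y))=d(x)+a(y)$. The identity $d(a(y))=a(y)$ follows from the retraction $a=a\circ a\circ a$, which comes from applying locality to $1_\sigma\cdot y$, so it suffices to prove the additivity of $d$. This is not stated as part of the lemma but is derivable: $\ge$ is isotonicity of $d=a\circ a$ (from (5) applied twice), while for $\le$ I would write $d(x+y)=(d(x)+a(x))\cdot d(x+y)=d(x)\cdot d(x+y)+a(x)\cdot d(x+y)$, further decompose the second summand using $1_\sigma=d(y)+a(y)$ together with antidomain left distributivity, and observe that the cross term $a(x)\cdot d(x+y)\cdot a(y)$ rearranges, via commutativity of antidomain products (a consequence of (7) by symmetry of $+$) and antidomain associativity, to $a(x+y)\cdot d(x+y)=0$. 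I expect the careful bookkeeping of which associativity regroupings are licensed and the clean formulation of this auxiliary additivity to be the trickiest step; once these are in place, (8) is immediate.
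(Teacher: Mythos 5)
Your items (1)--(7) follow essentially the same route as the paper's proof in Appendix~2: complementation plus left annihilation for (1)--(3), the locality trick converting $a(x)\cdot y=0$ into $a(x)\cdot d(y)=0$ for (4), antitonicity from (4) for (5), and the two-step argument for (6)--(7) (first $a(x)\cdot a(y)\cdot(x+y)=0$, then an application of $a$, locality and (3) to replace $x+y$ by $d(x+y)$) is exactly the paper's. The genuine divergence is in (8). The paper proves the export law head-on: $d(x)+a(y)\le a(a(x)\cdot y)$ via the greatest-left-annihilator property (4), and the converse by splitting $a(a(x)\cdot y)$ against $d(y)+a(y)$ and then $a(x)+d(x)$, killing the term $a(a(x)\cdot y)\cdot a(x)\cdot d(y)$. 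You instead rewrite $a(a(x)\cdot y)=a(a(x)\cdot a(a(y)))=a(a(x+a(y)))=d(x+a(y))$ using locality and (7), thereby reducing (8) to additivity $d(x+y)=d(x)+d(y)$ together with the retraction $d\circ a=a$. This is a clean reduction, but be aware that the paper derives additivity of $d$ in the \emph{opposite} direction, as a corollary of (8) in Proposition~\ref{P:antidomaindomain} (there $d(x+y)=a(a(x)\cdot a(y))=d(x)+d(y)$ by (7) and export), so you are obliged to supply an independent, non-circular proof of additivity --- which you do: the $\ge$ half is antitonicity twice, and the $\le$ half splits $d(x+y)$ against $d(x)+a(x)$ and then $d(y)+a(y)$, annihilating the cross term via the commutativity $a(u)\cdot a(v)=a(u+v)=a(v+u)=a(v)\cdot a(u)$ and left annihilation. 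All the regroupings you need are licensed by antidomain associativity since every factor involved is of the form $a(w)$. Both routes are of comparable length; yours has the mild advantage of establishing domain additivity before the export law, a fact the paper needs later anyway.
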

See Appendix~2 for proofs. The \emph{greatest left
  annihilation property} (4) is a characteristic property of antidomain
elements. It states that $a(x)$ is the greatest antidomain element $p$
which satisfy the left annihilation law $p\cdot x = 0$. By (5), the
antidomain operation is \emph{antitone}; by (7) it is
\emph{multiplicative}. Property (8) is an \emph{export} law for
antidomain. These laws are helpful in the following proposition which is
proved in Appendix~2.

\begin{proposition}\label{P:antidomaindomain}
  Every ap-dioid is a dp-dioid with domain operation $d=a\circ a$.
\end{proposition}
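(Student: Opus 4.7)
The plan is to set $d = a \circ a$ and verify the five domain axioms together with the domain associativity condition, drawing throughout on the properties collected in Lemma \ref{P:antidomprops}. I would work axiom by axiom, starting with the easy ones.

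The subidentity axiom $d(x) = a(a(x)) \le 1_\sigma$ is immediate from Lemma \ref{P:antidomprops}(1) applied to $a(x)$. For strictness, I would first derive $a(0) = 1_\sigma$ from Lemma \ref{P:antidomprops}(3), and then use left annihilation at $x = 1_\sigma$ to get $a(1_\sigma) = a(1_\sigma) \cdot 1_\sigma = 0$; hence $d(0) = a(a(0)) = a(1_\sigma) = 0$. The domain associativity axiom is essentially free: $d(w) = a(a(w))$ has the form $a(v)$ with $v = a(w)$, so the antidomain associativity axiom applies directly whenever one of $x$, $y$, $z$ equals $d(w)$.

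For left preservation I would actually establish the stronger equality $x = d(x) \cdot x$: starting from complementation $a(x) + d(x) = 1_\sigma$, right distributivity and left annihilation yield
\begin{equation*}
  x = 1_\sigma \cdot x = a(x) \cdot x + d(x) \cdot x = 0 + d(x) \cdot x = d(x) \cdot x.
\end{equation*}
Locality $d(x \cdot y) = d(x \cdot d(y))$ is a one-line derivation: apply $a$ to both sides of the antidomain locality axiom $a(x \cdot y) = a(x \cdot a(a(y)))$. For additivity, I would use Lemma \ref{P:antidomprops}(7) to rewrite $d(x + y) = a(a(x) \cdot a(y))$ and then invoke the export law Lemma \ref{P:antidomprops}(8) to obtain $a(a(x) \cdot a(y)) = d(x) + a(a(y)) = d(x) + d(y)$.

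The main obstacle is conceptual rather than technical: one must recognise that $d = a \circ a$ genuinely models the domain and that the proof is carried almost entirely by the export law and the multiplicativity/antitonicity properties packaged in Lemma \ref{P:antidomprops}. Once these are available, every axiom becomes a short equational manipulation, and the subtlety about associativity dissolves because every $d$-element is automatically an $a$-element of something.
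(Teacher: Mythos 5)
Your proposal is correct and matches the paper's own proof essentially step for step: the same reduction of each domain axiom to the antidomain axioms, the same $x=(d(x)+a(x))\cdot x=d(x)\cdot x+0$ computation for left preservation, the same use of $a(0)=1_\sigma$ and $a(1_\sigma)=0$ for strictness, and the same combination of antidomain multiplicativity (Lemma~\ref{P:antidomprops}(7)) with the export law (Lemma~\ref{P:antidomprops}(8)) for additivity. Your observation that domain associativity is free because every $d(w)$ is itself an antidomain element is exactly the paper's (implicit) justification as well.
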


As in Section~\ref{S:dmrdioids}, we investigate the structure of
domain elements.

\begin{proposition}\label{P:antidomainba}
  Let $S$ be an ap-dioid with $d=a\circ a$. Then $d(S)$
  forms a subalgebra which is the greatest boolean algebra in $S$
  bounded by $0$ and $1$.
\end{proposition}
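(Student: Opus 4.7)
The plan is to upgrade the bounded distributive lattice $d(S)$, inherited from the dp-dioid structure, to a boolean algebra by exhibiting, for each $d(x) \in d(S)$, an explicit complement that lies in $d(S)$. By Proposition~\ref{P:antidomaindomain}, $S$ is a dp-dioid with $d = a\circ a$, so Proposition~\ref{P:domaindl} already gives that $d(S)$ is a subalgebra forming a bounded distributive lattice, and Theorem~\ref{P:greatestba} gives that $d(S)$ contains the greatest boolean subalgebra $B_S$ of $S$ bounded by $0$ and $1_\sigma$. It therefore suffices to show $d(S) \subseteq B_S$, and the natural candidate for a complement of $d(x)$ is $a(x)$.

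The main obstacle, and only non-routine step, is to verify that $a(x)$ itself lies in $d(S)$; by the fixpoint characterisation (Proposition~\ref{P:retractionlemma}) this amounts to $d(a(x)) = a(x)$, i.e., $a\circ a\circ a = a$. First I would instantiate the complementation axiom at $a(x)$ in place of $x$ to obtain $d(x) + d(a(x)) = 1_\sigma$. Next, applying Lemma~\ref{P:antidomprops}(7) to the complementation axiom at $x$, together with $a(1_\sigma)=0$ (immediate from left annihilation and the right-unit law), yields
\begin{equation*}
0 = a(1_\sigma) = a\bigl(a(x) + a(a(x))\bigr) = a(a(x))\cdot a(a(a(x))) = d(x)\cdot d(a(x)).
\end{equation*}
Since $d(a(x)) = a(a(a(x)))$ is itself an antidomain element, antidomain left distributivity applies; combining with $d(y)\cdot y = y$ (Lemma~\ref{P:protodomprops}(2), available via Proposition~\ref{P:antidomaindomain}) instantiated at $y=a(x)$, I compute
\begin{equation*}
d(a(x)) = d(a(x))\cdot (a(x) + d(x)) = d(a(x))\cdot a(x) + d(a(x))\cdot d(x) = a(x) + 0 = a(x).
\end{equation*}

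With $a(x) \in d(S)$ established, the three defining conditions for $a(x)$ to complement $d(x)$ are now routine: $d(x) + a(x) = 1_\sigma$ is the complementation axiom; $d(x)\cdot a(x) = 0$ is left annihilation applied to $a(x)$ (using $a(a(x)) = d(x)$); and $a(x)\cdot d(x) = 0$ then follows from the commutativity of multiplication on $d(S)$ proved in Proposition~\ref{P:domaindl}. Hence every $d(x)$ is complemented in $S$, so $d(S) \subseteq B_S$, and together with Corollary~\ref{P:comppropcor} this yields $d(S) = B_S$. By Lemma~\ref{P:balemma} the set $d(S) = B_S$ is a boolean algebra, and by Theorem~\ref{P:greatestba} it is the greatest boolean subalgebra of $S$ bounded by $0$ and $1_\sigma$.
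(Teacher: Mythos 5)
Your proof is correct, and its overall architecture coincides with the paper's: reduce to the dp-dioid results (Propositions~\ref{P:antidomaindomain} and~\ref{P:domaindl}), show $d(S)$ is closed under $a$ via the identity $d(a(x))=a(x)$, verify that $a$ acts as complementation, and conclude maximality from Theorem~\ref{P:greatestba}. The one place where you genuinely diverge is the derivation of $d(a(x))=a(x)$, which you correctly flag as the only non-routine step. You obtain it from multiplicativity (Lemma~\ref{P:antidomprops}(7)) applied to the complementation axiom, giving $d(x)\cdot d(a(x))=0$, followed by antidomain left distributivity and left preservation; the paper instead gets it in one line from antidomain locality and the unit law, via
\begin{equation*}
d(a(x))=a(a(a(x)))=a(d(x))=a(1_\sigma\cdot d(x))=a(1_\sigma\cdot x)=a(x),
\end{equation*}
which is shorter and avoids any appeal to commutativity of domain elements (your step from $d(x)\cdot d(a(x))=0$ to $d(a(x))\cdot d(x)=0$ silently uses it, though it is available from Proposition~\ref{P:domaindl}, so this is not a gap). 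Your detour through $B_S$, Corollary~\ref{P:comppropcor} and Lemma~\ref{P:balemma} to get $d(S)=B_S$ is also slightly more explicit than the paper, which simply observes that a complemented bounded distributive lattice is boolean and then invokes Theorem~\ref{P:greatestba}; both routes are sound, and yours has the small bonus of identifying $d(S)$ with the set of all complemented elements. The identity $d(x)+d(a(x))=1_\sigma$ you derive at the outset is never used and could be dropped.
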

\begin{proof}
  First, since every ap-dioid is a dp-dioid, $d(S)$ is a bounded
  distributive lattice. Second, antidomain elements are closed under
  the operations because $d(a(x))=a(x)$: by antidomain locality,
  \begin{equation*}
    d(a(x))=a(a(a(x)))=a(d(x))= a(1_\sigma\cdot d(x))= a(1_\sigma\cdot x)=a(x).
  \end{equation*}
  Third, the operation $\lambda x.a(x)$ is complementation in this
  algebra.  One of the complementation properties,
  $a(d(x))+d(x)=a(x)+d(x)=1_\sigma$, is an axiom. The other ones,
  $a(d(x))\cdot d(x)=a(x)\cdot d(x)=0$ and $d(x)\cdot
  a(d(y))=d(x)\cdot a(x)=0$, are immediate from antidomain
  annihilation.

  Finally, by Theorem~\ref{P:greatestba}, $d(S)$ contains the greatest
  boolean algebra in $S$ between $0$ and $1_\sigma$ and is therefore equal to
  the greatest such boolean algebra.
\end{proof}

We now expand Proposition~\ref{P:antidomaindomain} from the sequential
to the concurrent case.

\begin{proposition}\label{P:amrdmr}
  Every ap-trioid is a dp-trioid.
\end{proposition}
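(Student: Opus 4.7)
The plan is to build directly on Proposition~\ref{P:antidomaindomain}, which already establishes that every ap-dioid is a dp-dioid under the definition $d = a \circ a$. Since an ap-trioid is by definition also a proto-trioid, what remains is to derive the three axioms that separate a dp-trioid from an arbitrary dp-dioid sitting inside a proto-trioid: the domain interaction axiom $(x\|y)\cdot d(z) = (x\cdot d(z))\|(y\cdot d(z))$ and the two domain concurrency axioms $d(x\|y)=d(x)\cdot d(y)$ and $d(x)\|d(y)=d(x)\cdot d(y)$.

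The key observation is that $d(z)=a(a(z))$ is itself an antidomain element, i.e.\ of the form $a(w)$ with $w=a(z)$. The domain interaction axiom therefore falls out as a direct instance of the antidomain interaction axiom with $a(z)$ in place of $a(w)$. For the first domain concurrency axiom, I would compute
\begin{equation*}
d(x\|y) = a(a(x\|y)) = a(a(x)+a(y)) = a(a(x))\cdot a(a(y)) = d(x)\cdot d(y),
\end{equation*}
using the first antidomain concurrency axiom in the middle step and Lemma~\ref{P:antidomprops}(7) in the third step. The second domain concurrency axiom $d(x)\|d(y) = d(x)\cdot d(y)$ is then a direct instance of the second antidomain concurrency axiom $a(u)\|a(v) = a(u)\cdot a(v)$ applied with $u=a(x)$ and $v=a(y)$.

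Since each of the three derivations is a short chain of axiom applications, I do not anticipate any real obstacle. The only subtlety to keep in mind is that the rewriting occurs inside the scope of $a$ and involves products of the form $a(a(x))\cdot a(a(y))$; these reassociations are exactly what the antidomain associativity axiom sanctions for factors already in the image of $a$, so the whole argument is essentially just unpacking $d = a\circ a$ and reading off the antidomain axioms.
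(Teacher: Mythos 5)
Your proposal is correct and follows essentially the same route as the paper: reduce to Proposition~\ref{P:antidomaindomain} for the dp-dioid part, obtain the domain interaction and second concurrency axiom as direct instances of the corresponding antidomain axioms (since $d(z)=a(a(z))$ is an antidomain element), and derive $d(x\|y)=d(x)\cdot d(y)$ by the very same chain $a(a(x\|y))=a(a(x)+a(y))=a(a(x))\cdot a(a(y))$ using the antidomain concurrency axiom and the multiplicativity law of Lemma~\ref{P:antidomprops}(7).
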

The proof can be found in Appendix~2.

Finally we investigate the star. A \emph{proto-Kleene algebra with
  antidomain} (\emph{ap-Kleene algebra}) is an ap-dioid expanded by a
star operation which satisfies the \emph{(left) star unfold} and
\emph{(left) star induction} axioms
\begin{equation*}
  1_\sigma+x\cdot x^\ast \le x^\ast,\qquad a(z)+x\cdot y\le y\Rightarrow x^\ast\cdot a(z) \le y.
\end{equation*}
A \emph{proto-bi-Kleene algebra with antidomain} (\emph{ap-bi-Kleene
  algebra}) is an ap-Kleene algebra which is also an ap-trioid. A full
list of ap-bi-Kleene algebra axioms can be found in Appendix~1.

The following propsition is immediate from
Propositions~\ref{P:antidomaindomain} and~\ref{P:amrdmr}.
\begin{proposition}\label{P:KAdomainantidomain}
  \begin{enumerate}
  \item Every ap-Kleene algebra is a dp-Kleene.
  \item Every ap-bi-Kleene algebra is a dp-bi-Kleene algebra.
  \end{enumerate}
\end{proposition}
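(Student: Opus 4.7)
The plan is to invoke the two cited propositions for the non-star part of the structure and then verify the two star axioms of the dp-side by a simple substitution, since the star axioms of ap- and dp-Kleene algebras differ only in whether they are phrased in terms of antidomain or domain elements.

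For part (1), by Proposition~\ref{P:antidomaindomain} the ap-dioid reduct of any ap-Kleene algebra is already a dp-dioid under the derived operation $d = a \circ a$, so all the proto-dioid and domain axioms of a dp-Kleene algebra are automatically satisfied. It remains to check the two star axioms. The left star unfold axiom $1_\sigma + x \cdot x^\ast \le x^\ast$ is literally the same in both definitions, so there is nothing to check. For the left star induction axiom of the dp-version, namely $d(z) + x \cdot y \le y \Rightarrow x^\ast \cdot d(z) \le y$, we rewrite $d(z)$ as $a(a(z))$ using the identity $d = a \circ a$ and then apply the ap-Kleene star induction axiom with $z$ replaced by $a(z)$; this yields $a(a(z)) + x \cdot y \le y \Rightarrow x^\ast \cdot a(a(z)) \le y$, which is exactly the required implication.

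For part (2), the same argument applies together with Proposition~\ref{P:amrdmr}, which lifts Proposition~\ref{P:antidomaindomain} from the sequential to the concurrent setting: the ap-trioid reduct of an ap-bi-Kleene algebra is a dp-trioid with $d = a \circ a$, so the domain interaction and domain concurrency axioms hold, and the star argument from part (1) carries over unchanged.

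The only potential obstacle is confirming that the substitution step in the star induction axiom is legitimate — i.e.\ that every element of the form $d(w)$ arising on the left-hand side of the dp-Kleene induction is already of the form $a(a(w))$ in the ambient ap-Kleene structure. But this is exactly the content of $d = a \circ a$ from Proposition~\ref{P:antidomaindomain}, so the step is automatic. No further calculation is required.
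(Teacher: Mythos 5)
Your proposal is correct and matches the paper's approach: the paper simply declares the proposition ``immediate from Propositions~\ref{P:antidomaindomain} and~\ref{P:amrdmr}'', and your argument supplies exactly the missing detail, namely that the star unfold axiom is identical in both signatures and the dp-star induction axiom is the instance $z \mapsto a(z)$ of the ap-star induction axiom under $d = a\circ a$. Nothing further is needed.
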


In combination, these facts establish an analogon to
Theorem~\ref{P:dmrkamodel}.
\begin{theorem}\label{P:amrkamodel}
  $(M(X),\cup,\cdot,\|,\emptyset,1_\sigma,1_\pi,a,^\ast)$ forms an
  ap-bi-Kleene algebra.
\end{theorem}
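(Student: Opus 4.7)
The plan is to mirror the structure of the proof of Theorem~\ref{P:dmrkamodel}, splitting the verification into the ap-trioid part and the two star axioms. The ap-trioid part is free: it is exactly the content of Theorem~\ref{P:dmramrmodel}(2), which already established that $(M(X),\cup,\cdot,\|,\emptyset,1_\sigma,1_\pi,a)$ satisfies all proto-trioid and antidomain axioms. So the only work left is to show the two star axioms in the presence of antidomain, namely
\begin{equation*}
1_\sigma\cup R\cdot R^\ast \subseteq R^\ast,\qquad a(T)\cup R\cdot S\subseteq S\Rightarrow R^\ast \cdot a(T)\subseteq S.
\end{equation*}

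The first axiom has nothing to do with antidomain and holds in exactly the same way as in the dp case: it is the fixpoint unfold for $F_R=\lambda X.\,1_\sigma\cup R\cdot X$, whose existence and least-fixpoint property were guaranteed by Lemma~\ref{P:fixpointprops} and whose unfold was sharpened to an identity in Corollary~\ref{P:mrunfold} (specialised to $S=1_\sigma$).

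For the induction axiom, I would reduce to the subidentity induction principle Lemma~\ref{P:mrinduction}. The key observation is that for every multirelation $T$ the antidomain $a(T)$ is a sequential subidentity. This is recorded as part of Lemma~\ref{P:antidomprops2}(3), where $a(R)\cup d(R)=1_\sigma$ together with Lemma~\ref{P:domprops}(1) forces $a(R)\subseteq 1_\sigma$; alternatively it is immediate from the definition $a(R)=\{G_\iota(a)\mid\neg\exists A.\,(a,A)\in R\}$. Taking $P:=a(T)$, the hypothesis $a(T)\cup R\cdot S\subseteq S$ becomes $P\cup R\cdot S\subseteq S$ with $P$ a subidentity, and Lemma~\ref{P:mrinduction} delivers $R^\ast\cdot P\subseteq S$, which is precisely the conclusion required.

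There is no real obstacle here; the entire burden of dealing with the failures of associativity, left distributivity, and the discrepancy between $R^\ast S$ and $R^\ast\cdot S$ has already been absorbed into Proposition~\ref{P:subidfusion} and Lemma~\ref{P:mrinduction}. The only conceptual point that needs to be flagged is the one above, that antidomain elements in $M(X)$ are always subidentities, so that the subidentity-restricted induction law supplies the antidomain-restricted induction law verbatim. Combined with the ap-trioid statement from Theorem~\ref{P:dmramrmodel}(2) and the star unfold from Corollary~\ref{P:mrunfold}, this completes all axioms of an ap-bi-Kleene algebra listed in Appendix~1.
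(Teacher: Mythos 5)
Your proof is correct and follows the route the paper intends: the ap-trioid axioms come from Theorem~\ref{P:dmramrmodel}(2), star unfold from the fixpoint unfold law (Corollary~\ref{P:mrunfold}), and star induction from Lemma~\ref{P:mrinduction} via the observation that antidomain elements are sequential subidentities. The paper gives no explicit proof beyond the remark that the preceding facts combine to yield the result, so your write-up simply makes that combination explicit.
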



\section{The Box Axioms of CDL}\label{S:boxaxioms}

The results of the previous section imply that the diamond axioms of
concurrent dynamic logic hold in the setting of antidomain
algebras. In addition we can now derive algebraic variants of Peleg's
De Morgan dual box axioms. Since every ap-bi-Kleene algebra is a
dp-Kleene algebra, the diamond axioms of concurrent dynamic algebras
hold immediately.

\begin{lemma}\label{P:antidomaindiacda}~
\begin{enumerate}
\item In every ap-dioid, the following CDL-axioms are derivable.
  \begin{enumerate}
\item $\langle x+y\rangle p = \langle x\rangle p+\langle y\rangle p$.
\item $\langle x\cdot y\rangle p = \langle x\rangle\langle y\rangle p$.
\item $\langle d(p)\rangle q = d(p)\cdot d(q)$.
\end{enumerate}
\item In every ap-trioid, the following CDL-axiom is derivable.
  \begin{enumerate}
  \item[(d)] $\langle x\|y\rangle p = \langle x\rangle p\cdot \langle y\rangle p$.
  \end{enumerate}
\item In every ap-Kleene algebra, the following star axioms are derivable.
\begin{enumerate}
\item[(e)] $1_\sigma+\langle x\rangle\langle x^\ast\rangle p = \langle x^\ast\rangle p$.
\item[(f)] $\langle x\rangle p \le p \Rightarrow \langle x^\ast\rangle p \le p$.
\end{enumerate}
\end{enumerate}
\end{lemma}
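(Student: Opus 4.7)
The plan is to exploit the structural hierarchy established in the preceding section, which reduces this lemma to a bookkeeping exercise. The key observation is that all of the statements (a)--(f) have already been derived for the corresponding \emph{domain}-based classes in Lemma~\ref{P:cdlaxioms1}, Lemma~\ref{P:cdaunfold} and Proposition~\ref{P:cdainduct}. Propositions~\ref{P:antidomaindomain}, \ref{P:amrdmr} and~\ref{P:KAdomainantidomain} tell us that every ap-dioid is a dp-dioid (with $d=a\circ a$), every ap-trioid is a dp-trioid, and every ap-(bi-)Kleene algebra is a dp-(bi-)Kleene algebra. Since the diamond operator in the ap-setting is defined as $\langle x\rangle y = d(x\cdot y) = a(a(x\cdot y))$ and therefore coincides with the dp-diamond of the underlying dp-structure, all the lemmas transfer verbatim.

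Concretely, I would structure the proof in three short blocks matching the three items of the lemma. For~(1), fix an ap-dioid; by Proposition~\ref{P:antidomaindomain} it is a dp-dioid, and the three identities (a), (b), (c) are then instances of Lemma~\ref{P:cdlaxioms1}(1). For~(2), fix an ap-trioid; by Proposition~\ref{P:amrdmr} it is a dp-trioid, so (d) follows from Lemma~\ref{P:cdlaxioms1}(2). For~(3), fix an ap-Kleene algebra; by Proposition~\ref{P:KAdomainantidomain}(1) it is a dp-Kleene algebra, so (e) is Lemma~\ref{P:cdaunfold} and (f) is Proposition~\ref{P:cdainduct}.

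There is no genuine obstacle here, since the work has been front-loaded into the propositions that relate the antidomain classes to their domain counterparts; the only mild point worth double-checking is that the derived domain operation $d = a\circ a$ in the ap-setting yields exactly the same diamond $\langle x\rangle y = d(x\cdot y)$ used in the dp-setting, which is immediate from the definitions. Everything else is just invocation of prior results.
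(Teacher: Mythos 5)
Your proposal is correct and is exactly the argument the paper intends: the text preceding the lemma simply notes that every ap-(bi-)Kleene algebra is a dp-(bi-)Kleene algebra (via Propositions~\ref{P:antidomaindomain}, \ref{P:amrdmr} and~\ref{P:KAdomainantidomain}), so the diamond axioms transfer immediately from Lemma~\ref{P:cdlaxioms1}, Lemma~\ref{P:cdaunfold} and Proposition~\ref{P:cdainduct}. Your remark about checking that $d=a\circ a$ induces the same diamond is the right sanity check; the only discrepancy is the paper's apparent typo of $1_\sigma$ for $p$ in item~(e), which your reading via Lemma~\ref{P:cdaunfold} correctly resolves.
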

In addition, the following box axioms follow easily from De Morgan
duality.
\begin{proposition}\label{P:antidomainboxcda}~
\begin{enumerate}
\item In every ap-dioid, the following CDL-axioms are derivable.
  \begin{enumerate}
\item $[x+y] p = [x]p\cdot [y]p$.
\item $[x\cdot y]p = [x][y] p$.
\item $[d(p)] q = a(p)+ d(q)$.
\end{enumerate}
\item In every ap-trioid, the following CDL-axiom is derivable.
  \begin{enumerate}
  \item[(d)] $[x\|y] p = [x] p\cdot [y] p$.
  \end{enumerate}
\item In every ap-Kleene algebra, the following star axioms are derivable.
\begin{enumerate}
\item[(e)] $1_\sigma\cdot [x][x^\ast]p = [x^\ast]p$.
\item[(f)] $ p \le [x]p \Rightarrow p \le [x^\ast] p$.
\end{enumerate}
\end{enumerate}
\end{proposition}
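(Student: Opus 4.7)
The plan is to derive each box axiom by applying De Morgan duality to the corresponding diamond axiom of Lemma \ref{P:antidomaindiacda}, using the definition $[x]p = a(x \cdot a(p))$ together with the algebraic properties of antidomain collected in Lemma \ref{P:antidomprops} and the identities $d = a \circ a$ and $a\circ d = a$ (the latter established in the proof of Proposition \ref{P:antidomainba}).

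For the non-star axioms (a)--(d) the proofs are all short direct calculations. In (a), expanding $[x+y]p = a((x+y) \cdot a(p))$ by right distributivity and applying $a(u+v) = a(u)\cdot a(v)$ (Lemma \ref{P:antidomprops}(7)) immediately yields $[x]p\cdot [y]p$. In (b), antidomain associativity (applicable since $a(p)$ is an antidomain element) rewrites $[x\cdot y]p$ as $a(x\cdot(y\cdot a(p)))$; on the other side $[x][y]p = a(x\cdot a(a(y\cdot a(p)))) = a(x\cdot d(y\cdot a(p)))$, and antidomain locality collapses the two expressions. In (c), the export law $a(a(w)\cdot v) = d(w)+a(v)$ (Lemma \ref{P:antidomprops}(8)), applied with $w=a(p)$ and $v=a(q)$, gives $[d(p)]q = d(a(p))+d(q) = a(p)+d(q)$, using $a\circ d = a$ in the last step. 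In (d), the antidomain interaction axiom transforms $[x\|y]p$ into $a((x\cdot a(p))\|(y\cdot a(p)))$, after which the antidomain concurrency axiom $a(u\|v)=a(u)+a(v)$ closes the calculation.

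For the star axioms (e) and (f), I would substitute $a(p)$ for $p$ in the diamond laws of Lemma \ref{P:antidomaindiacda}(e)--(f) and dualise. For (e), the diamond unfold applied to $a(p)$ gives $\langle x^*\rangle a(p) = a(p)+\langle x\rangle\langle x^*\rangle a(p)$; applying $a$ to both sides and using $a(u+v)=a(u)\cdot a(v)$ yields $[x^*]p = d(p)\cdot a(\langle x\rangle\langle x^*\rangle a(p))$, and the second factor simplifies to $[x][x^*]p$ via the auxiliary identity $a(\langle x\rangle d(z)) = [x]a(z)$ (immediate from locality together with $a\circ d = a$). For (f), the premise $p\le [x]p$ translates, via the boolean algebra of domain elements (Proposition \ref{P:antidomainba}) and the equivalence $p\le a(r)\Leftrightarrow r\le a(p)$, into $\langle x\rangle a(p)\le a(p)$; diamond induction then gives $\langle x^*\rangle a(p)\le a(p)$, which dualises back to $p\le [x^*]p$.

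The main obstacle is the bookkeeping around the interaction of $a$ and $d=a\circ a$: every De Morgan step introduces an antidomain or domain composite that must be simplified using locality, complementation, or $a\circ d=a$ before the next step proceeds. Collecting a handful of auxiliary identities---in particular $a\circ d=a$, $d\circ a=a$, and $a(\langle x\rangle d(z))=[x]a(z)$---as a preparatory lemma reduces each of the six derivations to a routine chain of rewrites.
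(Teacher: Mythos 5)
Your overall strategy is exactly the paper's: the paper offers no explicit proof of this proposition beyond the remark that the box axioms ``follow easily from De Morgan duality'' from Lemma~\ref{P:antidomaindiacda}, and your calculations supply precisely that missing detail. Parts (a), (b), (c) and (f) check out: the use of right distributivity with antidomain multiplicativity for (a), antidomain associativity and locality for (b), the export law with $d\circ a=a$ for (c), and the translation of $p\le[x]p$ into $\langle x\rangle a(p)\le a(p)$ via complementation in $d(S)$ followed by diamond induction for (f) are all sound, and your suggestion to isolate $a\circ d=a$, $d\circ a=a$ and $a(\langle x\rangle d(z))=[x]a(z)$ as a preparatory lemma is sensible.

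There is, however, one point you pass over silently. In (d) your own computation terminates at $a(x\cdot a(p))+a(y\cdot a(p))=[x]p+[y]p$, which is \emph{not} the printed claim $[x]p\cdot[y]p$; these differ in general (join versus meet in the lattice of domain elements). The sum is in fact the correct De Morgan dual of $\langle x\|y\rangle p=\langle x\rangle p\cdot\langle y\rangle p$ and matches the pattern of (a) as well as Peleg's original box axiom for concurrent composition, so the printed product is almost certainly a misprint---but a proof cannot claim the calculation ``closes'' when it lands on a different expression; you must either flag the discrepancy or show the two coincide (they do not). The same issue arises in (e): dualising the unfold law $p+\langle x\rangle\langle x^\ast\rangle p=\langle x^\ast\rangle p$ at $a(p)$ yields, as you yourself derive, $[x^\ast]p=d(p)\cdot[x][x^\ast]p=p\cdot[x][x^\ast]p$ for a domain element $p$, not the printed $1_\sigma\cdot[x][x^\ast]p=[x^\ast]p$ (which would force $[x][x^\ast]p=[x^\ast]p$). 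Again this is evidently a typo in the statement, mirroring the one in Lemma~\ref{P:antidomaindiacda}(e) versus Lemma~\ref{P:cdaunfold}, but your write-up should say so explicitly rather than assert that the printed identities have been obtained.
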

In sum, these results yield the second main theorem of this article.
\begin{theorem}\label{P:cdaboxdiamond}
  The box and diamond axioms of $\CDL$ are derivable in ap-bi-Kleene
  algebras.
\end{theorem}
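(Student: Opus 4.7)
The theorem is a consolidation: since an ap-bi-Kleene algebra is by definition both an ap-trioid and an ap-Kleene algebra, the plan is simply to assemble Lemma~\ref{P:antidomaindiacda} (parts (a)--(f), covering the diamond axioms) with Proposition~\ref{P:antidomainboxcda} (parts (a)--(f), covering the box axioms) and observe that every hypothesis of those two results is satisfied. The substantive content therefore lies in establishing the two preceding statements; the theorem itself is then a one-line consequence.

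For Lemma~\ref{P:antidomaindiacda}, the cleanest route is to invoke Proposition~\ref{P:KAdomainantidomain}: every ap-dioid (respectively ap-trioid, ap-Kleene algebra, ap-bi-Kleene algebra) is a dp-dioid (etc.)~under the definition $d = a\circ a$. Once this transfer is in hand, the diamond axioms (a)--(c) are exactly Lemma~\ref{P:cdlaxioms1}(1), axiom (d) is Lemma~\ref{P:cdlaxioms1}(2), the star unfold (e) is Lemma~\ref{P:cdaunfold}, and the star induction (f) is Proposition~\ref{P:cdainduct}. No new equational work is needed.

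For Proposition~\ref{P:antidomainboxcda} the plan is to expand $[x]y = a(x\cdot a(y))$ and argue by De~Morgan duality, using the antidomain laws collected in Lemma~\ref{P:antidomprops}: additivity of antidomain $a(x+y) = a(x)\cdot a(y)$ gives box axiom (a); antidomain locality $a(x\cdot y) = a(x\cdot d(y))$ together with antidomain associativity (which applies because $a(y)$ is an antidomain element) gives (b); the export law $a(a(x)\cdot y) = d(x) + a(y)$ gives (c); the antidomain interaction and concurrency axioms give (d). For the star axioms, (e) is obtained by applying $a\circ a$ to the ap-Kleene star unfold identity and rewriting; (f), the box star induction, is the main obstacle and is best derived via the diamond version by noting that $p\le [x]p$ is equivalent, by antidomain complementation, to the annihilation condition $a(p)+x\cdot a(p) \le a(p)$, so that the ap-Kleene left star induction axiom yields $x^\ast \cdot a(p)\le a(p)$, which translates back via $d = a\circ a$ to $p \le [x^\ast]p$.

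The hard part will be keeping the associativity and left distributivity restrictions of the proto-setting straight throughout these box manipulations: since $x$ and $y$ in $[x]y$ are arbitrary, every rewrite step that shifts brackets or distributes on the left must be justified by the fact that the element adjacent to the bracket change is an $a(\cdot)$ element (hence an antidomain element, hence a subidentity), so that Corollary~\ref{P:domassocinter} and its abstract axiomatic analogues apply. Once this bookkeeping is carried out carefully, the statement of Theorem~\ref{P:cdaboxdiamond} follows by citing Lemma~\ref{P:antidomaindiacda} and Proposition~\ref{P:antidomainboxcda} in sequence.
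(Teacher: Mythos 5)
Your overall plan coincides with the paper's: Theorem~\ref{P:cdaboxdiamond} carries no separate proof there, being the immediate conjunction of Lemma~\ref{P:antidomaindiacda} (diamonds, obtained by transferring to the dp-setting via Proposition~\ref{P:KAdomainantidomain} and citing Lemma~\ref{P:cdlaxioms1}, Lemma~\ref{P:cdaunfold} and Proposition~\ref{P:cdainduct}) and Proposition~\ref{P:antidomainboxcda} (boxes, ``by De Morgan duality''), exactly as you describe; your box derivations are in fact more detailed than the paper's. The one step in your elaboration that fails as written is the reduction of box star induction: $p\le[x]p$ is \emph{not} equivalent to $a(p)+x\cdot a(p)\le a(p)$, i.e.\ to $x\cdot a(p)\le a(p)$. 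Over $X=\{a,b,c\}$ take $P=\{(a,\{a\})\}$ and $R=\{(b,\{c\})\}$: then $P\subseteq[R]P$ holds (vacuously, since $a$ has no $R$-successor), yet $R\cdot a(P)=\{(b,\{c\})\}\not\subseteq a(P)$. The correct dualisation is $p\le[x]p=a(\langle x\rangle a(p))\Leftrightarrow\langle x\rangle a(p)\le a(p)$, because $a$ restricts to an order-reversing involution on the boolean algebra $d(S)$ (Proposition~\ref{P:antidomainba}); applying the diamond induction law (Proposition~\ref{P:cdainduct}) to $a(p)$ then gives $\langle x^\ast\rangle a(p)\le a(p)$, i.e.\ $p\le[x^\ast]p$. (If you prefer to route through the raw ap-Kleene induction axiom, the hypothesis you actually obtain from $p\le[x]p$ via demodalisation is $x\cdot a(p)\le a(p)\cdot x$, which feeds the simulation law of Lemma~\ref{P:simulation}, not the annihilation form you wrote.) With that repair your argument matches the paper's.
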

We therefore call ap-bi-Kleene algebras \emph{concurrent dynamic
  algebras} as well. In contrast to dp-Kleene algebras, these are
based on boolean algebras of domain elements.

Finally we present counterexamples to multiplicativity and
co-strictness of boxes.

\begin{lemma}\label{P:boxcounter}
  There are multirelations $R$, $P$ and $Q$ such that the following holds.
  \begin{enumerate}
  \item $[R](P\cdot Q) \neq [R]P \cdot [R]Q$,
  \item $[R]1_\sigma \neq 1_\sigma$.
  \end{enumerate}
\end{lemma}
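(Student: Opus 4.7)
The plan is to unfold the definition $[R]P = a(R\cdot a(P))$ on small finite witnesses and exploit ideas already used in Lemma~\ref{P:counterexamples}(1) and Lemma~\ref{P:diacounter}, transported across De~Morgan duality. Each counterexample should reduce to a direct evaluation of the definitions of sequential composition and antidomain from Sections~\ref{S:multirelations} and~\ref{S:multireldom}, with no auxiliary machinery.

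For part~(2) I would take the minimal witness $R = \{(a,\emptyset)\}$. Since $a(1_\sigma)=\emptyset$ by direct inspection, $[R]1_\sigma$ collapses to $a(R\cdot\emptyset)$, and Lemma~\ref{P:counterexamples}(1) has already established that $R\cdot\emptyset = \{(a,\emptyset)\} \neq \emptyset$; its antidomain then omits the point $a$ and is strictly smaller than $1_\sigma$. Conceptually, this isolates the same pathology that refuted right-annihilation: $R$ is enabled at $a$ even though no state is reached from $a$, which is exactly what spoils the expected co-strictness of $[-]$.

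For part~(1) I would reuse the external-choice witness $R = \{(a,\{a,b\})\}$ from Lemma~\ref{P:diacounter}(1) together with $P = \{(a,\{a\})\}$ and $Q = \{(b,\{b\})\}$ over $X = \{a,b\}$. Since $P$ and $Q$ are disjoint subidentities, $P\cdot Q = \emptyset$ (on subidentities sequential composition is intersection, Section~\ref{S:subidlaws}), so the left-hand side reduces to $a(R\cdot a(\emptyset)) = a(R\cdot 1_\sigma) = a(R)$, i.e.\ the subidentity at~$b$ alone. For the right-hand side, $a(P)$ and $a(Q)$ are the subidentities at $b$ and at $a$ respectively, and one verifies that $R\cdot a(P) = R\cdot a(Q) = \emptyset$; hence each box equals $a(\emptyset) = 1_\sigma$ and their product is $1_\sigma$, which strictly exceeds $a(R)$.

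The only mildly delicate step is the vanishing of $R\cdot a(P)$ (and symmetrically $R\cdot a(Q)$) in part~(1): this is where the higher-order Skolemisation built into the definition of $\cdot$ must be unpacked, because any candidate choice function on the intermediate set $\{a,b\}$ would need a pair in $a(P)$ starting at~$a$, and no such pair exists. Once this observation is made, the remainder of both arguments is a direct unfolding of the defining clauses for $a(\cdot)$ on the elementary multirelations displayed above.
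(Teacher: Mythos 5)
Your proof is correct, and the computations check out: for part (2), $[R]1_\sigma=a(R\cdot 0)=a(\{(a,\emptyset)\})=\emptyset\neq 1_\sigma$; for part (1), $[R](P\cdot Q)=a(R)=\{(b,\{b\})\}$ while $R\cdot a(P)=R\cdot a(Q)=\emptyset$ forces $[R]P\cdot[R]Q=1_\sigma$. However, you take a genuinely different route from the paper. The paper does not compute any boxes at all: it observes that, since $a$ is complementation on the boolean algebra of domain elements and $\cdot$ is meet there, the identity $\forall p,q.\ \langle x\rangle(p+q)=\langle x\rangle p+\langle x\rangle q$ holds iff $\forall p,q.\ [x](p\cdot q)=[x]p\cdot[x]q$ does (and $\langle x\rangle 0=0$ iff $[x]1_\sigma=1_\sigma$), so the counterexamples of Lemma~\ref{P:diacounter} transfer verbatim. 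Your witnesses are in fact the same ones used there, so the two arguments refute the same instances; what your version buys is an explicit, self-contained evaluation of the box operator on the multirelational model (including the instructive point that $R\cdot a(P)$ vanishes because no choice function on the intermediate set $\{a,b\}$ can land in $a(P)$ from $a$), at the cost of redoing work the De~Morgan duality already delivers for free. One pedantic remark: the paper's duality step is a universally quantified equivalence (failure of additivity for \emph{some} $p,q$ gives failure of multiplicativity for \emph{some} $p',q'$, a priori different ones), so your direct computation has the mild advantage of exhibiting the failing box instance explicitly rather than inferring its existence.
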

\begin{proof}~
  \begin{enumerate}
  \item Obviously, $\forall p, q.\ \langle x\rangle (p+q) = \langle
    x\rangle p+\langle x\rangle q$ if and only if $\forall p,q.\
    [x](p\cdot q) = [x]p\cdot [x]q$. Hence the counterexample from
    Lemma~\ref{P:diacounter} applies. 
\item Similarly, $\langle x\rangle 0 = 0$ if and only if $[x]1_\sigma=1_\sigma$.
  \end{enumerate}
\end{proof}

The following counterexample is directly related to this
lemma. According to J\'onsson and Tarski, modal boxes and diamonds are
conjugate functions on boolean algebras, that is, they are related by
the conjugation law
\begin{equation*}
  \langle x\rangle p\cdot q = 0\Leftrightarrow p\cdot [x]q=0.
\end{equation*}
Conjugate functions are a fortiori additive. By
Lemma~\ref{P:diacounter} and~\ref{P:boxcounter}, this cannot be the
case in the multirelational setting, hence the conjugation law cannot
hold. This is confirmed directly by the multirelation $R
=\{(a,\emptyset)\}$ and the subidentity $P= \{(a,\{a\})\}$ over the
set $X=\{a\}$, which satisfy
\begin{align*}
  \langle R\rangle P \cdot P =d(R\cdot P)\cdot P= P \supset \emptyset
  = P\cdot a(R\cdot a(P)) = P \cdot [R]P).
\end{align*}

\section{The Star and Finite Iteration}\label{S:finiteiteration}

It is well known that least fixpoints can be reached by iterating from
the least element of a complete lattice up to the first ordinal
whenever the function under consideration is not only isotone, but
also continuous. Otherwise, if the function is only isotone,
transfinite induction beyond the first ordinal is required.

Our counterexample to left distributivity rules out continuity in
general, but, in fact, chain continuity or directedness suffices for
the star.  As in Section~\ref{S:star}, we consider
\begin{equation*}
  F_R=\lambda X.1_\sigma\cup R\cdot X.
\end{equation*}

Peleg has provided a counterexample even to chain
completeness~\cite{Peleg87}. We display a proof in
Appendix~2 to make this article selfcontained.

\begin{lemma}[Peleg]\label{P:chaincompletecounter}
  There exists a multirelation $R$ and an ascending chain of
  multirelations $S_i$, $i\in\mathbb{N}$, such that $F_R(\bigcup_{i\in\mathbb{N}}S_i) \neq \bigcup_{i\in\mathbb{N}}F_R(S_i)$.
\end{lemma}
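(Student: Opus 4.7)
The plan is to exhibit a multirelation $R$ and an ascending chain $S_i$ whose failure of continuity stems from the second-order quantifier hidden in Peleg's definition of sequential composition.

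First, I would observe that since $F_R(X) = 1_\sigma \cup R\cdot X$ and union commutes with arbitrary unions, the failure of $F_R\bigl(\bigcup_i S_i\bigr) = \bigcup_i F_R(S_i)$ must reduce to a failure of $R\cdot \bigl(\bigcup_i S_i\bigr) = \bigcup_i R\cdot S_i$. By isotonicity of sequential composition one inclusion is automatic, so only the ($\supseteq$ direction of distributing $R\cdot {-}$ over an ascending union) can break. The reason it should break is structural: to witness $(a,A)\in R\cdot S$, the definition demands a single function $f$ whose entire graph $G_f(B)$ lies in $S$. If $B$ is infinite, each individual pair $(b,f(b))$ in the witnessing graph may be contained in $\bigcup_i S_i$ via a different index $i_b$, but there is no reason a single $S_i$ should contain all of them simultaneously.

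Guided by this, I would take $X = \mathbb{N}\cup\{a\}$ with $a\notin\mathbb{N}$ and set
\begin{equation*}
  R = \{(a,\mathbb{N})\}, \qquad S_i = \{(n,\{n\}) \mid n \le i\}.
\end{equation*}
Clearly $S_0\subseteq S_1\subseteq\dots$ is an ascending chain with $\bigcup_{i\in\mathbb{N}} S_i = \{(n,\{n\}) \mid n\in\mathbb{N}\}$. To see $(a,\mathbb{N})\in R\cdot \bigl(\bigcup_i S_i\bigr)$, pick the intermediate set $B = \mathbb{N}$ (forced since this is the only set to which $R$ relates $a$) and the function $f(n) = \{n\}$; then $G_f(\mathbb{N})\subseteq \bigcup_i S_i$ and $\bigcup f(\mathbb{N}) = \mathbb{N}$. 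Conversely, if $(a,\mathbb{N})\in R\cdot S_i$ for some $i$, then again $B=\mathbb{N}$ and we would need $(n,f(n))\in S_i$ for every $n\in\mathbb{N}$, which forces $n\le i$ for every natural $n$, a contradiction. Hence $(a,\mathbb{N})$ witnesses the inequality on the $R\cdot{-}$ side, and since $(a,\mathbb{N})\notin 1_\sigma$ (as $\mathbb{N}\ne\{a\}$), the same pair witnesses $F_R\bigl(\bigcup_i S_i\bigr)\neq \bigcup_i F_R(S_i)$.

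The only real obstacle is ensuring that the choice of $R$ leaves no alternative intermediate set $B$ that might evade the argument, which is why I keep $R$ a singleton $\{(a,\mathbb{N})\}$; similarly, using the diagonal-style chain $S_i = \{(n,\{n\})\mid n\le i\}$ guarantees that each index $i$ is "too small" for the global witness $f$. Beyond this it is a matter of bookkeeping, so I would not expect any technical difficulty.
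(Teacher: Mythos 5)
Your proof is correct and takes essentially the same approach as the paper: both exhibit an explicit counterexample in which $R$ relates a point to the infinite intermediate set $\mathbb{N}$, so that the single witnessing function required by Peleg's composition has its graph in $\bigcup_i S_i$ but in no individual $S_i$. The paper's witness ($R=\{(n,\mathbb{N})\mid n\in\mathbb{N}\}$, $S_i=\{(n,\{m\})\mid n\in\mathbb{N},\,0\le m\le i\}$) fails instead because the output union $\bigcup f(\mathbb{N})$ stays bounded by $i$, whereas yours fails because $S_i$ does not even cover the intermediate set; your variant is, if anything, slightly simpler since $R\cdot S_i=\emptyset$ for every $i$.
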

Chain completeness can, however, be obtained if a multirelation $R$ is
\emph{externally image finite}, that is, for all $(a,A) \in R$ the set
$A$ has finite cardinality. This notion has been called \emph{finitely
  branched} by Peleg. We have chosen a different name to distinguish
it from \emph{interal image finiteness}, which is the case when for
each $a$, the set of all $(a,A)$ has finite cardinality. From a
computational point of view, external image finiteness is not a
limitation, since infinite sets $A$ correspond to unbounded external
nondeterminism or unbounded concurrent composition, which is not
implementable.

\begin{lemma}[Peleg]\label{P:chaincont}
  If $R$ is externally image finite, then $F_R$ is chain continuous.
\end{lemma}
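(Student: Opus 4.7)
The plan is to reduce chain continuity of $F_R$ to chain continuity of left multiplication by $R$ over union, that is, to prove
\begin{equation*}
  R\cdot\Bigl(\bigcup_{i\in\mathbb{N}}S_i\Bigr) = \bigcup_{i\in\mathbb{N}}(R\cdot S_i)
\end{equation*}
whenever the $S_i$ form an ascending chain. Since $F_R$ is isotone (Lemma~\ref{P:fixpointprops}), the inclusion $\bigcup_i F_R(S_i)\subseteq F_R(\bigcup_i S_i)$ is immediate, and similarly the inclusion $\bigcup_i (R\cdot S_i)\subseteq R\cdot (\bigcup_i S_i)$ follows from right isotonicity of sequential composition. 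Also the summand $1_\sigma$ in $F_R$ is harmless, since $1_\sigma\subseteq F_R(S_0)\subseteq\bigcup_i F_R(S_i)$. So everything reduces to the nontrivial inclusion
\begin{equation*}
  R\cdot\Bigl(\bigcup_{i\in\mathbb{N}}S_i\Bigr) \subseteq \bigcup_{i\in\mathbb{N}}(R\cdot S_i).
\end{equation*}

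To establish this, I would take an arbitrary pair $(a,A)\in R\cdot(\bigcup_i S_i)$ and unfold the definition of sequential composition: there exist $B$ and $f$ with $(a,B)\in R$, $G_f(B)\subseteq \bigcup_i S_i$, and $A=\bigcup f(B)$. The assumption of external image finiteness applied to $(a,B)\in R$ gives that $B$ is a finite set. Now for each $b\in B$ the pair $(b,f(b))$ lies in some $S_{i_b}$, so I pick (using choice over the finite set $B$) an index $i_b\in\mathbb{N}$ witnessing this.

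The key step uses finiteness of $B$ together with the chain property: set $i^{*}=\max_{b\in B}i_b$, which exists precisely because $B$ is finite. Since $(S_i)_{i\in\mathbb{N}}$ is ascending, $S_{i_b}\subseteq S_{i^*}$ for every $b\in B$, so $G_f(B)\subseteq S_{i^*}$. Hence $(a,A)\in R\cdot S_{i^*}\subseteq \bigcup_i(R\cdot S_i)$, which is the inclusion we needed.

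The only potential obstacle is the higher-order character of the definition of $R\cdot X$: the witness function $f$ ranges over all of $B$, and without finiteness of $B$ there is no reason that a single $S_i$ should simultaneously cover all pairs $(b,f(b))$; indeed this is exactly where Peleg's counterexample (Lemma~\ref{P:chaincompletecounter}) exploits an externally infinite $R$. External image finiteness removes precisely that obstacle by bounding the set of indices $\{i_b\mid b\in B\}$, so the maximum $i^*$ is well-defined and the argument goes through.
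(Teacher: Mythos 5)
Your proposal is correct and follows essentially the same route as the paper's own proof: reduce to the nontrivial inclusion, use external image finiteness to make the intermediate set $B$ finite, choose an index for each $b\in B$, and take the maximum over the finite index set so that the ascending chain absorbs all witnesses into a single $S_{i^*}$. Your explicit remark that this is precisely the point where Peleg's counterexample (Lemma~\ref{P:chaincompletecounter}) fails for externally infinite $R$ is a nice addition, but the argument itself is the one in Appendix~2.
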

See Appendix~2 for a proof.

We define powers of $F_R$ inductively as $F_R^0=\lambda X.X$ and
$F_R^{n+1} = F_R\circ F_R^n$ and can then define iteration to the
first limit ordinal as
\begin{equation*}
  F_R^\ast = \bigcup_{i\in\mathbb{N}}F^i.
\end{equation*}
General fixpoint theory (Kleene's fixpoint theorem) then implies the
following fact.

\begin{proposition}\label{P:iterprop}
  If $R$ is externally image finite, then $R^\ast =
  F_R^\ast(\emptyset)$.
\end{proposition}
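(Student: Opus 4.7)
The plan is to apply Kleene's fixpoint theorem, since all its hypotheses have been assembled by the preceding lemmas. Concretely, by Lemma~\ref{P:fixpointprops}(2), $(M(X),\cup,\cap,\emptyset,U)$ is a complete lattice whose bottom element is $\emptyset$; by Lemma~\ref{P:fixpointprops}(1), $F_R$ is isotone; and by Lemma~\ref{P:chaincont}, external image finiteness of $R$ makes $F_R$ chain continuous. What remains is to verify that the standard Kleene iteration produces a fixpoint and that this fixpoint is least.

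First I would show that the sequence $(F_R^n(\emptyset))_{n\in\mathbb{N}}$ is an ascending chain by induction on $n$: the base case reduces to $\emptyset\subseteq F_R(\emptyset)$, and the step follows from isotonicity of $F_R$ applied to the inductive hypothesis $F_R^n(\emptyset)\subseteq F_R^{n+1}(\emptyset)$. Thus $F_R^\ast(\emptyset)=\bigcup_{n\in\mathbb{N}}F_R^n(\emptyset)$ is the supremum of a chain, and chain continuity of $F_R$ yields
\begin{equation*}
F_R(F_R^\ast(\emptyset))=\bigcup_{n\in\mathbb{N}}F_R(F_R^n(\emptyset))=\bigcup_{n\in\mathbb{N}}F_R^{n+1}(\emptyset)=F_R^\ast(\emptyset),
\end{equation*}
the last equality holding because $F_R^0(\emptyset)=\emptyset$ is absorbed by the union. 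So $F_R^\ast(\emptyset)$ is a fixpoint of $F_R$.

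For minimality, let $T$ be any pre-fixpoint, i.e.\ $F_R(T)\subseteq T$. A straightforward induction on $n$, using $\emptyset\subseteq T$ in the base and isotonicity together with $F_R(T)\subseteq T$ in the step, gives $F_R^n(\emptyset)\subseteq T$ for every $n$. Taking unions, $F_R^\ast(\emptyset)\subseteq T$, so $F_R^\ast(\emptyset)$ is the least pre-fixpoint and hence the least fixpoint $\mu F_R = R^\ast$.

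There is no real obstacle; the only nontrivial ingredient is chain continuity, but this is precisely what the external image finiteness hypothesis delivers via Lemma~\ref{P:chaincont}. The rest is the textbook argument for Kleene's theorem, adapted to the concrete lattice of multirelations.
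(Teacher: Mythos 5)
Your proof is correct and follows essentially the same route as the paper: the paper simply invokes Kleene's fixpoint theorem on the basis of Lemma~\ref{P:fixpointprops} (complete lattice, isotonicity) and Lemma~\ref{P:chaincont} (chain continuity under external image finiteness), and your argument just spells out the standard proof of that theorem in the concrete lattice of multirelations.
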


We now compare this notion of finite iteration with another one.
\begin{gather*}
  R^{(0)} = \emptyset,\qquad R^{(n+1)} = 1_\sigma\cup R\cdot R^{(n)},\qquad  R^{(\ast)} = \bigcup_{n\in\mathbb{N}} R^{(n)}.
\end{gather*}
Our next lemma shows that the inductive definition of $R^{(\ast)}$
captures the iterative function application of $F^\ast_R$ to
$\emptyset$ and hence $R^\ast$ for external image finiteness. It is
proved in Appendix~2.
\begin{lemma}\label{P:iterzeroone}~
\begin{enumerate}
\item  For all $n$, $F_R^n(\emptyset) = R^{(n)}$ and therefore $F_R^\ast(\emptyset)=R^{(\ast)}$.
\item If $R$ is externally image finite, then $R^\ast = R^{(\ast)}$.
\end{enumerate}
\end{lemma}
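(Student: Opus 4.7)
My plan is to prove the two parts separately, with part (2) following almost immediately from part (1) together with Proposition~\ref{P:iterprop}.

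For part (1), I would proceed by straightforward induction on $n$. The base case $n=0$ is immediate from the definitions: $F_R^0(\emptyset) = \emptyset = R^{(0)}$. For the inductive step, assuming $F_R^n(\emptyset) = R^{(n)}$, I compute
\begin{equation*}
F_R^{n+1}(\emptyset) = F_R(F_R^n(\emptyset)) = F_R(R^{(n)}) = 1_\sigma \cup R \cdot R^{(n)} = R^{(n+1)},
\end{equation*}
using the inductive definition of $F_R^{n+1}$, the inductive hypothesis, the definition of $F_R$, and the defining clause of $R^{(n+1)}$. Having shown $F_R^n(\emptyset) = R^{(n)}$ for every $n \in \mathbb{N}$, the identity $F_R^\ast(\emptyset) = R^{(\ast)}$ follows at once by taking unions over $n$, since
\begin{equation*}
F_R^\ast(\emptyset) = \bigcup_{n\in\mathbb{N}} F_R^n(\emptyset) = \bigcup_{n\in\mathbb{N}} R^{(n)} = R^{(\ast)}.
\end{equation*}

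For part (2), I simply chain the available results. Assuming $R$ is externally image finite, Proposition~\ref{P:iterprop} gives $R^\ast = F_R^\ast(\emptyset)$, and part (1) rewrites the right-hand side as $R^{(\ast)}$, so $R^\ast = R^{(\ast)}$.

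There is no real obstacle here; the lemma is essentially an unpacking of definitions combined with Kleene's fixpoint theorem as invoked in Proposition~\ref{P:iterprop}. The only point requiring any attention is making sure that in the inductive step one does not tacitly use left distributivity of sequential composition over union (which fails by Lemma~\ref{P:counterexamples}(3)); but since we are applying $F_R$ to a single fixed multirelation rather than to a union, no such distributivity is invoked.
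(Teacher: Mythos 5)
Your proof is correct and follows essentially the same route as the paper's: induction on $n$ for part (1), taking unions for the starred identity, and combining part (1) with Proposition~\ref{P:iterprop} for part (2) (which the paper dispatches as ``immediate from (1)''). Your explicit remark that no left distributivity is invoked is a sensible precaution but adds nothing beyond the paper's argument.
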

Finally we show that external image finiteness in
Lemma~\ref{P:iterzeroone}(2) is nessesary.

\begin{lemma}\label{P:itercounter}
  There exists a multirelation $R$ such that $R^{(\ast)}$ is not a
  fixpoint of $F_R$.
\end{lemma}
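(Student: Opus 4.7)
The plan is to exhibit a multirelation $R$ with an infinitely branching pair for which the finite iterates $R^{(n)}$ fail to accumulate a set that one further application of $F_R$ produces. Since the chain $(R^{(n)})_{n\in\mathbb{N}}$ is ascending, isotony gives $R^{(\ast)}\subseteq F_R(R^{(\ast)})$, so it suffices to find $(a,A)\in R\cdot R^{(\ast)}\setminus R^{(\ast)}$. Concretely, I would take $X=\mathbb{N}$ and
\begin{equation*}
R=\{(0,\{1,2,3,\ldots\})\}\cup\{(k,\{k+1\})\mid k\ge 1\},
\end{equation*}
so that $R$ is externally image infinite only at the input $0$ and otherwise merely increments.

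Next I would characterise $R^{(n)}$ by induction on $n$. Because the only $R$-image of $k\ge 1$ is the singleton $\{k+1\}$, the witnessing function $f$ in the definition of $R\cdot R^{(n)}$ applied to input $k$ has a single argument, and a direct induction shows that the $R^{(n)}$-images of $k\ge 1$ are exactly the singletons $\{k+j\}$ for $0\le j\le n-1$. For the input $0$, applying the pair $(0,\{1,2,\ldots\})\in R$ in the same way yields that every image in $R^{(n)}$ other than $\{0\}$ has the form $\{b+j_b\mid b\ge 1\}$ for some sequence $(j_b)_{b\ge 1}$ with $0\le j_b\le n-2$. Taking unions, the $R^{(\ast)}$-images of $k\ge 1$ are all singletons $\{k+j\}$ with $j\ge 0$, while every $R^{(\ast)}$-image of $0$ other than $\{0\}$ admits a representation $\{b+j_b\mid b\ge 1\}$ with the offsets $j_b$ \emph{uniformly bounded}.

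Finally I would produce a diagonal witness. Define $f\colon\{1,2,3,\ldots\}\to 2^{\mathbb{N}}$ by $f(b)=\{b^2\}$; by the previous step each $(b,\{b^2\})$ lies in $R^{(b^2-b+1)}\subseteq R^{(\ast)}$. Pairing this $f$ with $(0,\{1,2,3,\ldots\})\in R$ via the definition of sequential composition yields $(0,\{b^2\mid b\ge 1\})\in R\cdot R^{(\ast)}$. However, $\{1,4,9,16,\ldots\}$ cannot equal any set of the form $\{b+j_b\mid b\ge 1\}$ with a uniform bound $j_b\le N$: fixing $N$, consecutive squares $k^2$ and $(k+1)^2$ differ by $2k+1$, which exceeds $N$ for large $k$, so for large $b$ the interval $[b,b+N]$ contains no perfect square, forcing $b+j_b\notin\{1,4,9,\ldots\}$, a contradiction. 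Consequently $(0,\{b^2\mid b\ge 1\})\notin R^{(\ast)}$, which together with the preceding paragraph witnesses $R^{(\ast)}\subsetneq F_R(R^{(\ast)})$.

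The main obstacle will be the inductive characterisation of the $R^{(n)}$-images of the infinitely branching input $0$: tracking the second-order choice functions in the definition of sequential composition across all levels of the iteration is delicate, though the sparseness of $R$ away from $0$ keeps the bookkeeping manageable. Conceptually, the construction is exactly a failure of chain continuity of $\lambda X.R\cdot X$ along the ascending chain $(R^{(n)})_n$, reflecting the same phenomenon exploited in Peleg's Lemma~\ref{P:chaincompletecounter}.
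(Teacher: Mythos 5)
Your proof is correct and follows essentially the same strategy as the paper's: both exhibit a multirelation with a single externally image-infinite input whose successors need unboundedly many iterations to realise the local outputs demanded by one further application of $F_R$, producing a pair in $F_R(R^{(\ast)})\setminus R^{(\ast)}$. The paper's concrete witness is slightly leaner ($R=\{(m,\{n\mid n<m\})\mid m\in\mathbb{N}\cup\{\infty\}\}$ with witness pair $(\infty,\emptyset)$, so no diagonal set such as the squares is needed), but your inductive characterisation of the iterates and the diagonalisation against a uniform bound on the offsets $j_b$ are sound.
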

\begin{proof}
Consider the multirelation
\begin{equation*}
  R = \{ (m, \{n\ | \ n < m\}) \mid m\in\mathbb{N}\cup\{\infty\}\}.
\end{equation*}
It follows that $(0,\emptyset) \in R$ and $R\cdot\emptyset =
\{(0,\emptyset)\}$.

Then $(m,\{n\mid n\le m-2\})\not\in R$ but it is in $R^{(2)}$, and
$(m,\{n\mid n\le m-k\})\not\in R^{(i)}$ for $i<k$, but it is in
$R^{(k)}$; similarly $(m,\emptyset)\in R^{(m)}$ but not in $R^{(l)}$
for all $l<m$. Consequently, $(\infty,\emptyset)\not\in R^{(n)}$ for
all $n\in\mathbb{N}$, and therefore $(\infty,\emptyset)\not\in
R^{(\ast)}$, but $(\infty,\emptyset)\in F_R(R^{(\ast)})$.
\end{proof}


\section{Refutation of Segerberg's Axiom}\label{S:segerberg}

Segerberg's axiom is the induction axiom of (non-concurrent)
propositional dynamic logic (cf.~\cite{HarelKozenTiuryn}). Goldblatt
uses its box version---his box semantics is different from ours---but
not the diamond one. This section provides a counterexample to
Segerberg's axiom in the multirelational model with box-diamond
duality.

In diamond form, Segerberg's axiom is
\begin{equation*}
  \langle x^\ast\rangle p \le p + \langle x^\ast\rangle (\langle x\rangle p - p).
\end{equation*}
In modal Kleene algebra it is equivalent to the star induction
axiom. For multirelations, the situation is different.

\begin{proposition}\label{P:segerbergcounter}
  There is a multirelation $R$ and a subidentity $P$ such that
  \begin{equation*}
    \langle R^\ast\rangle P \supset P \cup \langle R^\ast\rangle (\langle R\rangle P - P).
  \end{equation*}
\end{proposition}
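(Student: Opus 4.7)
The plan is to display an explicit counterexample on the three-element carrier $X=\{a,b,c\}$. The idea is to exploit the non-additivity of multirelational diamonds already witnessed by Lemma~\ref{P:diacounter}: I want a state whose only route to $P$ under $R^\ast$ passes through a collective branching step at which the individual branches cannot reach $P$ in one further $R$-step. That state then lies in $\langle R^\ast\rangle P$, but the branches are not members of $\langle R\rangle P - P$, so the state cannot be captured by $\langle R^\ast\rangle(\langle R\rangle P - P)$ either.

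Concretely, I would take
\[
R = \{(a,\{b,c\}),\ (b,\{c\}),\ (c,\{c\})\},\qquad P = \{(c,\{c\})\},
\]
and carry out four direct computations. First, $\langle R\rangle P = \{(b,\{b\}),(c,\{c\})\}$, since from $b$ and $c$ the unique $R$-image $\{c\}$ sits inside the support of $P$ while from $a$ the image $\{b,c\}$ does not. Second, $\langle R\rangle P - P = \{(b,\{b\})\}$. Third, I would analyse the iterates of $R$ from $a$ by induction on $n$: each $R^n$ with $n\ge 1$ sends $a$ first to $\{b,c\}$ and then combines an $R^{n-1}$-image at $b$ with one at $c$. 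Because $c$ has only $\{c\}$ as its $R$-image, the contribution at $c$ is always $\{c\}$, so no $R^n$-image from $a$ can be contained in $\{b\}$; for $n\ge 2$ the contribution at $b$ also collapses to $\{c\}$, giving $(a,\{c\})\in R^n$. Fourth, assembling these yields $\langle R^\ast\rangle P = \{(a,\{a\}),(b,\{b\}),(c,\{c\})\}$, whereas $P\cup\langle R^\ast\rangle(\langle R\rangle P - P) = \{(b,\{b\}),(c,\{c\})\}$, so $(a,\{a\})$ is the witness of strict containment.

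The main obstacle is the inductive description of the iterates from $a$: one must rule out every $R^n$-image that is contained in $\{b\}$ or equal to $\emptyset$. The former is handled by the observation about $c$ above, and the latter is immediate since $R$ contains no pair of the form $(x,\emptyset)$, so no empty output can arise in any iterate. All other steps are small finite verifications. The underlying reason for the failure of Segerberg's axiom in the multirelational setting is precisely this non-additivity of diamond: the branches $b$ and $c$ must cooperate to certify reachability of $P$ from $a$, yet neither of them is a one-step predecessor of $P$ on its own.
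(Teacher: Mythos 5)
Your counterexample is correct and is essentially the paper's own: the paper takes $R=\{(a,\{b,c\}),(b,\{b\}),(b,\{c\}),(c,\{c\})\}$ with the same $P$, which yields the same $R^\ast$ and the same witness $(a,\{a\})$, verified by the same finite computations (with $R^\ast=R^{(\ast)}$ justified via external image finiteness, or directly by observing that the iteration $R^{(n)}$ becomes stationary). One caveat on your closing intuition: $b$ \emph{is} a one-step predecessor of $P$ (that is exactly why $\langle R\rangle P - P=\{(b,\{b\})\}$ is nonempty); the real reason $(a,\{a\})$ escapes the right-hand side is that $a$ can only reach $\{a\}$, $\{c\}$ and $\{b,c\}$ under $R^\ast$ --- never a nonempty-or-empty subset of $\{b\}$ --- because the branch at $c$ always contributes $c$ to the collective output, which is precisely the computation your third step carries out.
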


\begin{proof}
  Let $R = \{(a,\{b,c\}),(b,\{b\}),(b,\{c\}),(c,\{c\})\}$ and $P =
  \{(c,\{c\})\}$. As previously, we visualise $R$ in the Hasse diagram
  in Figure~\ref{Fig:five}. The multirelation $R^\ast$ can be read off
  as the relational reflexive transitive closure from this diagram by
  chasing arrows.
  \begin{figure}[h]
    \centering
 \begin{equation*}
  \def\labelstyle{\normalsize}
\xymatrix{
bc\ar@(ul,ur)\ar[drr]&&\\
a\ar[u]&b\ar@(dl,dr)\ar[r]&c\ar@(ul,ur)\\
}
\end{equation*}
\vspace{.3cm}
     \caption{Diagram for $R$ in the proof of Proposition~\ref{P:segerbergcounter}}
    \label{Fig:five}
  \end{figure}
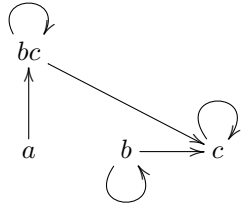
One can also use the diagram to check that
\begin{align*}
  R\cdot P &=\{(b,\{c\}),(c,\{c\})\},\\
  \langle R\rangle P &= \{(b,\{b\}),(c,\{c\})\},\\
\langle R\rangle P - P &= \{(b,\{b\})\}.
\end{align*}
One can compute $R^\ast$ by iterating with $R^{(\ast)}$
according to Lemma~\ref{P:iterzeroone}(2), since $R$ is externally
image finite. Obviously, $R\cdot \emptyset=\emptyset$. Therefore,
\begin{align*}
R^{(1)} &= 1_\sigma,\\
R^{(2)} &=  1_\sigma\cup R\cdot (1_\sigma\cup R)\\
& = \{(a,\{a\}),(a,\{c\}),(a,\{b,c\}),(b,\{b\}),(b,\{c\}),(c,\{c\})\},\\
R^{(3)} &= 1_\sigma\cup R\cdot (1_\sigma\cup R\cdot (1_\sigma\cup R)) = R^{(2)},\\
R^{(n)} &= R^{(2)},
\end{align*}
that is, iteration becomes stationary after four steps. Chain
completeness implies that
\begin{equation*}
 R^\ast = R^{(\ast)}= R^{(2)} = \{(a,\{a\}),(a,\{c\}),(a,\{b,c\}),(b,\{b\}),(b,\{c\}),(c,\{c\})\}.
\end{equation*}
On the one hand, his result yields
\begin{align*}
  \langle R^\ast\rangle (\langle R\rangle P - P) &= \langle R^\ast\rangle \{(b,\{b\})\}= \{(b,\{b\})\},\\
P\cup \langle R^\ast\rangle (\langle R\rangle P - P) &= \{(b,\{b\}),(c,\{c\})\}.
\end{align*}
On the other hand we obtain
\begin{align*}
R^\ast \cdot P &= \{(a,\{c\}),(b,\{c\}),(c,\{c\})\},\\
  \langle R^\ast\rangle P &= \{(a,\{a\}),(b,\{b\}),(c,\{c\})\}.
\end{align*}
This confirms that $\langle R^\ast\rangle P \supset P \cup \langle
R^\ast\rangle (\langle R\rangle P - P)$ and falsifies Segerberg's
formula.
\end{proof}
\begin{corollary}\label{P:segerbergcor}
Segerberg's axiom is not derivable in ap-bi-Kleene algebras.
\end{corollary}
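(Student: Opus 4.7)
The plan is to argue by soundness: since ap-bi-Kleene algebras form an equational (more precisely, quasi-equational) class, any inequality derivable from their axioms must hold in every model, so it suffices to exhibit one model in which Segerberg's axiom fails.

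First, I would invoke Theorem~\ref{P:amrkamodel}, which tells us that for every set $X$ the structure $(M(X),\cup,\cdot,\|,\emptyset,1_\sigma,1_\pi,a,{}^\ast)$ is an ap-bi-Kleene algebra. Choosing $X=\{a,b,c\}$, we obtain a concrete ap-bi-Kleene algebra in which the difference operation $-$ appearing in Segerberg's axiom is definable, because by Proposition~\ref{P:antidomainba} the set $d(M(X))$ of domain elements forms a Boolean algebra (and subidentities such as $P$ belong to $d(M(X))$, since on subidentities $d$ acts as the identity).

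Second, I would apply Proposition~\ref{P:segerbergcounter}, which exhibits a specific multirelation $R$ and subidentity $P$ over this $X$ witnessing $\langle R^\ast\rangle P \supsetneq P \cup \langle R^\ast\rangle(\langle R\rangle P - P)$. Translated into the algebraic language with $\le$ interpreted as $\subseteq$, this is exactly the failure of Segerberg's diamond axiom in the model.

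Finally, I would conclude by contraposition: if Segerberg's axiom were derivable from the ap-bi-Kleene algebra axioms, every model of those axioms would satisfy it; instantiating in the full multirelational ap-bi-Kleene algebra over $X=\{a,b,c\}$ and specialising to the $R$ and $P$ of Proposition~\ref{P:segerbergcounter} would then yield a contradiction. There is no real obstacle beyond checking that the Boolean difference in the statement of Segerberg's axiom is faithfully modelled via antidomain on the Boolean algebra of domain elements; all the combinatorial work has already been done in the preceding proposition.
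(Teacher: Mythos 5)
Your proposal is correct and follows exactly the paper's intended argument: the corollary is drawn directly from Proposition~\ref{P:segerbergcounter} together with the soundness result of Theorem~\ref{P:amrkamodel}, so derivability would contradict the failure of the axiom in the full multirelational ap-bi-Kleene algebra. Your additional remark about interpreting the Boolean difference via antidomain on the algebra of domain elements is a sensible sanity check but does not change the argument.
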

Obviously, this implies that the axiom is not derivable in ap-Kleene
algebras. However, at least its converse is derivable.
\begin{lemma}\label{P:segerbergconv}
In every ap-Kleene algebra,
\begin{equation*}
p + \langle x^\ast\rangle (\langle x\rangle p - p) \le \langle x^\ast\rangle p.
\end{equation*}
\end{lemma}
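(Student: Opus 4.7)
The plan is to reduce the inequality to Lemma~\ref{P:starvar}(1), namely $p + \langle x^\ast\rangle\langle x\rangle p \le \langle x^\ast\rangle p$, which is available since every ap-Kleene algebra is a dp-Kleene algebra by Proposition~\ref{P:KAdomainantidomain}. The only ingredients needed in addition are monotonicity of diamonds in the second argument and the trivial observation that boolean difference decreases the minuend.

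First I would note that the subtraction symbol stands for boolean difference in the subalgebra $d(K)$, which is a boolean algebra by Proposition~\ref{P:antidomainba}; explicitly, $\langle x\rangle p - p = \langle x\rangle p \cdot a(p)$. Hence $\langle x\rangle p - p \le \langle x\rangle p$ in $d(K)$.

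Next I would invoke isotonicity of $\langle x^\ast\rangle$ in its second argument, which follows from left isotonicity of $\cdot$ in every proto-dioid together with isotonicity of the domain operation (Lemma~\ref{P:protodomprops}(1)). Applying this to the inequality above yields
\begin{equation*}
  \langle x^\ast\rangle(\langle x\rangle p - p) \le \langle x^\ast\rangle \langle x\rangle p.
\end{equation*}
Adding $p$ on both sides and chaining with Lemma~\ref{P:starvar}(1) gives
\begin{equation*}
  p + \langle x^\ast\rangle(\langle x\rangle p - p) \le p + \langle x^\ast\rangle \langle x\rangle p \le \langle x^\ast\rangle p,
\end{equation*}
which is the desired conclusion.

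There is no real obstacle beyond identifying the right auxiliary lemma; the whole argument is three lines once one spots that Lemma~\ref{P:starvar}(1) is already a sharper form of the statement to be proved (with $\langle x\rangle p$ in place of the weaker $\langle x\rangle p - p$), so only monotonicity is needed to bridge the gap.
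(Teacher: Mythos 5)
Your proposal is correct and follows exactly the paper's own argument: the paper also bounds $\langle x^\ast\rangle(\langle x\rangle p - p)$ by $\langle x^\ast\rangle\langle x\rangle p$ via monotonicity and then applies Lemma~\ref{P:starvar}(1). Your version merely spells out the monotonicity step and the reading of $-$ as boolean difference in $d(K)$ a little more explicitly.
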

See Appendix~2 for a proof. Hence this fact is derivable in
ap-bi-Kleene algebras, too.

Segerberg's axiom is usually presented in box form as $p\cdot
[x^\ast](p\to [x]p) \le [x^\ast]p$, where $p\to q = a(p)+q$. By De
Morgan duality, variants of Proposition~\ref{P:segerbergcounter},
Corollary~\ref{P:segerbergcor} and Lemma~\ref{P:segerbergconv} hold in
the box case. In particular, the box variant of Segerberg's axiom is
neither valid in the multirelational model nor derivable in ap-bi-Kleene
algebras.


\section{Conclusion}\label{S:conclusion}

We have defined weak variants of Kleene algebras with domain and
antidomain which capture essential properties of the algebra of
multirelations under union, sequential and concurrent composition and
the sequential Kleene star together with multirelational domain and
antidomain operations. The relationships between the different
algebraic structures defined in this article is summarised in
Figure~\ref{F:structure}. Both dp-bi-Kleene alegebras and ap-bi-Kleene
algebras qualify as concurrent dynamic algebras; their axioms are
listed in Appendix~1.
\begin{figure}[h]
  \centering
 \begin{equation*}
   \def\labelstyle{\normalsize}
   \xymatrix
   @!=.5cm
   { 
     &&&&\text{$\mathsf{apbKA}$}\\
     &&& \text{$\mathsf{apT}$}\ar@{-}[ur]&&\text{$\mathsf{dpbKA}$}\ar@{-}[ul]\\
     &\text{$\mathsf{apKA}$}\ar@{-}[uurrr]  &&&\text{$\mathsf{dpT}$}\ar@{-}[ul]\ar@{-}[ur]\\
     \text{$\mathsf{apD}$}\ar@{-}[ur]\ar@{-}[uurrr] && \text{$\mathsf{dpKA}$}\ar@{-}[ul]\ar@{-}[uurrr]&&&\text{$\mathsf{pT}$}\ar@{-}[ul]\\
     &\text{$\mathsf{dpD}$}\ar@{-}[ul]\ar@{-}[ur]\ar@{-}[uurrr]\\
     &&\text{$\mathsf{pD}$}\ar@{-}[ul]\ar@{-}[uurrr]
   }
\end{equation*} 
\caption{Summary of algebraic subclass relationships. $\mathsf{pD}$
  stands for the class of proto-dioids, $\mathsf{dpD}$ for domain
  proto-dioids, $\mathsf{apD}$ for antidomain proto-dioids,
  $\mathsf{dpKA}$ for dp-Kleene algebras, $\mathsf{apKA}$ for
  ap-Kleene algebras, $\mathsf{pT}$ for proto-trioids, $\mathsf{dpT}$
  for dp-trioids, $\mathsf{apT}$ for ap-trioids, $\mathsf{dpbKA}$ for
  dp-bi-Kleene algebras and $\mathsf{apbKA}$ for ap-bi-Kleene
  algebras.}
  \label{F:structure}
\end{figure}
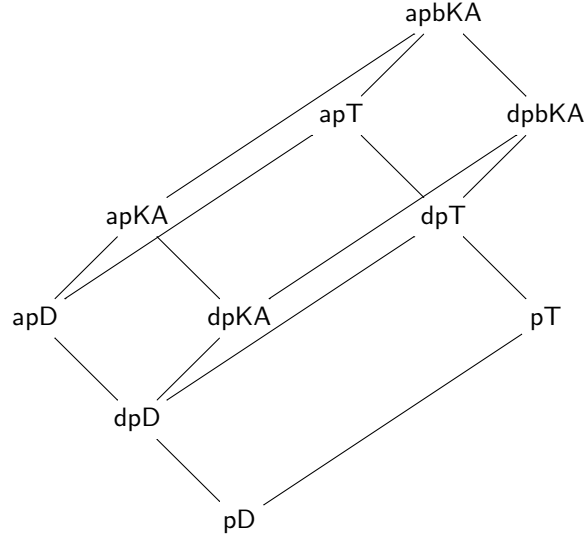
We have derived algebraic counterparts of Peleg's $\CDL$ axioms from
these two algebras. We have also proved their soundness with respect
to the concrete multirelational model.

The algebra of multirelations is, however, much richer than this
article might suggest. First of all, a left interaction law $R\cdot
(S\|T)\subseteq (R\cdot S)\|(R\cdot T)$ complements its dextrous
counterpart. Second, domain is characterised by the inclusion
$1_\sigma\cap R\cdot U \subseteq d(R)$, where $U$ is the universal
multirelation defined in Section~\ref{S:multirelations}, but an
equational definition $d(R)=1_\sigma\cap R\cdot U$ of domain, as in
the relational setting, is impossible. Third, sequentiality and
concurrency also interact via laws such as $1_\pi\cdot R = 1_\pi$ and
in particular $1_\pi\cdot \emptyset = 1_\pi$. In fact, whether a
multirelation $R$ satisfies $R\cdot \emptyset=\emptyset$, $R\cdot
\emptyset\neq \emptyset$, or even $R\cdot\emptyset=R$ depends on
whether or not pairs of the form $(a,\emptyset)$ occur in it. This
situation is similar to that of languages which contain finite and
infinite words. There one can define the finite part $\mathsf{fin}(L)$
and the infinite part $\mathsf{inf}(L)$ of a language $L$ and prove
laws such as $\mathsf{fin}(L)\cdot\emptyset =\emptyset$ and
$\mathsf{inf}(L)\cdot\emptyset = \mathsf{inf}(L)$. Here we can
consider the multirelations $\tau(R)=R\cap 1_\pi$ and
$\overline{\tau}(R)=R-1_\pi$, which satisfy $\tau(R)\cdot
\emptyset=\tau(R)$ and $\overline{\tau}(R)\cdot \emptyset =
\emptyset$, study the sets of these elements, and derive identities
for expressions such as $\tau(R\cdot S)$ or $\overline{\tau}(R\cup
S)$ in analogy to the language case. Elements
$(a,\emptyset)$ can be interpreted as modelling nontermination or
program errors; elements $\tau(R)$ can be seen as terminal elements,
since $\tau(R)\cdot S=\tau(R)$ holds for any multirelation $S$. A
detailed investigation is the aim of a successor paper.

While up-closed multirelations seem unsuitable for concurrency,
another subclass is interesting. Call a multirelation $R$
\emph{union-closed} if for all $a$ and $X\neq \emptyset$ the condition
$X\subseteq \{A\mid (a,A)\in R\}$ implies $(a,\bigcup X)\in R$. If $R$
has only finite internal nondeterminism, that is, for each $a$ there
are only finitely many $A$ with $(a,A)\in R$, then $R$ is union closed
if and only if $R||R\subseteq R$. It turns out that sequential
composition of union-closed multirelations is associative, while, in
contrast to the up-closed case, concurrent composition remains
nontrivial. In the context of concurrency it seems natural to require
that a multirelation can access the union of two separate sets from
some state whenever it can acces them individually. Adapting
concurrent dynamic algebras to union-closed relations is another
promising direction for future work. A further specialisation to
Parikh's game logic based on proto-Kleene algebras with domain and
antidomain seems another feasible restriction.

In conclusion, the results presented in this article lay the
foundation for a thourough algebraic exploration of Peleg's concurrent
dynamic logic with its extensions and variants, Parikh's game logics
and monotone predicate transformer semantics. Algebra has been
instrumental in taming the tedious syntactic manipulations at the
multirelational level in favour of first-order equational
reasoning. More succinct descriptions of the algebra of multirelations
will be given in sucessor papers. A unification of related approaches
to games and concurrency from this basis seems possible. The
integration of more advanced concepts such as communication,
synchronisation, knowledge or incentive constraints remains to be
explored.

\paragraph*{Acknowledgements}
  The authors acknowledge support by the Royal Society and JSPS
  KAKENHI grant number 25330016 for this research. They are grateful
  to Yde Vedema for drawing their attention to concurrent dynamic
  logic, and to Yasuo Kawahara, Koki Nishizawa, Toshinori Takai and
  Norihiro Tsumagari for enlightening discussions. The second author
  would like to thank the Department of Mathematics and Computer
  Science at Kagoshima University, where much of this work has been
  conducted, for its hospitality and the Department of Mathematics at
  Kyushu University for a pleasant short stay and financial support.

\bibliographystyle{plain}
\bibliography{cda}


\newpage

  \appendix

\section*{Appendix 1: Axioms of Concurrent Dynamic Algebras}
\label{A:axioms}

First we list the complete set of proto-trioid axionms.
  \begin{align*}
x+(y+z)&=(x+y)+z\\
x+y&=y+x\\
x+0&=x\\
x+x&=x\\
    1_\sigma \cdot x &= x\\
    x\cdot 1_\sigma &= x\\
x\cdot y + x\cdot z &\le x\cdot (y+z)\\
(x+y)\cdot z &= x\cdot z+ y\cdot z\\
0\cdot x &= 0\\
x\|(y\|z) &= (x\|y)\|z\\
x\|y &= y\|x\\
1_\pi\|x &=x\\
x\|(y+z)&=x\cdot y+x\cdot z\\
0\|x &=0
\end{align*}

Next we list the concurrent dynamics algebra axioms for distributive
lattices and boolean algebras. The left-hand column contains the
axioms of dp-bi-Kleene algebras, the right-hand column those of
ap-bi-Kleene algebras.\\

\begin{minipage}[h][7cm][t]{.45\textwidth}
\begin{align*}
d(x)\cdot (y\cdot z) &= (d(x)\cdot y)\cdot z\\
x\cdot (d(y)\cdot z) &=(x\cdot d(y))\cdot z\\
x\cdot (y\cdot d(z)) &= (x\cdot y)\cdot d(z)\\
x &\le d(x)\cdot x\\
d(x\cdot y) &= d(x\cdot d(y))\\
d(x+y) &=d(x)+d(y)\\
d(x)&\le 1_\sigma\\
d(0)&=0\\
(x\|y)\cdot d(z) &=(x\cdot d(z))\|(y\cdot d(z))\\
d(x\|y)&=d(x)\cdot d(y)\\ 
d(x)\|d(y)&=d(x)\cdot d(y)\\
1_\sigma + x\cdot x^\ast &\le x^\ast\\
d(z)+x\cdot y\le y &\Rightarrow x^\ast\cdot d(z)\le y
  \end{align*}
\end{minipage}
\begin{minipage}[h][7cm][t]{.45\textwidth}
  \begin{align*}
 a(x)\cdot (y\cdot z) &= (a(x)\cdot y)\cdot z\\
x\cdot (a(y)\cdot z) &=(x\cdot a(y))\cdot z\\
x\cdot (y\cdot a(z)) &= (x\cdot y)\cdot a(z)\\ 
a(x)\cdot x &=0\\
a(x\cdot y) &=a(x\cdot a(a(y))\\
a(x)+a(a(x)) &=1_\sigma\\
a(x)\cdot (y+z)&=a(x)\cdot y+a(x)\cdot z\\
(x\|y)\cdot a(z) &=(x\cdot a(z))\|(y\cdot a(z)\\
a(x\|y)&=a(x)+a(y)\\
a(x)\|a(y)&=a(x)\cdot a(y)\\
1_\sigma + x\cdot x^\ast &\le x^\ast\\
a(z)+x\cdot y\le y &\Rightarrow x^\ast\cdot a(z)\le y  
  \end{align*}
\end{minipage}

\vspace{\baselineskip}

\noindent To obtain dp-trioids and ap-trioids, the star axioms must be
dropped. To obtain proto-algebras, the concurrency axioms must be
dropped.

Finally, we show, for ap-bi-Kleene algebras, the definition of domain
from antidomain and those for the diamond and box operators.
  \begin{equation*}
    a(a(x)) = d(x)\qquad
\langle x\rangle y = d(x\cdot y)\qquad
[x]y = a(\langle x\rangle a(y))
  \end{equation*}

\newpage


\section*{Appendix 2: Proofs}
\label{A:proofs}

\vspace{\baselineskip}
\textsc{Proof of Lemma}~\ref{P:seqlaws}
  \begin{enumerate}
  \item The two facts follow directly from the definition of
    sequential composition.
  \item By definition, $(a,A)\not\in \emptyset$ for all $a\in X$ and
    $A\subseteq X$, hence $\emptyset\cdot R=\emptyset$.
\item 
   \begin{align*}
     &(a,A) \in (R\cdot S)\cdot T\\
     & \Leftrightarrow \exists B,C.\ (a,C) \in R \wedge \exists g.\ G_g(C)\subseteq S\wedge B = \bigcup g(C)
     \wedge \exists f.\ G_f(B) \subseteq T \wedge A = \bigcup f(B)\\
     &\Leftrightarrow \exists C.\ (a,C) \in R \wedge \exists g.\ G_g(C)\subseteq S 
     \wedge \exists f.\ G_f(\bigcup g(C))\subseteq T \wedge A = \bigcup_{c\in C}\bigcup_{x \in g(c)}f(x)\\
     &\Leftrightarrow \exists C.\ (a,C) \in R \wedge \exists f, g.\ (\forall c\in C.\ G_g(c)\in S 
     \wedge G_f(g(c)) \subseteq T) \wedge A = \bigcup_{c\in C}\bigcup_{x \in g(c)}f(x)\\
     &\Leftrightarrow \exists C.\ (a,C) \in R \wedge \exists f.\ \forall c\in C.\exists D.\ (c,D)\in S 
     \wedge G_f(D) \subseteq T \wedge A = \bigcup_{c\in C}\bigcup_{d \in D}f(d)\\
&\Rightarrow \exists C.\ (a,C) \in R \wedge \exists h.\ (\forall c\in C.\exists D.\ (c,D)\in S) \\
&\quad \wedge (\forall d\in D.\ (d,h(d,c)) \in T) \wedge A = \bigcup_{c\in C}\bigcup_{d \in D}h(d,c)\\
&\Rightarrow \exists C.\ (a,C) \in R \wedge \exists f,h.\ \forall c\in C.\exists D.\ (c,D)\in S \\
&\quad \wedge \forall d\in D.\ (d,h(d,c)) \in T \wedge f(c) = \bigcup_{d\in D}h(d,c) \wedge A = \bigcup_{c\in C}f(c)\\
&\Rightarrow \exists C.\ (a,C) \in R \wedge \exists f.\ \forall c\in C.\exists D.\ (c,D)\in S \\
&\quad \wedge \exists g.\ G_g(D) \subseteq T\wedge f(c) = \bigcup_{d\in D}g(d)) \wedge A = \bigcup_{c\in C}f(c)\\
&\Leftrightarrow (a,A) \in R\cdot (S\cdot T).
  \end{align*}
\item 
  \begin{align*}
    (a,A) \in (R\cup S)\cdot T &\Leftrightarrow \exists B.\ (a,B)\in R\cup S\wedge \exists f. \ G_f(B)\subseteq T\wedge A =\bigcup f(B)\\
 &\Leftrightarrow (\exists B.\ (a,B)\in R\wedge \exists f. \ G_f(B)\subseteq T\wedge A =\bigcup f(B))\\
 &\quad\vee (\exists B.\ (a,B)\in S\wedge \exists f. \ G_f(B)\subseteq T\wedge A =\bigcup f(B))\\
&\Leftrightarrow (a,A) \in R\cdot T \vee (a,A) \in S\cdot T\\
&\Leftrightarrow (a,A)\in R\cdot T \cup R\cdot T.
  \end{align*}
\item We show that $R\cdot S \subseteq R\cdot (S\cup T)$. The claim
  then follows by symmetry and properties of least upper bounds.
  \begin{align*}
    (a,A) \in R\cdot S & \Leftrightarrow \exists B.\ (a,B)\in R\wedge \exists f.\ G_f(B)\subseteq S \wedge A=\bigcup f(B)\\
& \Rightarrow \exists B.\ (a,B)\in R\wedge \exists f.\ G_f(B)\subseteq S\cup T \wedge A=\bigcup f(B)\\
&\Leftrightarrow (a,A)\in R\cdot (S\cup T).
  \end{align*}
\end{enumerate}
$\Box$

\vspace{\baselineskip}
\textsc{Proof of Lemma}~\ref{P:conclaws}
\begin{enumerate}
\item
  \begin{align*}
    (a,A) \in (R\|S)\|T &\Leftrightarrow \exists B,C,D.\ A=B\cup C\cup D\wedge (a,B)\in R\wedge (a,C)\in S\wedge (a,D)\in T\\ 
&\Leftrightarrow (a,A)\in R\|(S\|T).
  \end{align*}
\item 
$    (a,A)\in R\|S \Leftrightarrow \exists B,C.\ A=B\cup C\wedge (a,B)\in R\wedge (a,C)\in S\Leftrightarrow (a,A)\in S\|R$.
\item Immediate from the definition of parallel
  composition and $1_\pi$.
\item Immediate from the definition of parallel composition.
  \item 
    \begin{align*}
      (a,A)&\in R\|(S\cup T)\\
&\Leftrightarrow \exists B,C.\ A=B\cup C\wedge (a,B)\in R\wedge ((a,C)\in S\vee (a,C)\in T)\\
&\Leftrightarrow\exists B, C.\ A=B\cup C\wedge  ((a,B)\in R\wedge ((a,C)\in S) \vee ((a,B)\in R\wedge ((a,C)\in T)\\
&\Leftrightarrow (a,A)\in R\|S\cup R\|T.
    \end{align*}
  \end{enumerate}
$\Box$

\vspace{\baselineskip}
\textsc{Proof of Lemma}~\ref{P:interaction}\\
    Since $G_f(A\cup B)\subseteq R\Leftrightarrow G_f(A)\subseteq
    R\wedge G_f(B)\subseteq R$, it follows that
  \begin{align*}
    (a,A) \in (R\|S)\cdot T &\Leftrightarrow \exists B,C.\  (a,B\cup C) \in R\|S \wedge \exists f.\ G_f(B\cup C)\subseteq T \wedge A = \bigcup f(B\cup C)\\
 &\Rightarrow \exists X,Y.\  A = X\cup Y\\
&\quad\wedge (\exists B.\ (a,B) \in R\wedge \exists f.\ G_f(B) \subseteq T \wedge X = \bigcup f(B))\\
&\quad\wedge (\exists C.\ (a,C) \in S\wedge \exists f.\ G_f(C) \subseteq T \wedge Y = \bigcup f(C))\\
&\Leftrightarrow (a,A) \in (R\cdot T)\|(S\cdot T).
  \end{align*}
$\Box$

\vspace{\baselineskip}
\textsc{Proof of Lemma}~\ref{P:subidinout}
   \begin{enumerate}
   \item Suppose $(a,A) \in R\cdot P$. Then there exists a set $B$
     such that $(a,B)\in R$ and, for all $b\in B$, $G_\iota(b)\in P$,
     and $A = \bigcup_{b\in B} \{b\}=B$. So $(a,A) \in R$ and
     $G_\iota(A)\subseteq P$.

     Suppose $(a,A)\in R$ and $G_\iota(a)\in P$ for all $a\in A$. Then
     $(a,A)\in R\cdot P$ by definition of sequential composition with
     $f=\iota$.
   \item Suppose $(a,A)\in P\cdot R$. Then $G_\iota(a) \in P$ and
     $(a,A) \in R$ by definition of sequential composition.  Suppose
     that $G_\iota(a)\in P$ and $(a,A) \in R$. Then $(a,A)\in P\cdot
     R$, using $f=\lambda x. A$.
   \end{enumerate}
$\Box$

\vspace{\baselineskip}
\textsc{Proof of Lemma}~\ref{P:subidlaws}
\begin{enumerate}
\item Let $R\subseteq 1_\sigma$. Then
      \begin{align*}
        (a,A) \in (R\cdot S)\cdot T
& \Leftrightarrow \exists B.\ G_\iota(a) \in R\ \wedge (a,B) \in S
 \wedge \exists f.\ G_f(B)\subseteq T \wedge A = \bigcup f(B)\\
&\Leftrightarrow G_\iota(a) \in R\ \wedge \exists B. (a,B) \in S
 \wedge \exists f.\ G_f(B)\subseteq T \wedge A = \bigcup f(B)\\
&\Leftrightarrow G_\iota(a)\in R \wedge (a,A)\in S\cdot T\\
&\Leftrightarrow (a,A)\in R\cdot (S\cdot T).
      \end{align*}
    Let $S\subseteq 1_\sigma$. Then
      \begin{align*}
        (a,A) &\in (R\cdot S)\cdot T\\
        &\Leftrightarrow \exists B.\ (a,B)\in R\wedge G_\iota(B) \subseteq S\wedge \exists f.\ G_f(B)\in T \wedge A = \bigcup f(B)\\
        &\Leftrightarrow \exists B.\ (a,B)\in R\wedge \exists f. \ G_\iota(B) \subseteq S\wedge G_f(B)\subseteq T \wedge A = \bigcup f(B)\\
        &\Leftrightarrow \exists B.\ (a,B)\in R\wedge \exists f. \ G_f(B) \subseteq S\cdot T\wedge A = \bigcup f(B)\\
        &\Leftrightarrow (a,A) \in R\cdot (S\cdot T).
      \end{align*}
     Let $T\subseteq 1_\sigma$. Then
      \begin{align*}
        (a,A) &\in (R\cdot S)\cdot T\\
 &\Leftrightarrow (a,A)\in R\cdot S\wedge G_\iota(A) \subseteq T\\
&\Leftrightarrow \exists B.\ (a,B)\in R\wedge \exists f.\ G_f(B) \subseteq S\wedge A = \bigcup f(B) \wedge G_\iota(A) \subseteq T\\
&\Leftrightarrow \exists B.\ (a,B)\in R\wedge \exists f.\ G_f(B) \subseteq S\wedge G_\iota(A) \subseteq T\wedge A = \bigcup f(B) \\
&\Leftrightarrow \exists B.\ (a,B)\in R\wedge \exists f.\ G_f(B) \subseteq S\wedge G_\iota(f(b)) \subseteq T \wedge A = \bigcup f(B) \\
&\Leftrightarrow \exists B.\ (a,B)\in R\wedge \exists f.\ G_f(B) \subseteq S\cdot T\wedge A = \bigcup f(B) \\
&\Leftrightarrow (a,A) \in R\cdot (S\cdot T).
      \end{align*}
  \item Let $P\subseteq 1_\sigma$.
  \begin{align*}
    (a,A)&\in (R\|S)\cdot P\\
    &\Leftrightarrow  \exists B,C.\ A=B\cup C\wedge (a,B)\in R\wedge (a,C)\in S \wedge G_\iota(A\cup B)\subseteq P\\
    &\Leftrightarrow  \exists B,C.\ A=B\cup C\wedge (a,B) \in R\wedge (a,C)\in S \wedge G_\iota(A)\subseteq P\wedge G_\iota(B)\subseteq P\\
    &\Leftrightarrow \exists B,C.\ A= B\cup C\wedge (a,B)\in R\cdot P\wedge (a,C)\in S\cdot P\\
    &\Leftrightarrow (a,A)\in (R\cdot P)\|(S\cdot P).
  \end{align*}
\item Let again $P\subseteq 1_\sigma$.
  \begin{align*}
    (a,A) \in P \cdot (R\cup S) &\Leftrightarrow G_\iota(a) \in P \wedge ((a,A)\in R \vee (a,A) \in S)\\
&\Leftrightarrow (G_\iota(a) \in P \wedge (a,A)\in R) \vee (G_\iota(a) \in P \wedge (a,A) \in S)\\
&\Leftrightarrow (a,A) \in P\cdot R\vee (a,A)\in P\cdot S\\
&\Leftrightarrow (a,A)\in P\cdot R\cup P\cdot S.
  \end{align*}
\end{enumerate}
$\Box$

\vspace{\baselineskip}
\textsc{Proof of Lemma}~\ref{P:domprops}
  \begin{enumerate}
   \item Obivous.
  \item $(a,A)\in d(R)\cdot R \Leftrightarrow G_\iota(a) \in d(R) \wedge
   (a,A) \in R\Leftrightarrow (a,A) \in R$.
   \item 
\begin{align*}
G_\iota(a) \in d(R\cup S) &\Leftrightarrow \exists B.\ (a,B) \in R \vee (a,B) \in S\\
&\Leftrightarrow \exists B.\ (a,B) \in R\vee \exists B. (a,B) \in S\\
&\Leftrightarrow G_\iota(a) \in d(R) \vee G_\iota(a)\in d(S)\\
&\Leftrightarrow G_\iota(a) \in d(R)\cup d(S).
\end{align*}

   \item Obvious from the definition of domain.
   \item 
     \begin{align*}
       G_\iota(a) \in d(R\cdot S) &\Leftrightarrow \exists B.\ (a,B)\in R\cdot S\\
&\Leftrightarrow \exists B,C.\ (a,C)\in R\wedge \exists f.\ G_f(C) \subseteq S\wedge B=\bigcup f(C)\\
&\Leftrightarrow \exists C.\ (a,C)\in R\wedge \exists f.\ G_f(C) \subseteq S\\
&\Leftrightarrow \exists C.\ (a,C)\in R\wedge G_\iota(C) \subseteq d(S)\\
&\Leftrightarrow \exists C.\ (a,C) \in R\cdot d(S)\\
&\Leftrightarrow G_\iota(a)\in d(R\cdot d(S)).
     \end{align*}
   \item
     \begin{align*}
       G_\iota(a) \in d(R\|S) &\Leftrightarrow \exists B.\ (a,B) \in R||S\\
&\Leftrightarrow \exists C,D.\ (a,C)\in R \wedge (a,D) \in S\\
&\Leftrightarrow \exists C.\ (a,C)\in R \wedge \exists D.\ (a,D) \in S\\
&\Leftrightarrow G_\iota(a)\in d(R) \wedge G_\iota(a) \in d(S)\\
&\Leftrightarrow G_\iota(a)\in d(R)\cap d(S).
     \end{align*}
\item Obvious.
   \end{enumerate}
$\Box$

\vspace{\baselineskip}
\textsc{Proof of Lemma}~\ref{P:antidomprops1}
  \begin{enumerate}
  \item Obviously, $(a,A) \in a(R)$ iff $A=\{a\}$ and $(a,A)\not\in
    d(R)$, which holds iff $(a,A) \in 1_\sigma$ and $(a,A)\not\in R$.
  \item $G_\iota(a) \in a(a(R)) \Leftrightarrow \neg\neg \exists A. (a,A) \in R \Leftrightarrow \exists A, (a,A) \in R \Leftrightarrow G_\iota(a)\in d(R)$.
  \item 
$G_\iota(a) \in d(a(R) ) \Leftrightarrow G_\iota(a) \in a(a(a(R))) \Leftrightarrow \neg\neg\neg \exists A.\ (a,A) \in R
 \Leftrightarrow G_\iota(a)\in a(R)$.
  \end{enumerate}
$\Box$

\vspace{\baselineskip}
\textsc{Proof of Lemma}~\ref{P:antidomprops2}
  \begin{enumerate}
  \item 
$      (a,A) \in a(R)\cdot R \Leftrightarrow G_\iota(a) \in a(R) \wedge (a,A)\in S\Leftrightarrow \neg\exists B. (a,B)\in R \wedge (a,A) \in R$
which is false.
\item $a(R\cdot S) = 1_\sigma \cap -d(R\cdot S) = 1_\sigma\cap -d(R\cdot d(S)) =
  a(R\cdot d(S))$.
\item $a(R)\cup d(R) = (1_\sigma\cap -d(R))\cup d(R)= 1_\sigma \cap (-d(R)\cup d(R)) = 1_\sigma \cap U = 1_\sigma$.
\item 
\begin{align*}
a(R\cup S) &= 1_\sigma\cap -d(R\cup S)\\
& = 1_\sigma\cap -(d(R)\cup d(S))\\
& = 1_\sigma\cap -d(R)\cap -d(S)\\
& = (1_\sigma\cap -d(R)\cap (1_\sigma\cap -d(S))\\
& = a(R)\cap a(S)\\
& = a(R)\cdot a(S).
\end{align*}
\item 
\begin{align*}
a(R\|S) &= 1_\sigma\cap -d(R\|S)\\
& = 1_\sigma\cap -(d(R)\cap d(S))\\
& = (1_\sigma \cap -d(R))\cup (1_\sigma\cap -d(S))\\
& = a(R)\cup a(S).
\end{align*}  
\item
    $a(R)\|a(S) = d(a(R))\|d(a(S)) = d(a(R))\cdot d(a(S)) = a(R)\cdot a(S)$.
  \end{enumerate}
$\Box$

\vspace{\baselineskip}
\textsc{Proof of Lemma}~\ref{P:protodomprops}
  \begin{enumerate}
  \item Immediate from additivity of domain.
  \item $x\le d(x)\cdot x$ is an axiom; $d(x)\cdot x \le x$ holds
    since $d(x)\le 1_\sigma$.
  \item $d(x\cdot y) = d(x\cdot d(y)) \le d(x\cdot 1_\sigma) = d(x)$.
  \item Let $x\le 1_\sigma$. Then $x=d(x)\cdot x\le d(x)\cdot 1_\sigma = d(x)$.
\item
  \begin{align*}
    d(d(x)\cdot y) &= d(d(d(x)\cdot y))\cdot d(d(x)\cdot y)\\
& = d(d(x)\cdot y)\cdot d(d(x)\cdot y)\\
& = d(d(x)\cdot d(y))\cdot d(d(x)\cdot y)\\
& \le d(d(x))\cdot d(y)\\
& = d(x)\cdot d(y), 
  \end{align*}
  using (1), (2) and (3). For the converse direction, $d(x)\cdot
  d(y)\le 1_\sigma$, and therefore $d(x)\cdot d(y)\le d(d(x)\cdot
  d(y))=d(d(x)\cdot y)$ by (4).
   \end{enumerate}
$\Box$

\vspace{\baselineskip}
\textsc{Proof of Lemma}~\ref{P:domprops2}\\
  We consider only the first property.  Let $x\le d(y)\cdot x$. Then
  \begin{equation*}
d(x)\le d(d(y)\cdot x)=d(y)\cdot d(x)\le d(x).
\end{equation*}

Let $d(x)\le d(y)$. Then $x=d(x)\cdot x\le d(y)\cdot x$.\quad $\Box$

\vspace{\baselineskip}
\textsc{Proof of Lemma}~\ref{P:dmrdomprops}
  \begin{enumerate}
  \item $1_\sigma= d(1_\sigma)=d(1_\sigma\|1_\pi) = d(1_\sigma)\cdot d(1_\pi)\le d(1_\pi)$. The
    converse direction is obvious.
  \item $d(d(x)\|d(y))=d(d(x)\cdot d(y))=d(x)\cdot d(y)=d(x)\|d(y)$ by
    meet closure.
  \item $d(x)\|d(x)=d(x)\cdot d(x)=d(x)$.
  \end{enumerate}
$\Box$

\vspace{\baselineskip}
\textsc{Proof of Lemma}~\ref{P:demodalisation}\\
 Suppose $\langle x\rangle p \le d(q)$, that is, $d(x\cdot p)\le
  d(q)$. Then, by Lemma~\ref{P:protodomprops}(2),
  \begin{equation*}
    x\cdot d(p)=d(x\cdot p)\cdot x\cdot d(p)\le d(q)\cdot x\cdot d(p) \le d(q)\cdot x.
  \end{equation*}
  For the converse implication, suppose $x\cdot d(p)\le d(q)\cdot
  x$. Then $(x\cdot d(p))\cdot d(p)\le (d(q)\cdot x)\cdot d(p)$ and
  therefore $x\cdot d(p)\le d(q)\cdot (x\cdot d(p))$ by domain
  associativity and idempotency. Hence 
\begin{equation*}
\langle x\rangle p = d(x\cdot
  p) \le d(d(q)\cdot x \cdot d(p)) = d(q)\cdot d(x\cdot d(p))\le d(q)=q
\end{equation*}
by isotonicity of domain, domain export and properties of meet.\quad $\Box$

\vspace{\baselineskip}
\textsc{Proof of Lemma}~\ref{P:fixpointprops}
  \begin{enumerate}
  \item The functions $\lambda X.\ R\cdot X$ and $\lambda X.\ S\cup X$
    are isotone for all $R$ and $S$, hence so are their compositions.
  \item Every ring of sets forms a complete lattice.
  \item This follows from (1) and (2) by standard fixpoint theory
    (Knaster-Tarski Theorem).
  \end{enumerate}
$\Box$

\vspace{\baselineskip}
\textsc{Proof of Lemma}~\ref{P:simulation}\\
  Suppose $x\cdot p\le p\cdot y$. For $x^\ast p\le p\cdot y^\ast$, it
  suffices to show that $p+x\cdot (p\cdot y^\ast) \le p\cdot y^\ast$
  by star induction. First, $p \le p\cdot y^\ast$ by left isotonicity
  of multiplication and star unfold. Moreover, by the assumption and
  domain associativity
\begin{equation*}
  x\cdot (p\cdot y^\ast) = (x\cdot p)\cdot y^\ast \le (p\cdot y)\cdot y^\ast = p\cdot (y\cdot y^\ast) \le p\cdot y^\ast.
\end{equation*}
$\Box$

\vspace{\baselineskip}
\textsc{Proof of Lemma}~\ref{P:starvar}
\begin{enumerate}
\item Obviously, $p\le \langle x^\ast\rangle p$ by the left unfold
  law. For $\langle x^\ast\rangle\langle x\rangle p \le \langle
  x^\ast\rangle p$ it suffices, by star induction, to show that
  $\langle x\rangle p \le \langle x^\ast\rangle p$ and $\langle
  x\rangle\langle x^\ast\rangle p \le \langle x^\ast\rangle p$.  The
  first inequality follows from $\langle 1_\sigma\rangle p \le \langle
  x^\ast\rangle p$ and $\langle x\rangle p = \langle x\rangle\langle
  1_\sigma\rangle p$ by isotonicity. The second one holds by left star
  unfold.
\item  Let $p\le q$ and $\langle x\rangle q\le q$. Hence $\langle
  x^\ast\rangle q\le q$ by Proposition~\ref{P:cdainduct} and the claim
  follows by domain isotoniticy.
\end{enumerate}
$\Box$

\vspace{\baselineskip}
\textsc{Proof of Lemma}~\ref{P:antidomprops}
Note that $d$ is an appbreviation of $a\circ a$.
  \begin{enumerate}
  \item Obvious from the third antidomain axiom.
  \item 
$a(x) = (a(x)+d(x))\cdot a(x)
=a(x)\cdot a(x)+a(a(x))\cdot a(x)
=a(x)\cdot a(x)+0
 =a(x)\cdot a(x)$.
  \item This holds since $a(1_\sigma)=0$ and $1_\sigma\cdot x = 0$ implies
    $x=0$.
  \item Let $a(x)\le a(y)$. Then $a(x)\cdot y \le a(y)\cdot y = 0$.

For the converse direction,
\begin{equation*}
  a(x)\cdot y = 0 \Leftrightarrow a(a(x)\cdot y) = 1_\sigma\Leftrightarrow a(a(x)\cdot d(y)) = 1_\sigma\Leftrightarrow a(x)\cdot d(y)=0.
\end{equation*}
and therefore
\begin{equation*}
  a(x)=a(x)\cdot (d(y)+a(y))=a(x)\cdot d(y)+a(x)\cdot a(y)= a(x)\cdot a(y)\le a(y).
\end{equation*}
\item $a(y)\cdot x \le a(y)\cdot y=0$, so $a(y)\le a(x)$ by (3).
\item 
  \begin{equation*}
    a(x)\cdot a(y)\cdot (x+y) = a(x)\cdot a(y)\cdot x+ a(x)\cdot a(y)\cdot y \le a(x)\cdot x+a(y)\cdot y=0.
  \end{equation*}
Moreover, by (4),
\begin{align*}
  a(x)\cdot a(y) \cdot (x+y)= 0 &\Leftrightarrow a(a(x)\cdot a(y) \cdot (x+y))= 1_\sigma\\
&\Leftrightarrow a(a(x)\cdot a(y) \cdot d(x+y))= 1_\sigma\\
&\Leftrightarrow a(x)\cdot a(y) \cdot d(x+y)= 0.
\end{align*}

\item 
 $    a(x+y)\le a(x)$ and $a(x+y)\le a(y)$ by (5), so 
\begin{equation*}
a(x+y)=a(x+y)\cdot a(x+y)\le a(x)\cdot a(y).
\end{equation*}
 For the converse direction, by (6),
  \begin{align*}
    a(x)\cdot a(y)
& = a(x)\cdot a(y)\cdot a(x+y)+a(x)\cdot a(y)\cdot d(x+y)\\
&=a(x)\cdot a(y)\cdot a(x+y) \le a(x+y).
  \end{align*}
\item First $a(y)\le a (a(x)\cdot y)$ and $d(x)\le a (a(x)\cdot y)$ by
  antitonicity, so 
\begin{equation*}
d(x) + a(y) \le a(a(x)\cdot y)
\end{equation*}
by properties of least upper bounds.

  For the converse direction, we have $a(a(x)\cdot y)\cdot a(x)\cdot
  d(y)=0$. Therefore,
\begin{align*}
  a(a(x)\cdot y) & = a(a(x)\cdot y)\cdot d(y)+a(a(x)\cdot y)\cdot a(y)\\
&\le a(a(x)\cdot y)\cdot d(y)+a(y)\\
&= a(a(x)\cdot y)\cdot a(x)\cdot d(y)+a(a(x)\cdot y)\cdot d(x)\cdot d(y)+a(y)\\
&= a(a(x)\cdot y)\cdot d(x)\cdot d(y)+a(y)\\
&\le d(x)+a(y).
\end{align*}
  \end{enumerate}
$\Box$

\vspace{\baselineskip}
\textsc{Proof of Proposition}~\ref{P:antidomaindomain}\\
We verify the domain axioms in the setting of ap-dioids.
\begin{itemize}
\item The associativity laws
  \begin{align*}
    d(x)\cdot (y\cdot z)&=(d(x)\cdot y)\cdot z,\qquad x\cdot (d(y)\cdot z)\\
&=(x\cdot d(y))\cdot z,\qquad x\cdot (y\cdot d(z))\\
&= (x\cdot y)\cdot d(z)
  \end{align*}
  are immediate from antidomain associativity.

\item $d(x)\le 1_\sigma$ is immediate from the complementation axiom.

\item $d(x)\cdot x = x$ holds because $x=(d(x)+a(x))\cdot x =
  d(x)\cdot x + 0$ by the complementation and left annihilation axiom.

\item $d(x\cdot y)=d(x\cdot d(y))$ is immediate from antidomain
  locality.

\item $d(0)=0$ holds because $a(0)=1_\sigma$ and $a(1_\sigma)=0$.

\item $d(x+y)=d(x)+d(y)$ holds because, by antidomain multiplicativity
  and export,
\begin{equation*}
d(x+y)=a(a(x+y))=a(a(x)\cdot a(y))=d(x)+a(a(y))=d(x)+d(y).
\end{equation*}
\end{itemize}
$\Box$

\vspace{\baselineskip}
\textsc{Proof of Proposition}~\ref{P:amrdmr}\\
Every ap-dioid is a dp-dioid by
Proposition~\ref{P:antidomaindomain}. We verify the remaining axioms
for parallel composition.
  \begin{itemize}
  \item The domain interaction axiom $(x\cdot d(z))\| (y\cdot d(z))
    = (x\|y)\cdot d(z)$ follows immediately from the antidomain
    interaction axiom.
  \item $d(x||y)=d(x)\cdot d(y)$ holds because
    \begin{equation*}
      d(x\|y)=a(a(x\|y))= a(a(x)+a(y))=a(a(x))\cdot a(a(y))=d(x)\cdot d(y),
    \end{equation*}
    using the De Morgan law for $a$ and the first antidomain
    concurrency axiom.
  \item $d(x)\|d(y)=d(x)\cdot d(y)$ is immediate from the
   second antidomain concurrency axiom.
  \end{itemize}
$\Box$

\vspace{\baselineskip}
\textsc{Proof of Lemma}~\ref{P:chaincompletecounter}\\
  Let $R=\{(n,\mathbb{N})\mid n\in\mathbb{N}\}$ and $S_i=\{(n,\{m\})\mid n\in\mathbb{N}\wedge 0\le m\le i\}$. Thus clearly $S_i \subset
  S_j$ whenever $i<j$. Moreover,
  \begin{align*}
    (n,A) \in R\cdot \bigcup_{i\in\mathbb{N}}S_i &\Leftrightarrow (n,\mathbb{N}) \in R \wedge \exists f.\ G_f(\mathbb{N}) \subseteq \bigcup_{i\in\mathbb{N}}S_i \wedge A = \bigcup_{n\in\mathbb{N}} f(n)\\
&\Leftrightarrow (n,\mathbb{N}) \in R \wedge \exists m\in\mathbb{N}.\ (\forall n\in\mathbb{N}.\ (n,\{m\}) \in \bigcup_{i\in\mathbb{N}}S_i) \wedge A = \bigcup_{n\in\mathbb{N}} \{n\}\\
&\Leftrightarrow (n,\mathbb{N}) \in R \wedge \exists m\in\mathbb{N}.\ (\forall n\in\mathbb{N}.\ (n,\{m\}) \in \bigcup_{i\in\mathbb{N}}S_i) \wedge A = \mathbb{N}
  \end{align*}
  and therefore $(n,\mathbb{N}) \in F_R(\bigcup_{i\in\mathbb{N}}R_i)$
  for all $(n,\mathbb{N})$. However, 
  \begin{align*}
    (n,A) \in R\cdot S_i \Leftrightarrow (n,\mathbb{N}) \in R \wedge \exists m\le i. (\forall n\in\mathbb{N}.(n,\{m\}) \in R_i \wedge A=\bigcup_{0\le k\le i}\{k\},
  \end{align*}
hence no $F_R(R_i)$ contains
  $(n,\mathbb{N})$ for any $n$ and therefore also not the union
  $\bigcup_{i\in\mathbb{N}}F_R(R_i)$.\quad $\Box$

\vspace{\baselineskip}
\textsc{Proof of Lemma}~\ref{P:chaincont}\\
  Suppose a family $\{S_i\mid i\in\mathbb{N}\}$ such that $S_i\subset
  S_j$ whenever $i<j$. We must show that
  $F_R(\bigcup_{i\in\mathbb{N}}S_i) \subseteq \bigcup_{i\in\mathbb{N}}
  F_R(S_i)$. So suppose $(a,A) \in
  F_R(\bigcup_{i\in\mathbb{N}}S_i)$. If $(a,A) \in 1_\sigma$, then $(a,A)
  \in \bigcup_{i\in\mathbb{N}} F_R(S_i)$.

  Otherwise, if $(a,A) \in R\cdot \bigcup_{i\in\mathbb{N}}R_i$, then
  there is a finite set $B=\{b_1,\dots b_k\}$ and there are sets
  $A_1,\dots,A_k$ such that $(a,B)\in R$, all $(b_i,A_i) \in
  \bigcup_{i\in\mathbb{N}}R_i$ and
  $A=\bigcup_{i\in\mathbb{N}}A_i$. Hence for all $1\le i\le k$ there
  exists a $l_i$ such that $(b_i,A_i)\in S_{l_i}$. Because of the
  ascending chain condition there exists a maximal $S_m$ such that all
  $(b_i,A_i)\in S_m$. Then $(a,A) \in F_R(S_m)$ and finally also
  $(a,A) \in \bigcup_{i\in\mathbb{B}}F_R(S_i)$.\quad $\Box$

\vspace{\baselineskip}
\textsc{Proof of Lemma}~\ref{P:iterzeroone}
\begin{enumerate}
\item  In the base case, $F_R^0(\emptyset)=\emptyset = R^{(0)}$. In the induction step,
  \begin{equation*}
    F_R^{(n+1)}(\emptyset) = F_R(F_R^n(\emptyset)) = 1_\sigma\cup R\cdot R^{(n)} = R^{(n+1)}.
  \end{equation*}
Finally, $F_R^\ast(\emptyset) = \bigcup_{n\in\mathbb{N}} F_R^n(\emptyset) = \bigcup_{n\in\mathbb{N}} R^{(n)} = R^{(\ast)}$.
\item Immediate from (1).
\end{enumerate}
 $\Box$

\vspace{\baselineskip}
\textsc{Proof of Lemma}~\ref{P:segerbergconv}\\
$p + \langle x^\ast\rangle (\langle x\rangle p - p) \le p +\langle x^\ast\rangle \langle x \rangle p \le \langle x^\ast \rangle p$, by Lemma~\ref{P:starvar}(1).\quad $\Box$


\section*{Appendix 3: Proof Automation with Isabelle/HOL}
\label{A:isabelle}

Some of the proofs at the multirelational level in this article are
technically tedious, in partiular those using second-order
Skolemisation. Reasoning algebraically about domain and antidomain in
the absence of associativity of sequential composition is intricate
for different reasons. We have therefore formalised the mathematical
structures used in this article and verified many of our proofs with the
interactive proof assistant Isabelle/HOL~\cite{NipkowPW02}. In
particular, the complete technical development in this article from
multirelations to star-free concurrent dynamic algebras and the
complete algebraic layer have been formally verified. Finally,
Isabelle's built-in counterexample generators Quickcheck and Nitpick
have helped in finding some counterexamples.

We now list in detail the facts which have and have not been formally
verified.
\begin{description}
\item[Section~\ref{S:mulproperties}] We have verified
  Lemma~\ref{P:seqlaws}, except for part (3), which is not needed for
  our results, the isotonicity properties of sequential composition,
  Lemma~\ref{P:conclaws}, isotonicity of concurrent composition and
  Lemma~\ref{P:interaction}. Isabelle also provided the
  counterexamples in Lemma~\ref{P:counterexamples}.
\item[Section~\ref{S:subidlaws}] All statements
  (Lemma~\ref{P:subidinout} and~\ref{P:subidlaws}) have been
  verified. The subalgebra of subidentities has not been formalised.
\item[Section~\ref{S:multireldom}] All statements, Lemma~\ref{P:domprops} to
  Corollary~\ref{P:domassocinter}, have been verified.
\item[Section~\ref{S:axioms}] We have verified irredundancy of the
  domain and antidomain axiom sets of domain and antidomain
  proto-dioids and proto-trioids.  We have not explicitly formalised
  Theorem~\ref{P:dmramrmodel}, but all facts needed in the proof have
  been verified.
\item[Section~\ref{S:dmrdioids}] Lemma~\ref{P:domretraction} has been
  verified, but not Proposition~\ref{P:retractionlemma}, which is a
  well known consequence. Lemma~\ref{P:protodomprops} and the
  individual equational proof steps for Proposition~\ref{P:domaindl}
  have been verified; the precise statement of
  Proposition~\ref{P:domaindl} has not been
  formalised. Lemma~\ref{P:domprops2} and Lemma~\ref{P:dmrdomprops}
  have been verified. The remaining facts in this section
  (Proposition~\ref{P:compprop} to Theorem~\ref{P:greatestba}) have
  not been verified.
\item[Section~\ref{S:diaaxioms}] All proofs and counterexamples,
  Lemma~\ref{P:cdlaxioms1} to~\ref{P:diacounter}, have been verified.
\item[Section~\ref{S:star}] Lemma~\ref{P:fixpointprops} to
  Theorem~\ref{P:dmrkamodel} have not been verified; formalising the
  underlying concepts seems excessive relative to the moderate
  difficulty of proofs. Lemma~\ref{P:cdaunfold} to
  Lemma~\ref{P:starvar} have been verified. Theorem~\ref{P:cda}, which
  combines these results, as not been formalised as such.
\item[Section~\ref{S:amrdioids}] Lemma~\ref{P:antidomprops} has been
  verified. All the equational proof steps for
  Proposition~\ref{P:antidomaindomain},
  Proposition~\ref{P:antidomainba} and
  Proposition~\ref{P:KAdomainantidomain} have been verified, but the
  individual statements have not been
  formalised. Proposition~\ref{P:antidomainba} has not been verified.
  Theorem~\ref{P:dmrkamodel} has not been verified, because the star
  in the multirelational model has not been formalised.
\item[Section~\ref{S:boxaxioms}] Lemma~\ref{P:antidomaindiacda} and
  Proposition~\ref{P:antidomainboxcda} have been
  verified. Theorem~\ref{P:cdaboxdiamond} has not been formalised as,
  but individual proof steps have been
  verified. Lemma~\ref{P:boxcounter} has not been verified because it
  holds by duality between box and diamonds.
\item[Section~\ref{S:finiteiteration}] No results have been verified.
\item[Section~\ref{S:segerberg}] No results have been verified.
\end{description}

As mentioned in the Introduction, the complete Isabelle development
with all proofs listed above can be found online.


\end{document}